\theoremstyle{definition} \newtheorem{example}{Example}
\theoremstyle{definition} 
\theoremstyle{definition} \newtheorem{corollary}{Corollary}
\theoremstyle{definition} 
\theoremstyle{definition} 
\theoremstyle{definition} 
\theoremstyle{definition} 
\theoremstyle{definition} \newtheorem{lemma}{Lemma}
\theoremstyle{definition} \newtheorem{theorem}{Theorem}
\theoremstyle{definition} \newtheorem{assumption}{Assumption}
\theoremstyle{definition} 
\theoremstyle{definition} 
\theoremstyle{definition} 
\theoremstyle{definition}\newtheorem{proposition}{Proposition}
\theoremstyle{definition} 
\theoremstyle{definition} 
\theoremstyle{definition} 
\theoremstyle{definition} 
\theoremstyle{definition} 
\theoremstyle{definition} 
\theoremstyle{definition} 
\newcommand{\Ex}{\mathbb{E}}
\renewcommand{\Pr}{\mathbb{P}}
\DeclarePairedDelimiterX{\inp}[2]{\langle}{\rangle}{#1, #2}
\titleformat{\subsection}[runin]
    {\normalfont\bfseries}
    {\thesubsection}
    {0.5em}
    {\addperiod}
    []
\newcommand{\addperiod}[1]{#1.}
\title{Inertial Coordination Games}
\author{\makebox[.25\linewidth]{{Andrew Koh}\thanks{MIT Department of Economics; email: \protect\texttt{ajkoh@mit.edu}}}\\{MIT} 
\and 
\makebox[.25\linewidth]{Ricky Li\thanks{MIT Department of Economics; email: \protect\texttt{rickyli@mit.edu}}} \\ {MIT} 
\and \makebox[.25\linewidth]{Kei Uzui\thanks{MIT Department of Economics; email: \protect\texttt{kuzui@mit.edu} \\~\\
We are especially grateful to Drew Fudenberg and Stephen Morris for insightful suggestions. We also thank Isaiah Andrews, Ricardo Caballero, Roberto Corrao, Tuval Danenberg, Daniel Luo, Parag Pathak, Harry Pei, David Romer, Satoru Takahashi, Iv\'an Werning, Alex Wolitzky, Muhamet Yildiz, participants at MIT Theory and Macro Lunches, and anonymous referees for excellent comments. An extended abstract of this paper appeared in the Proceedings of the 26th ACM Conference on Economics and Computation (EC'25).
}} \\
{MIT}}
\date{This version: August 12, 2025}
\begin{document}

\maketitle
\thispagestyle{empty}

\begin{abstract}
We analyze \textit{inertial coordination games}: dynamic coordination games with an endogenously changing state that depends on (i) a persistent fundamental players privately learn about over time; and (ii) past play. The speed of learning determines long-run equilibrium dynamics: the risk-dominant action is played in the limit if and only if learning is \emph{slow} such that posterior precisions grow sub-quadratically. This generalizes results from static global games and endows them with a learning foundation. Conversely, when learning is \emph{fast} such that posterior precisions grow super-quadratically, shocks can propagate and generate self-fulfilling spirals.
\end{abstract}

\clearpage

\clearpage
\setcounter{page}{1}

\section{Introduction} 
Coordination games are at the heart of many dynamic economic phenomena. In such games, shocks to the fundamentals can sometimes \emph{propagate}: as traders attack a currency peg and the central bank expands reserves to maintain the peg, more traders are incentivized to take advantage of this weakness which, in turn, further drains the central bank's reserves, enticing even more attacks. But shocks can also \emph{fizzle out}: traders might learn quickly that the central bank's balance sheet is healthy so that the proportion of the population who are (incorrectly) pessimistic thins out rapidly. In this case, the downward spiral of an initial attack begetting further attacks eventually loses steam.\footnote{Similar forces operate with bank runs (via bank liquidity), networked products (via product improvements), firm pricing (via sticky consumers), and deflationary spirals (via price and demand downturns).} When do shocks propagate and when do they fizzle out? 


To answer these questions, we study \emph{inertial coordination games}: dynamic coordination games in which players repeatedly decide whether or not to take a risky action (e.g., attack a currency). The payoff of this risky action depends on both (i) a persistent fundamental component; and (ii) an endogeneous component which depends on past play (e.g., if more players have attacked a currency in the past, this depletes current reserves). Players privately learn over time and form inferences about the current state, which shapes their current play. This, in turn, propagates into the future by shaping the evolution of future states. Thus, our environment features two interconnected forces: \emph{private learning}---in which dispersion in beliefs about fundamentals across the population shrinks with time---and \emph{intertemporal contagion}, in which higher current play increases future states which incentivizes higher future play.

Our main result (Theorem \ref{thrm:char}) develops a tight connection between the speed of learning and limit play: the risk-dominant action is played in the limit \emph{if and only if} posterior precisions grow sub-quadratically. Sub-quadratic learning includes canonical learning environments such as (i) independent and identically distributed signals; and (ii) the receipt of a one-time signal. On the other hand, super-quadratic learning includes social learning such that, in each period, agents are randomly matched and sample their partner's information sets. The tight connection we develop between learning speeds and limit play has simple but sharp economic implications. 
Slow (sub-quadratic) learning implies \emph{history independence}: the initial shock to aggregate play as well the fine details of what agents learn in the interim is irrelevant for limit play. Conversely, when learning speeds are fast (super-quadratic), this induces \emph{history dependence} such that initial shocks can be self-fulfilling: whether shocks propagate or not is \emph{jointly sensitive} to both the fundamentals and the magnitude of the shock.\footnote{See Section I of \cite{krugman1991history} for a discussion of history-dependence and expectations in economics.} 

Our model of inertial coordination games imposes a lag between players' actions and their effect on the time-varying state. Hence, when a player contemplates whether or not to play the risky action, his belief about the current state depends on his belief about past play. But that, in turn, depends on his belief about other players' past beliefs which themselves are about play yet further back into the past. Thus, the path of past play (and beliefs)---even of the distant past---multiply forward to shape incentives in the present.\footnote{Note that our environment is, in a loose sense, the opposite of coordination games with switching frictions \citep{frankel2000resolving}. In those models, switching frictions lead players to form beliefs about future play and hence future beliefs; by contrast, in our model, inertia leads players to form beliefs about past play and hence past beliefs.} An important by-product of inertia is that it offers a simple yet novel means of obtaining unique predictions in coordination environments. 

At the same time, our model is motivated by---and consistent with---recent work documenting empirical facts about bank runs. In their empirical analysis of minute-by-minute depositor-level data from an Indian bank, \cite*{iyer2012understanding} find that depositors’ beliefs about “the fraction of other depositors... that have run until time $t-1$" help determine whether they run at time $t$ (eg, inertia). More generally, \cite*{correia2024failing} use historical data from American commercial banks to assess the two prevailing explanations of bank failure in this literature: the ``\textit{bank runs view}" in which failures are due to self-fulfilling runs on solvent banks, and the ``\textit{solvency view}" in which insolvency due to poor fundamentals is the primary reason underlying bank failure. In addition to matching \cite*{correia2024failing}'s finding that \emph{``depositors tend to be slow to react to an increased risk of bank failure''} (eg, the inertia), our main result suggests that which of the above two explanations is best is characterized by the speed of learning among depositors: slow learning implies that outcomes are solely determined by fundamentals as in the ``solvency view," while fast learning implies that self-fulfilling runs may occur even for banks with strong fundamentals as in the ``bank runs view." Hence, one of our contributions is to offer a simple dynamic model which reconciles empirical regularities about `real world' coordination environments with the more abstract logic of learning and higher-order beliefs in coordination games. Further, our characterization of the connection between speed and limit play is invariant to the degree of inertia (duration of lag) and, as inertia vanishes, the equilibrium converges to an equilibrium of the model with purely contemporaneous incentives.\footnote{These claims are made precise in Appendix \ref{app:small lags}.}

In what sense does our model relate to the extant literature on coordination games? It turns out that static global games \citep{carlsson1993global} are recovered \emph{exactly} as a special case of our model in which players only receive a single signal at time $t = 1$: since this learning process is sub-quadratic (posterior  precisions are unchanged after $t = 1$), the risk-dominant profile is---as in global games---played in the limit. This connection arises from observing that when there is no interim learning, best-responding to expected past play along each time step in our model is equivalent to the interim elimination of strictly dominated strategies in the static game. As we make precise in Appendix \ref{app:microf}, this uncovers a formal connection between ``contagion'' arguments over type spaces (which encode hierarchies of higher-order beliefs) and ``contagion'' over time (where the current state depends on past play). Thus, our results generalize the predictions of static global games to sub-quadratic learning and offer a simple dynamic learning foundation for the `vanishing noise limit' commonly invoked in that literature.


Finally, our model makes sharp qualitative predictions about the \emph{path of play} (Proposition \ref{prop: dynamics}) whenever learning is slow enough such that the risk-dominant profile is played in the limit. In particular, transition dynamics toward limit play depend starkly on the precision of signals. When signals are precise, aggregate play exhibits a \emph{sudden phase transition} from approximately all players playing the non-risk-dominant action (potentially for a long while) to approximately all players playing the risk-dominant action thereafter. Conversely, when signals are noisy, aggregate play exhibits a \emph{gradual transition} as the measure of players playing risk-dominant action increases gradually. These differing transition regimes are driven by differences in time-varying heterogeneity in beliefs. The importance of heterogeneity has been recognized as a crucial determinant of equilibrium uniqueness in coordination games \citep{morris2006heterogeneity}. Our results complement this insight by highlighting that the speed of learning---which drives how the cross-sectional distribution of beliefs vary over time---is also a crucial determinant of the \emph{path} of play. It also suggests that ``spikes'' in the time-series of aggregate actions (e.g., withdrawal decisions or adoption of networked goods) can be consistent with transition to limit equilibrium play and need not necessarily correspond to ``equilibrium shifts" \citep{chwe2013rational,morris2019crises}. 

\textbf{Related literature.} Our paper directly relates to the large literature on global games. A broad takeaway from the literature on static global games \citep*[among others]{carlsson1993global,kajii1997robustness,frankel2003equilibrium} is that perturbing common knowledge by introducing a vanishing amount of private uncertainty often selects the risk dominant equilibrium. This approach has been widely deployed in a variety of economic applications to select among multiple equilibria (see, e.g., \cite{morris1998unique,goldstein2005demand} among many others). Our model nests the canonical global games model of \cite*{carlsson1993global} as a special case with a private signal at $t = 1$ and no learning thereafter. The ``contagion argument'' in global games where players iteratively eliminate interim dominated strategies is sometimes viewed metaphorically or as a convenient way to select among equilibria.\footnote{See, \cite{morris2002coordination} and Chapter 2 of \cite{spiegler2024curious}.} Our simple dynamic environment with an endogeneously changing state thus offers an explicit dynamic foundation for the ``contagion'' logic of global games consistent with recently documented facts on bank runs \citep*{iyer2012understanding,correia2024failing}. Moreover, we substantially generalize the prediction of risk-dominant selection to the regime of sub-quadratic learning.

Another strand of the literature has investigated equilibrium selection from a non-Bayesian perspective, in particular under exogenous mutations or limited memory (\cite{fudenberg1992evolutionary}, \cite*{kandori1993learning},\cite{young1993evolution}, \cite*{bergin1996evolution}, \cite*{steiner2008contagion}, \cite*{block2019learning}). In contrast, we propose a model of private Bayesian learning in which the speed of learning and initial conditions jointly mediate limit play. Our finding that super-quadratic learning rates lead limit play to depend on initial conditions is related to the ``hysteresis equlibrium" in macroeconomic settings modelled as coordination games discussed in \cite*{cooper1994equilibrium,krugman1991history} and, more recently, in \cite{morris2019crises}. Our results develop a tight characterization of when hysteresis plays a role for limit outcomes, and when initial conditions wash out.\footnote{See also \cite{chen2016falling} and \cite*{kozlowski2020tail} for a different mechanism of how model uncertainty and rare events interact and give rise to persistence.}

Our work also relates closely to dynamic coordination games \citep*[among others]{angeletos2007dynamic,dasgupta2007coordination,dasgupta2012dynamic}. In much of this literature, coordination is contemporaneous i.e., time-$t$ play matters for time-$t$ incentives. An important exception is the paper of \cite{mathevet2013tractable} who study finite-time horizon dynamic global games with non-contemporaneous coordination motives where the terminal outcome depends on both the state and path of play.\footnote{See also the elegant paper of \cite*{dasgupta2012dynamic}.} Our models are non-nested and deliver complementary insights. In our model, the state is time-varying and endogenous to past play so players' payoffs depend on interim play which in turn drives the state-evolution. In \cite{mathevet2013tractable}, ex-post payoffs depend on a binary outcome which is determined by a persistent fundamental and the path of aggregate play. This mapping from paths to a binary outcome allows \cite{mathevet2013tractable} to tractably handle a wide class of payoffs. By contrast, our payoffs are more specific, but evolve endogeneously and take on a continuum of values. We exploit the structure of our environment to derive sharper implications of learning speeds on limit outcomes.\footnote{The main result of \cite{mathevet2013tractable} is that aggregate play at the critical state at which outcomes hinge only depends on payoffs; by contrast, we show that limit aggregate play at any state depends sharply on the speed of learning. Note also that \cite{mathevet2013tractable}'s definition of ``fast learning'' takes the precision of each signal arbitrarily large, which is sufficient for equilibrium existence. This is different from our notion which requires the (weaker) condition of super-quadratic growth in posterior precisions, and where limit equilibrium play always converges.}

Our model can be viewed as the conceptual opposite of models which impose switching frictions to obtain unique equilibrium predictions \citep{frankel2000resolving}.\footnote{See also \cite*{frankel2003equilibrium} and more recent work in dynamic information design by \cite*{koh2024informational}. \cite{angeletos2016incomplete} survey work in macroeconomics applying switching frictions to obtain equilibrium selection.} 
A key feature of our model is that players can frictionlessly change their actions, but there is a (potentially small) lag between players' actions and transmission to the state. This lag links incentives intertemporally: when a player contemplates whether or not to play the risky action, he must form beliefs about \emph{past play} and so beliefs about even the distant past can multiply forward to shape present incentives. Thus, our model can be loosely viewed as the conceptual counterpart of models with switching frictions which require that players form beliefs about \emph{future play} \citep{frankel2000resolving}. Unlike such models which predict that the risk-dominant action will be selected, our model offers a more nuanced unique prediction which hinges on the speed of private learning. 


Finally, our results are connected to the literature on learning in coordination games. An interpretation of our model is of best-response dynamics with Bayesian learning about a fundamental state or past play. To our knowledge, this is novel to the extant literature on learning in games which assumes that past play is perfectly observed and there is no fundamental uncertainty.\footnote{See \cite{fudenberg1998theory} for a survey.} Our paper is related to an important paper of \cite{crawford1995adaptive} who studies adaptive dynamics in complete information coordination games by modelling player strategies as a linear adjustment rule to be estimated; this is similar to the interpretation of players in our model as best-responding to past play. Thus, a contribution of our paper is to analyze how Bayesian learning about fundamental uncertainty interacts with adaptive learning about strategic uncertainty \citep{camerer1999experience}.\footnote{\cite{camerer1999experience} develop a model of learning in complete information games which features both modes of learning; by contrast, in the best-response interpretation of our model, different modes of learning correspond to different kinds of uncertainty: `fundamental uncertainty' is resolved via Bayesian learning but `strategic uncertainty' is resolved via adaptive or reinforcement learning.
}

 

\section{Model}
\textbf{Payoffs.} There is a unit measure of agents indexed by $i \in [0,1]$.\footnote{Endow $[0,1]$ with the Lebesgue extension constructed in \cite*{sun2009individual} so that the continuum law of large numbers holds.} 
Time is discrete and indexed by $t \in \mathcal{T} = \{1,2,\ldots\}$. At each time $t \in \mathcal{T}$, each agent $i$ chooses between a risky action $(a_{it} = 1)$ or a safe action $(a_{it} = 0)$. Let $\lambda_t$ denote the measure of agents who play the risky action at time $t$.

There is an unknown and persistent fundamental $\theta \in \mathbb{R}$. For each $t \in \mathcal{T}$, let $\theta_t$ be the state at time $t$ and for each $t>1$,\footnote{For any $\alpha \in (0,1)$, analogs of our results apply for $\theta_{t} = \alpha \theta + (1-\alpha) \lambda_{t-1}$; see Appendix \ref{appendix:payoffextension}.} 
\[
    \theta_t := \underbrace{\theta}_{\text{Persistent}} + \underbrace{\lambda_{t-1}}_{\text{Endogenous}}  \quad \text{with initial play $\lambda_0 \in [0,1]$.}
\]
Our law of motion for the endogenous state is simple. Nonetheless, we think it well-approximates a range of environments such as currency attacks or bank runs in which the failure probability depends on a single sufficient statistic: present reserves. This, in turn, is simply the sum of its initial reserves less the `current' short/withdrawal positions. For a currency attack where the risky action ($a = 1$) is to attack, initial reserves are $-\theta$ and the current short position is $\lambda_{t-1}$; for a bank run the risky action ($a = 1$) is to deposit, initial reserves are $\theta$, and the current withdrawal position is $1-\lambda_{t-1}$.\footnote{As is standard in global games, we have assumed that $\theta$ is supported on the real line, but it can be appropriately shifted in space and truncated so that the sign fits the economic phenomena in question.}

Each agent $i$'s payoffs at time $t$ are:
\[
u(a_{it},\theta_t):=\begin{cases} 
\theta_t & \text{if } a_{it}=1 \\
c & \text{if } a_{it}=0 \\
        \end{cases}
\]
In the main text, we assume that $\lambda_0$ is common knowledge but interim payoffs as well as $(\lambda_t)_{t \geq 1}$ are unobserved and inferred via beliefs about the state $\theta$.\footnote{An identical assumption is made in the context of dynamic global games by \cite*{angeletos2007dynamic}.} As Appendix \ref{app:het-unknown} shows, the assumption that $\lambda_0$ is common knowledge may be relaxed. Our model is equivalent to one in which agents observe noisy signals of past play $(\lambda_t)_{t \geq 1}$, or both signals about the state as well as past play---we formalize this  an equivalence in Appendix \ref{appendix:pastplay} and discuss this below.

\textbf{Interpreting initial play.} We interpret $\lambda_0 \in [0,1]$ as a shock. This is equivalent to a one-off and transitory shock to the cost at time $1$.\footnote{More explicitly, suppose there is zero initial play but there is an initial shock to costs of playing the risky action. Costs are potentially time-varying $c_1,c_2,\ldots$ with $c:= c_2 = c_3 = c_4 =\ldots$ and $c_1 \neq c$. Then, the dynamics of this model with zero initial play is equivalent to the model with $\lambda_0 = c - c_1$.} The impact of this shock is equivalent to that of non-zero initial play: positive shocks which incentivize the risky action correspond to high $\lambda_0$ and vice versa. In the language of macroeconomics, this corresponds to an unexpected shock (``MIT shock'') and we analyze the dynamics of aggregate play following the shock. In the language of fictitious play, this corresponds to a shock to initial play and we analyze how the speed of learning shapes the limit action profile.

\textbf{Timing and Information.} The timing of the game and the informational environment are as follows. All agents share a common improper uniform prior about $\theta$. At each time $t \in \mathcal{T}$, each agent $i$ receives a private signal $x_{it} \sim N(\theta,\sigma_t^2)$ that is iid across agents and independent across time. Each agent $i$ then updates to her time-$t$ posterior about $\theta$ conditional on the natural filtration $\mathcal{F}_{it}$ generated by past signals $\{x_{is}\}_{s=1}^t$. From standard Bayesian updating, this posterior is also Gaussian: $\theta|\mathcal{F}_{it} \sim N(\mu_{it},\eta_t^2)$, where $\mu_{it}:=\eta_t^2 \sum_{s=1}^t \sigma_s^{-2}x_{is}$ and $\eta_t^2:=(\sum_{s=1}^t \sigma_s^{-2})^{-1}$. Finally, each agent $i$ chooses $a_{it} \in \{0,1\}$ to maximize her expected utility. Note that since agents are atomless, they cannot influence future aggregate play, so it is without loss to assume they are myopic.\footnote{For example, we could redefine agent $i$'s payoffs as $\max_{(a_{it})_{t\geq1}} \sum_{t\geq 1} \delta^t \mathbb{E}_t[u(a_{it},\lambda_{t-1},\theta)]$ for some $\delta \in (0,1)$.}

\begin{figure}[h!]  
\centering
\captionsetup{width=0.9\linewidth}
    \caption{Timing and intertemporal linkages} \includegraphics[width=0.8\textwidth]{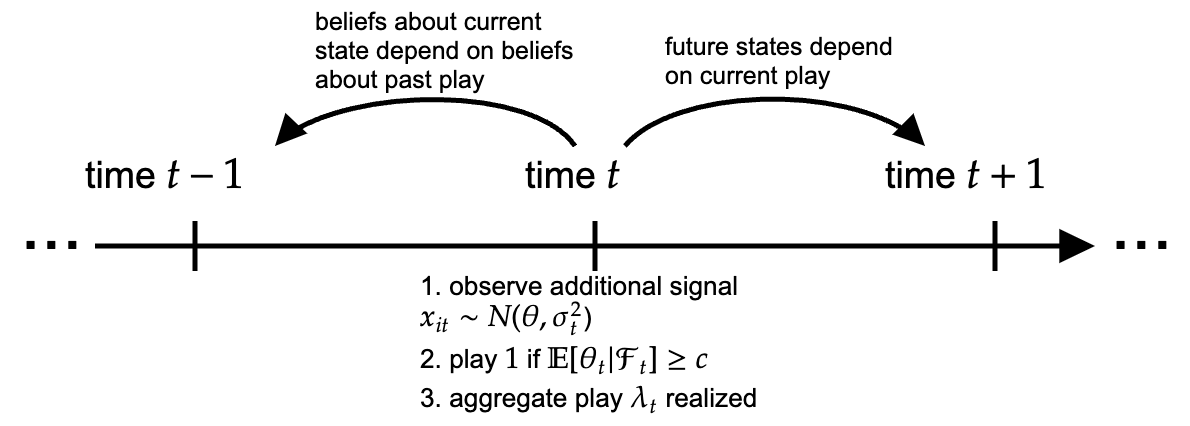}
     \label{fig:timing}
\end{figure} 

\textbf{Limits and Learning.} Let the \emph{learning process} $\Sigma:=(\sigma_t^2)_{t \in \mathcal{T}}$ be a strictly positive sequence of signal variances determining the speed of learning. Note that instead of $\Sigma$, we may equivalently directly specify a nonnegative, increasing sequence of posterior precisions $(\eta_t^{-2})_t$, since such a sequence is induced by $\sigma_t^{-2}:=\eta_t^{-2}-\eta_{t-1}^{-2}$. When convenient, we will therefore take $(\eta_t^{-2})_t$ as a primitive of our model.

We maintain the following assumption unless otherwise stated. 
\begin{assumption}[Limit learning]\label{assumption: limit learning}$\Sigma$ satisfies $\lim_{t \to \infty} \eta_t^{-2}= + \infty$.
\end{assumption}

A key object of analysis in this paper is \textit{limit aggregate play} at state $\theta$, as a function of signal variances $\Sigma$ and initial aggregate play $\lambda_0$. We denote this by $\lambda_\infty(\theta|\Sigma,\lambda_0):=\lim_{t \to \infty} \lambda_t(\theta|\Sigma,\lambda_0)$. When $(\Sigma,\lambda_0)$ is understood to be fixed, we omit dependencies. In particular, we study when limit aggregate play converges to the \textit{risk dominant (RD) action} at state $\theta$, denoted by 
\[RD(\theta):=\mathbbm{1}\big\{\theta \geq c-(1/2)\big\}.\footnote{Our notion of convergence of functions of $\theta$ is a.e. pointwise convergence.}\] 
The connection to risk-dominance is as follows: $RD(\theta)$ is the best-response for each agent if she knew $\theta$ and believed that half the population played the risky action.\footnote{See \cite{morris_shin_2003} for a survey of the connection between risk-dominance and continuum player global games.} Analogously, define the \textit{non-risk dominant} action as $NRD(\theta):=1-RD(\theta)$.

We discuss several aspects of our model: 
\begin{itemize}[leftmargin = 1em]
    \item \textbf{State transition \& non-contemporaneous incentives}:   Our model is one in which the state at time-$t$ ($\theta_t$) is shaped by play at time-$t - 1$ ($\lambda_{t-1}$). A few remarks are in order. \underline{First}, we think this captures an important element of dynamic coordination phenomena (such as the bank runs studied in \cite{iyer2012understanding} and \cite*{correia2024failing}). For instance, in currency crises the susceptibility of the current regime depends on the short position. Note also that our model exhibits path dependence---though $\theta_{t}$ depends directly only on $\lambda_{t-1}$, it also depends indirectly on the full path of past play $(\lambda_s)_{s < t-1}$.\footnote{This is broadly analogous to how an AR$(1)$ process depends indirectly on realizations of the process which are far into the past (although our setting is non-stationary). An alternate interpretation is that our model is a form of best-response dynamics. Of course, some care is required with this interpretation since past play is unobserved: players are thus best-responding to their \emph{best guess} of past play. 
    Experimental evidence in (complete information) coordination games find that the vast majority (96\%) of play constitute myopic best-responses \citep{mas2016behavioral}.} This allows our model to make rich and nuanced predictions on how the speed of learning shapes equilibrium selection. By contrast, a model with \emph{exact} contemporaneous incentives cannot capture this---the risk-dominant equilibrium is always selected.\footnote{In Appendix \ref{app:small lags} we show that the unique belief threshold is $c - 1/2$ so the risk-dominant action is always selected in the limit.} 
    \underline{Second}, our main result on learning regimes (Theorem \ref{thrm:char}) holds for vanishing inertia: if we fixed the speed of learning as a function of calendar time but shrink the spacing between time periods, the sub/super-quadratic threshold for risk/non-risk dominance continues to hold. This is in the same spirit as the `frictionless limit' of \cite{frankel2000resolving} where switching frictions vanish. \underline{Third}, as the degree of inertia vanishes, the (unique) equilibrium of our inertial coordination game converges to an equilibrium of the coordination game with contemporaneous incentives.\footnote{That is, with utility function $u(1,\theta,\lambda_{t}) = \theta + \lambda_t$. Proposition \ref{prop:smallinertia} in Appendix \ref{app:small lags} formalizes this.} This means that the speed of learning \emph{relative to} the degree of inertia is what is important for obtaining richer predictions about equilibrium outcomes. 

\item 
\textbf{Observability.} We have assumed agents do not observe aggregate past play.\footnote{Aggregate past play is an injective function of the state, so if agents were to observe this, they would immediately learn the state.} However, our model is equivalent to one in which agents receive private signals both about the state, and about aggregate past play.\footnote{For instance, see \cite{dasgupta2007coordination,trevino2020informational}.} Explicitly, suppose that agent $i \in [0,1]$, in addition to observing $x_{it} \sim N(\theta, \sigma^2_t)$, additionally observes an independent signal distributed $y_{it} \sim N(\Phi^{-1}(\lambda_{t-1}), \tau_t^2)$; this signal structure was introduced by \cite{dasgupta2007coordination} and used recently by \cite{trevino2020informational} to study financial contagion. Then, we can show by induction that equilibrium behavior is equivalent to a model in which players receive a more precise signal about $\theta$ and no information about aggregate past play; Appendix \ref{appendix:pastplay} formalizes this connection. In light of this observation, we load all learning on the state, and agents make inferences about aggregate past play conditional on their beliefs about the state.
\item 
\textbf{Continuum of players.} We follow the literature on global games by assuming that there is a continuum of atomless players. By the continuum law of large numbers, this implies that conditioned on the fundamental state $\theta$, all aggregate randomness in private  signal realizations across the population washes out. This allows us to analyze the dynamics of aggregate play cleanly as the solution to a non-linear deterministic difference equation. In Appendix \ref{appendix:finiteplayers}, we develop a finite-player version of our model with $N < +\infty$ players and show via standard concentration arguments of the empirical distribution of beliefs that limit play behaves similarly in the finite-player version of our model (see Theorem \ref{thrm:approx} in Appendix \ref{appendix:finiteplayers}). 
\end{itemize}


\section{Equilibrium law of motion}
We begin by analyzing the joint evolution of aggregate beliefs and aggregate play for the learning process and initial profile of play $(\Sigma,\lambda_0)$. Appendix \ref{appendix:proofs} contains the relevant proofs.

\textbf{Dynamics of threshold beliefs.} Recall that agent $i$'s time-$t$ posterior is $N(\mu_{it},\eta_t^2)$. Agent $i$ chooses the risky action $a_{it}=1$ at time $t$ if and only if $\mu_{it}\geq \mu_t^*$, where the \textit{belief threshold} $\mu_t^*$ is implicitly defined by the indifference condition
\[
\mu^*_t + \Ex_{\theta \sim N(\mu^*_t,\eta^2_t)} \Big[\lambda_{t-1}(\theta)\Big] = c.
\]
Let $\Phi$ be the standard Gaussian CDF. Expanding this indifference condition yields the law of motion for belief thresholds $(\mu^*_t)_t$, which can be written explicitly as the nonlinear difference equation
\[
\mu^*_t = \mu^*_{t-1} + \Big(\sqrt{\eta^2_{t-1} + \eta^2_t} \Big) \cdot \Phi^{-1}(c - \mu^*_{t}) \quad \text{for all $t \geq 1$}
\tag{$\mu^*$-LOM}
\label{main-text-mu-lom}
\]
with boundary $\mu_1^*:=c-\lambda_0$. Note that the dynamics of $(\mu_t^*)_t$ are (i) monotone in $t$; (ii) do not depend on the state $\theta$; and (iii) slow down as agents' beliefs become more precise. Intuitively, when agents are confident about the state, they are confident that their peers have already coordinated on an action, so the indifference condition does not then not vary much. 

\textbf{Dynamics of aggregate play.} We now go from threshold beliefs to aggregate play. By the continuum law of large numbers,\footnote{See, e.g., \cite{sun2009individual}.} the time-$t$ distribution of posterior means in the population of agents is $\mu_{it} \sim N(\theta,\eta_t^2)$. Hence, aggregate time-$t$ play is simply given by
\[
\lambda_t(\theta)=\mathbb{P}_{\mu_{it} \sim N(\theta,\eta_t^2)}\Big(\mu_{it}\geq \mu_t^* \Big)=\Phi\left(\frac{\theta-\mu_t^*}{\eta_t}\right)
\]
Since we assumed limit learning, $\lim_{t \to \infty} \eta_t=0$. Taking $t\to\infty$ yields
\[
\lambda_\infty(\theta)=\mathbbm{1}[\theta \geq \mu_\infty^*]
\]
Hence, the limit belief threshold $\mu_\infty^*$ pins down limit aggregate play at state $\theta$. In particular, $\lambda_\infty(\theta)=RD(\theta)$ a.e. if and only if $\mu_\infty^*=c-(1/2)$. In words, the risky action is adopted by the whole population in the limit if and only if, in the complete information game in which the fundamental $\theta$ is known, playing the risky action is the best response to the {uniform conjecture} that $1/2$-proportion of players are playing the risky action. Hence, we call $c-(1/2)$ the \textit{risk dominance threshold}. 

We now walk through two simple but illustrative examples. 

\begin{example}[Global games]
\label{ex:brd}
For $\sigma>0$, let the learning process $\Sigma=(\sigma^2,\infty,\infty,\ldots)$ be such that agents only receive information at time $t=1$. Since $\eta_t^2=\sigma^2$ for all $t \in \mathcal{T}$, (\ref{main-text-mu-lom}) reduces to the stationary difference equation
\[
\mu_t^*=\mu_{t-1}^*+\sqrt{2}\sigma \Phi^{-1}(c-\mu_t^*)
\]
For any $\sigma>0$ and $\lambda_0 \in (0,1)$, this equation has a unique, globally attracting fixed point at $\mu_\infty^*(\sigma,\lambda_0)=c-(1/2)$. Hence, for any initial signal precision and initial aggregate play, the risk dominance \emph{threshold} is attained in the limit. However, without limit learning, limit play is not exactly risk-dominant since some agents remain mistakenly optimistic/pessimistic about the state. But, as the initial signal becomes arbitrarily precise,
\[
\lim_{\sigma \downarrow 0} \lambda_\infty(\theta|\sigma,\lambda_0)=RD(\theta).
\]
Indeed, in Appendix \ref{app:microf} we develop a two-player interpretation which recovers an equivalence between best-response dynamics and iterated deletion of interim strictly dominated strategies dynamics. Thus, the special case with learning only at $t=1$ and noisy signals thereafter is exactly the model of \cite*{carlsson1993global}. $\hfill \diamondsuit$
\end{example}

\begin{example}[Perfect learning]
\label{ex:comp_info}
Consider any learning process $\Sigma$ with $\sigma_1=0$, such that $\theta$ is common knowledge from time $t=1$ onwards. Suppose that $\theta+\lambda_0>c$. We show by induction that $\lambda_t=1$ for all $t\in \mathcal{T}$. For the base case, note that every agent $i$ strictly prefers $a_{it}=1$ if and only if $\theta+\lambda_0>c$, which holds by assumption. Hence, $\lambda_1=1$ is common knowledge. For the inductive step, assume $\lambda_{t-1}=1$. Then,
    \[
    \theta+\lambda_{t-1}-c=\theta+1-c\geq \theta+\lambda_0-c>0
    \]
    which implies $\lambda_t=1$. 
    The argument for $\theta+\lambda_0<c$ is analogous so $\lambda_t=0$ for all $t \in \mathcal{T}$. Hence, limit play depends on $\lambda_0$: 
$\lambda_\infty(\theta|\Sigma,\lambda_0)=\mathbbm{1}[\theta \geq c-\lambda_0].$
$\hfill \diamondsuit$
\end{example}

Examples \ref{ex:brd} and \ref{ex:comp_info} resemble the well-understood gap between small perturbations of common knowledge highlighted in the literature on static coordination games (Example \ref{ex:brd}). In our environment, complete information does not induce multiplicity because players' incentives are non-contemporaneous; but it induces \emph{history-dependence}: limit play is sensitive to initial play.\footnote{This is consistent with experimental evidence from coordination games which have largely focused on complete information settings; see, e.g. \cite*{van1991strategic}.}

What are we to make of these differing predictions? In what follows we offer a unified account of when risk-dominance or history-dependence obtains as a function of the learning process. It will turn out that Example \ref{ex:brd} is a special case of \emph{slow learning} since posterior precisions grow sub-quadratically (in Example \ref{ex:brd}, they do not change beyond $t = 1$).  On the other hand, Example \ref{ex:comp_info} is a special case of \emph{fast learning} since posterior precisions are super-quadratic (in Example \ref{ex:comp_info}, they ``jump immediately to infinity'' at time $t = 1$).\footnote{Translated into a static incomplete information game, this correspond to interspersing interim Bayesian updating with rounds of interim deletion of strictly dominated strategies. Appendix \ref{app:microf} makes this connection precise.}

\section{Characterization of limit play}

Our main result gives a tight characterization of limit outcomes in terms of the learning rate. All proofs are in Appendix \ref{appendix:proofs}. To state our main result, we will use standard asymptotic notation: $f(t) = O(g(t))$ if there exist constants $\overline C < +\infty$ and $T < +\infty$ such that $t \geq T$ implies $f(t) \leq \overline C g(t)$. $f(t) = \Omega(g(t))$ if there exist constants $\underline C < +\infty$ and $T < +\infty$ such that $t \geq T$ implies $f(t) \geq \underline C g(t)$.

\begin{theorem} \label{thrm:char} The following gives a tight characterization of risk-dominance: 
\begin{itemize}
    \item[(i)] For all $\epsilon>0$ and all $\lambda_0 \in (0,1)$,
    \[
    \eta^{-2}_t = O(t^{2 - \epsilon}) \implies \lambda_\infty(\theta)=RD(\theta) \quad \text{a.e.} 
    \]
    \item[(ii)] For all $\epsilon > 0$ and all $\lambda_0 \in (0,1)$, 
    \begin{align*}
        \eta^{-2}_t = \Omega(t^{2 + \epsilon}) \implies &\lambda_\infty(\theta)=NRD(\theta) \\ &\text{for some positive measure interval $(\underline \theta, \overline \theta)$.}
    \end{align*}
\end{itemize}
\end{theorem}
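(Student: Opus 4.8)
The plan is to reduce both parts to the asymptotics of the scalar recursion \eqref{main-text-mu-lom}, since $\lambda_\infty(\theta)=\mathbbm{1}[\theta\geq\mu_\infty^*]$ and $RD(\theta)=\mathbbm{1}[\theta\geq c-\tfrac12]$ mean that (i) amounts to $\mu_\infty^*=c-\tfrac12$ and (ii) to $\mu_\infty^*\neq c-\tfrac12$. I would work with the re-centered sequence $\delta_t:=(c-\mu_t^*)-\tfrac12$, which by \eqref{main-text-mu-lom} obeys
\[
\delta_{t-1}-\delta_t=\sqrt{\eta_{t-1}^2+\eta_t^2}\;\Phi^{-1}\!\big(\tfrac12+\delta_t\big),\qquad \delta_1=\lambda_0-\tfrac12 .
\]
Since $\Phi^{-1}(\tfrac12+x)=-\Phi^{-1}(\tfrac12-x)$, sending $\lambda_0\mapsto 1-\lambda_0$ flips the sign of the whole sequence, so it suffices to treat $\delta_1>0$ (the boundary case $\delta_1=0$ forces $\delta_t\equiv0$, and $\delta_1<0$ is the mirror image). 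First I would check that $x\mapsto x+\sqrt{\eta_{t-1}^2+\eta_t^2}\,\Phi^{-1}(\tfrac12+x)$ is a continuous strictly increasing bijection of $[0,\tfrac12)$ onto $[0,\infty)$; hence given $\delta_{t-1}\in(0,\tfrac12)$ there is a unique $\delta_t\in(0,\delta_{t-1})$, so $(\delta_t)$ is positive, strictly decreasing, and converges to some $\delta_\infty\in[0,\tfrac12)$. Telescoping the displayed identity then gives the key bound $\sum_{t\geq2}\sqrt{\eta_{t-1}^2+\eta_t^2}\,\Phi^{-1}(\tfrac12+\delta_t)=\delta_1-\delta_\infty<\infty$. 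Throughout I use that $\eta_t$ is decreasing (as $\eta_t^{-2}$ is a partial sum of positive terms) and $\eta_t\downarrow0$ by Assumption~\ref{assumption: limit learning}, so $\sqrt2\,\eta_t\leq\sqrt{\eta_{t-1}^2+\eta_t^2}\leq\sqrt2\,\eta_{t-1}$; in particular $\sum_t\sqrt{\eta_{t-1}^2+\eta_t^2}$ and $\sum_t\eta_t$ converge or diverge together.

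For part (i): from $\eta_t^{-2}=O(t^{2-\epsilon})$ I get $\eta_t\geq c_0\,t^{-1+\epsilon/2}$ for large $t$ and some $c_0>0$, hence $\sum_t\eta_t=\infty$ and so $\sum_t\sqrt{\eta_{t-1}^2+\eta_t^2}=\infty$. Suppose toward a contradiction that $\delta_\infty>0$. Since $\delta_t\geq\delta_\infty$ and $\Phi^{-1}$ is increasing, $\Phi^{-1}(\tfrac12+\delta_t)\geq\Phi^{-1}(\tfrac12+\delta_\infty)=:\kappa>0$ for all $t$, so $\kappa\sum_t\sqrt{\eta_{t-1}^2+\eta_t^2}\leq\sum_t\sqrt{\eta_{t-1}^2+\eta_t^2}\,\Phi^{-1}(\tfrac12+\delta_t)<\infty$, contradicting divergence. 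Hence $\delta_\infty=0$, i.e. $\mu_\infty^*=c-\tfrac12$, so $\lambda_\infty(\theta)=\mathbbm{1}[\theta\geq c-\tfrac12]=RD(\theta)$ a.e.

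For part (ii) I need, in addition, a linear upper bound near the center: by the mean value theorem and $(\Phi^{-1})'=1/(\phi\circ\Phi^{-1})$, on $[0,\tfrac14]$ one has $\Phi^{-1}(\tfrac12+\delta)\leq K\delta$ with $K:=1/\phi(\Phi^{-1}(\tfrac34))<\infty$. If $\delta_\infty\geq\tfrac14$ there is nothing to prove, so assume $\delta_t\leq\tfrac14$ for all $t\geq T'$, with $T'$ also large enough that $K\sqrt{\eta_{t-1}^2+\eta_t^2}<1$. Using $\delta_t\leq\delta_{t-1}$, the identity gives $\delta_{t-1}-\delta_t\leq K\sqrt{\eta_{t-1}^2+\eta_t^2}\,\delta_{t-1}$, i.e. $\delta_t\geq\delta_{t-1}\big(1-K\sqrt{\eta_{t-1}^2+\eta_t^2}\big)$ for $t>T'$, hence $\delta_\infty\geq\delta_{T'}\prod_{t>T'}\big(1-K\sqrt{\eta_{t-1}^2+\eta_t^2}\big)$. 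Now $\eta_t^{-2}=\Omega(t^{2+\epsilon})$ yields $\eta_t\leq c_1 t^{-1-\epsilon/2}$ for large $t$, so $\sum_t\sqrt{\eta_{t-1}^2+\eta_t^2}<\infty$, and by the standard criterion $\prod_t(1-a_t)>0\iff\sum_t a_t<\infty$ for $a_t\in[0,1)$ the product is strictly positive. Thus $\delta_\infty>0$ and $\mu_\infty^*=c-\tfrac12-\delta_\infty<c-\tfrac12$, so $\lambda_\infty(\theta)=\mathbbm{1}[\theta\geq\mu_\infty^*]$ equals $1$ on $(\mu_\infty^*,c-\tfrac12)$, while $NRD(\theta)=\mathbbm{1}[\theta<c-\tfrac12]=1$ there; hence $\lambda_\infty=NRD$ on this interval of positive length $\delta_\infty$. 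The mirror-image argument for $\lambda_0<\tfrac12$ gives the interval $(c-\tfrac12,\mu_\infty^*)$ (and $\lambda_0=\tfrac12$ is the non-generic boundary at which $\delta_t\equiv0$).

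I expect the main work to be the two small analytic facts about $\Phi^{-1}$ near $\tfrac12$ — that it is a bijection of $[0,\tfrac12)$ onto $[0,\infty)$ and is controlled above by $K\delta$ on a neighborhood of $0$ — together with the elementary but essential bookkeeping that converts the $O$/$\Omega$ hypotheses on $\eta_t^{-2}$ into divergence/convergence of $\sum_t\eta_t$, using monotonicity of $\eta_t$ to pass between $\sum_t\eta_t$ and $\sum_t\sqrt{\eta_{t-1}^2+\eta_t^2}$. The rest — monotone convergence of $(\delta_t)$, the telescoping identity, and the infinite-product criterion — is routine.
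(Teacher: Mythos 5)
Your proposal is correct, and it takes a genuinely different route from the paper's, most visibly in part (ii). Both arguments start from the same reduction (everything hinges on whether $\mu_\infty^*=c-\tfrac12$, i.e.\ whether your $\delta_\infty$ vanishes), and your monotone-convergence and telescoping steps reproduce Lemmas~\ref{lemma: mu star ss}--\ref{lem:gamma_lom} in different coordinates. For part (i) the paper instead exploits the fact that the increments $\mu^*_{t+1}-\mu^*_t$ are decreasing, hence bounded by $1/t$, and squeezes $\gamma_t=(\mu^*_{t+1}-\mu^*_t)/\sqrt{\eta_t^2+\eta_{t+1}^2}\to 0$ directly from the polynomial rate; your telescoping-plus-divergence argument reaches the same conclusion from the weaker hypothesis $\sum_t\eta_t=\infty$. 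For part (ii) the paper builds an explicit majorant sequence $\gamma_t(\epsilon)$ with steps proportional to $t^{-q/2}$, computes the associated $A_t(\epsilon)$, and invokes the comparison Lemma~\ref{lem: F_A_properties} to trap $\gamma_t$ above $\epsilon>0$ (Lemma~\ref{lem: old-thm-1(ii)}); you instead linearize $\Phi^{-1}$ near $\tfrac12$ via the bound $\Phi^{-1}(\tfrac12+\delta)\leq K\delta$ and close with the infinite-product criterion $\prod_t(1-a_t)>0\iff\sum_t a_t<\infty$. Your route is shorter, avoids the constant-chasing in the paper's construction, and as a bonus delivers the sharper dichotomy $\mu_\infty^*=c-\tfrac12\iff\sum_t\eta_t=+\infty$ (for $\lambda_0\neq\tfrac12$), which classifies borderline rates such as $\eta_t^{-2}=t^2(\log t)^{2}$ versus $t^2(\log t)^{2+\epsilon'}$ that the polynomial statement leaves open; what the paper's construction buys in exchange is the explicit comparison machinery ($F_A$, $A_{\gamma^*}$, $A_{\gamma_1}$) that is reused in the proofs of Proposition~\ref{prop: comp stat} and Proposition~\ref{prop:implementable}. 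Two small housekeeping points, neither of which is a gap: the case $\epsilon\geq 2$ in part (i) is vacuous under Assumption~\ref{assumption: limit learning} (your divergence argument happens to cover it anyway), and the knife-edge $\lambda_0=\tfrac12$, where $\delta_t\equiv 0$ and part (ii) genuinely fails, is excluded implicitly by the paper's ``wlog $\lambda_0>\tfrac12$'' and explicitly by you.
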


Theorem \ref{thrm:char} identifies the critical threshold for limit play and, in so doing, develops a tight connection between sub-quadratic growth of posterior precisions and risk-dominance. Economically, it suggests that in the presence of (a little) inertia, whether and how shocks propagate to shape coordination outcomes depends sharply on learning speeds. We discuss the implications of each part in turn. 

Theorem \ref{thrm:char} (i) generalizes the global games prediction to the case where learning is sufficiently slow. Thus, our model with an endogenously changing state offers an alternate foundation for equilibrium selection. Moreover, it clarifies that sub-quadratic growth of posterior precision under private learning is both necessary and sufficient for risk-dominant selection. Thus, although the `vanishing noise limit' in static global games \citep{carlsson1993global,morris_shin_2003} is often interpreted as modeling players with vanishing uncertainty, this actually corresponds to slow learning in our dynamic model with endogenously changing states.\footnote{We are grateful to Muhamet Yildiz for pointing out this interpretation.} 

\begin{figure}[h!]
    \centering
      \caption{Path of aggregate play under different learning regimes} 
      \begin{quote}
    \vspace{-1em}
    \centering 
    \footnotesize Parameters: $\theta = 0.4$, $\lambda_0 = 0.75$, $c = 1$
    \end{quote} 
      \vspace{-1.0em}
    \subfloat[Sub-quadratic posterior precision]{\includegraphics[width=0.45\textwidth]{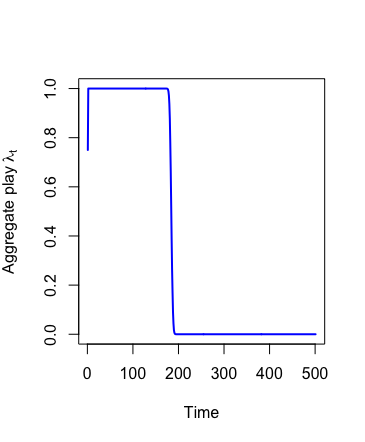}} 
    \subfloat[Super-quadratic posterior precision]{\includegraphics[width=0.45\textwidth]{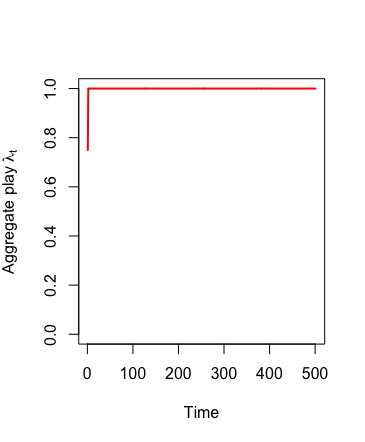}} 
    \label{fig:intro}
\end{figure}

Economically, this also implies that shocks to aggregate incentives (which we can interpret $\lambda_0$ as) do not persist when learning is sufficiently slow. 
Figure \ref{fig:intro} (a) illustrates a possible path of aggregate play under slow learning.\footnote{Notice that $(\lambda_t)_t$ `jumps' toward the risk-dominant profile. In Section \ref{sec:transition} we identify conditions under which the transition to the risk-dominant action is `gradual' versus `sudden'.} The true state is $\theta = 0.4$ and this governs the location of (heterogeneous) beliefs in the population. The cost of playing the risky action is $c = 1$ so under the conjecture that $1/2$-proportion of the population plays the risky action and common-knowledge of the state $\theta = 0.4$, the risk-dominant profile is to not play the risky action. Thus, following the initial shock $\lambda_0 = 0.75$, aggregate play converges back to the safe action.

Theorem \ref{thrm:char} (ii) breaks risk-dominance by showing that sufficiently fast learning can reintroduce the influence of initial play. This generalizes Example \ref{ex:comp_info}'s prediction from instantaneous learning to fast learning. Economically, this implies shocks to aggregate incentives are persistent when learning is fast. Figure \ref{fig:intro} (b) illustrates a typical path of aggregate play under super-quadratic learning. Under the conjecture that $\lambda_0 = 0.75$ proportion of players play the risky action and common-knowledge of the state $\theta = 0.4$, playing the risky action is indeed a best-response. 

Putting parts (i) and (ii) of Theorem \ref{thrm:char} together offers simple but novel economic implications for how the ex-ante joint distribution over shocks $(\lambda_0)$ and fundamentals ($\theta$) interact with learning speeds to shape limit outcomes. Figure \ref{fig:joint} depicts the regions of realizations of shocks and fundamentals under which shocks propagate so the limit play is the risky action (red region), or fizzle out so limit play is the safe action (blue region). Panel (a) depicts this for the case of sub-quadratic learning. As Theorem \ref{thrm:char} (i) establishes, the risk-dominant boundary is selected so limit play is history independent, and does not depend on the shock's realization. 
Panel (b) depicts the case of super-quadratic learning which \emph{rotates} the boundary clockwise around the point $(1/2, c-1/2)$.\footnote{We have depicted this so that the state selects initial-play dominance i.e., $\lambda_{\infty} = 1 \iff \theta \geq c - \lambda_0$. We will later establish (Proposition \ref{prop:implementable} that initial-play dominance and risk-dominance span possible limit outcomes implementable via any learning speed $(\sigma_t)_t$. More generally, for a given super-quadratic learning rate, this will lead to a (potentially non-linear) boundary between these two extremes.} Hence, when learning is fast, the regime might now survive when the state is weak as long as the initial shock is small (region $A$); on the other hand, the regime might now also fail when the state is strong when the initial shock is large (region $B$).

\begin{figure}[h!]
    \centering
    \caption{How learning speeds shape limit play}
      \vspace{-0.8em}
    \subfloat[Sub-quadratic]{\includegraphics[width=0.45\textwidth]{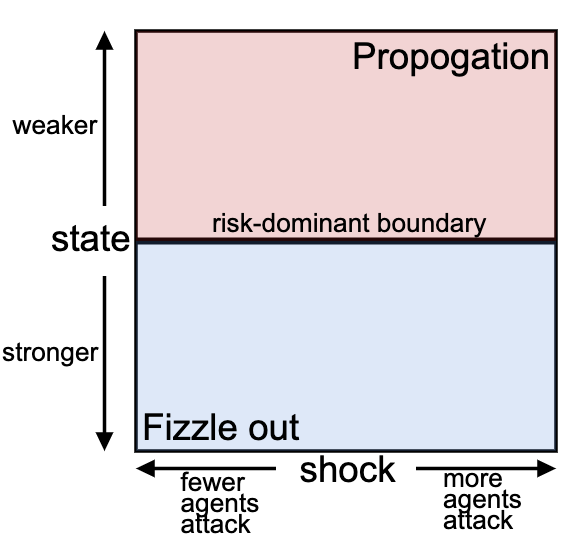}} 
    \subfloat[Super-quadratic]{\includegraphics[width=0.45\textwidth]{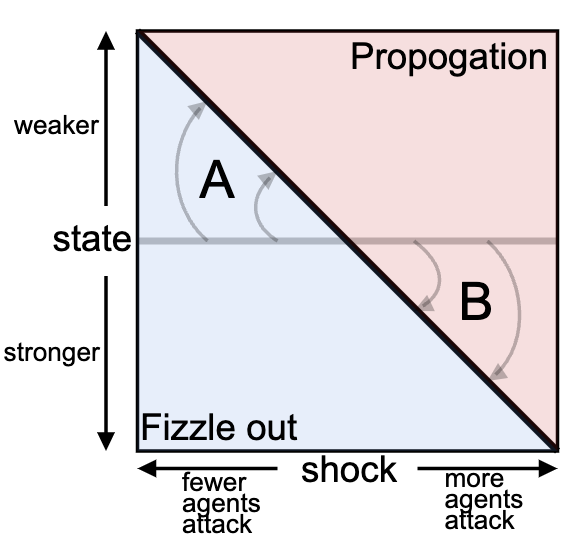}} 
    \label{fig:joint}
\end{figure}

That sufficiently fast learning can break risk-dominance is related to the idea that public signals can yield non-risk dominant play even as uncertainty about the fundamentals vanish; see \cite{morris_shin_2003} Chapter 3.2. In both cases, fast private learning or public signals enable players to quickly approach common knowledge of the fundamental and their peers' play, thus enabling coordination on the non-risk dominant action. Our result complements and enhances this insight by providing an explicit dynamic foundation, as well as a precise characterization of how fast learning needs to be for this intuition to hold. This allows us to classify several canonical learning environments and their resulting limit outcomes.


\textbf{Intuition for Theorem \ref{thrm:char}.} Recall our previous observation that the speed of evolution of the time-$t$ indifferent belief $(\mu_t^*)_t$ is inversely related to the speed of learning: when learning is slow enough, $(\mu_t^*)_t$ gradually converges to the risk dominant threshold $\mu_\infty^*=c-(1/2)$ from any initial condition. However, when learning is fast enough, the dynamics of $(\mu_t^*)_t$ freeze, and consequently $\mu_\infty^*$ is bounded away from the risk dominant threshold. The proof of Theorem \ref{thrm:char} formalizes this intuition by explicitly constructing fictitious threshold sequences that converge to (non)-risk dominant play and matching them to the appropriate learning rates.  

Figure \ref{fig:slow_learning} illustrates the forces underlying Theorem \ref{thrm:char} (i) for the case where $\lambda_0<1/2$. Each panel corresponds to a different time. In each panel, the dotted line represents the state $\theta$, and the red curve is the distribution of posterior means $\mu_t$. The blue line is the threshold belief $\mu_t^*$, and the shaded area is the mass of agents $\lambda_t$ who take the risky action because their beliefs exceed the threshold. As time passes, agents learn and $\mu_t$ contracts around $\theta$. In Figure \ref{fig:slow_learning}, learning and hence the speed of this contraction is slow. Consequently, the evolution of $\mu_t^*$ is relatively fast. As $t \to \infty$, $\mu_t^*$ converges to $c-(1/2)$ and $\mu_t$ converges to a point mass on $\theta$ due to limit learning. Since $\theta \geq c-(1/2)$, aggregate play converges to the risk dominant (risky) action.

\begin{figure}[H]  
\centering
\captionsetup{width=0.9\linewidth}
\vspace{-0.8em}
    \caption{Slow learning implies risk dominance} \includegraphics[width=0.9\textwidth]{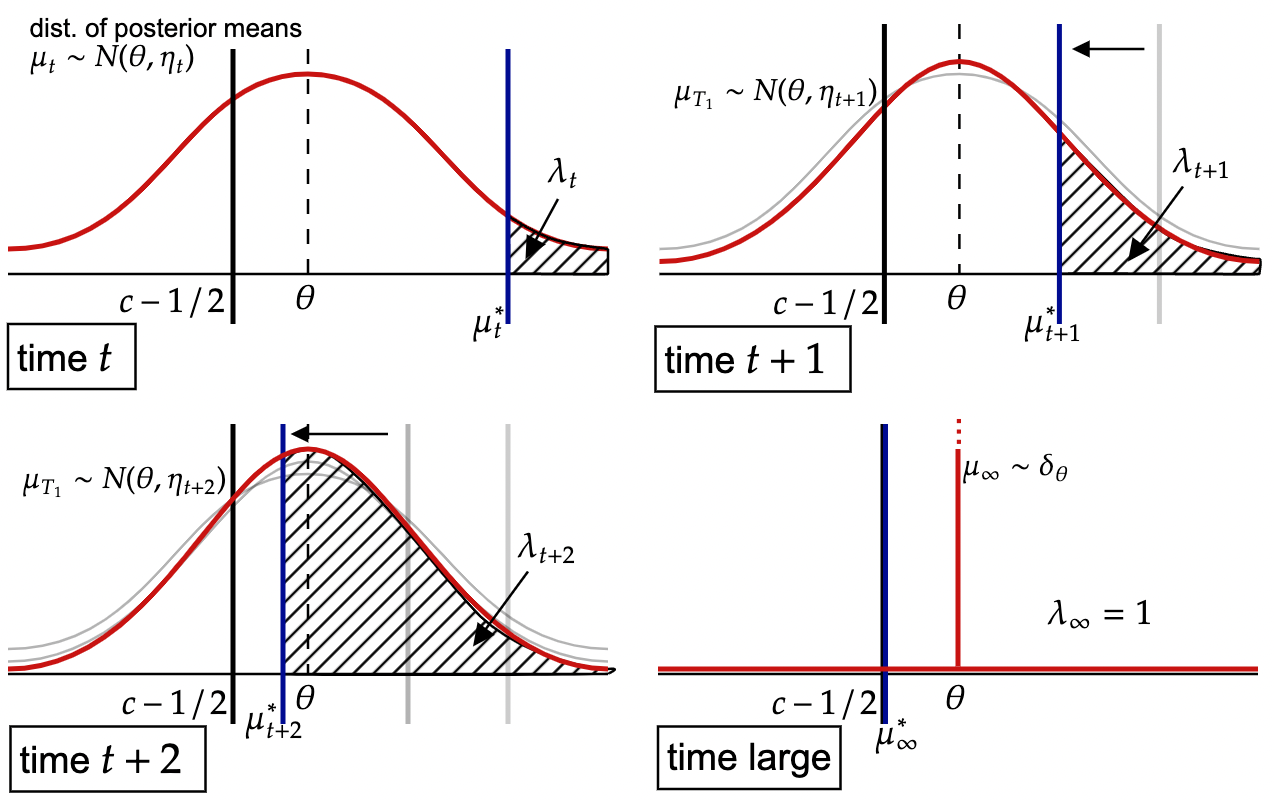}
     \label{fig:slow_learning}
\end{figure}

Figure \ref{fig:fast_learning} illustrates the forces underlying Theorem \ref{thrm:char} (ii) for the case where $\lambda_0<1/2$. With super-quadratic learning, the speed of contraction of $\mu_t$ is fast. Consequently, the evolution of $\mu_t^*$ is relatively slow. As $t \to \infty$, $\mu_t^*$ converges to a limit threshold $\mu_\infty^*$ that is bounded away from $c-(1/2)$, and since $\theta \in (c-(1/2),\mu_\infty^*)$, aggregate play converges to the non-risk dominant (safe) action.

\begin{figure}[h!]  
\centering
\captionsetup{width=0.9\linewidth}
    \caption{Fast learning implies non-risk dominance} \includegraphics[width=0.9\textwidth]{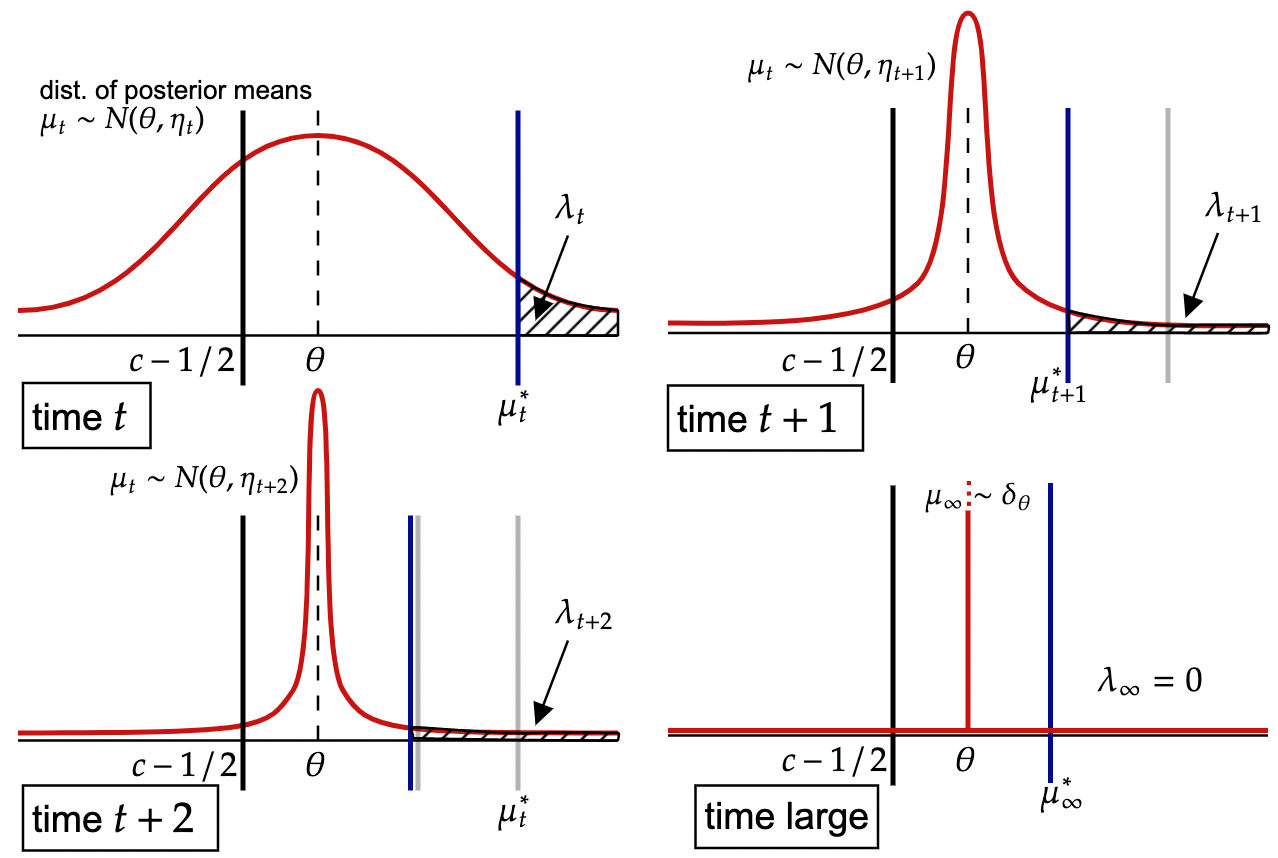}
     \label{fig:fast_learning}
\end{figure}

Several corollaries follow from Theorem \ref{thrm:char}.

\begin{corollary}[Risk-dominance is a tail event] \label{cor:tail}
    Consider any sub-quadratic learning process $\Sigma = (\sigma_t^2)_{t \in \mathcal{T}}$. Then, for any finite $T < +\infty$ and any $(\sigma'_t)_{t \leq T}$, the risk-dominant action profile is selected almost surely in the limit under the learning process $\Sigma' := [((\sigma'_t)^2)_{t \leq T}, (\sigma_t^2)_{t > T}]$. 
\end{corollary}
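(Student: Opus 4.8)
The plan is to show that a finite modification of the early signal variances changes the posterior precision sequence only by a bounded additive constant from time $T+1$ onward, hence preserves sub-quadratic growth, and then to invoke Theorem \ref{thrm:char}(i) directly for $\Sigma'$. The point is conceptual: ``sub-quadratic'' is a property of the tail of $(\eta_t^{-2})_t$, so it cannot be destroyed by editing finitely many terms.

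First I would write $\eta_t^{-2} = \sum_{s=1}^{t}\sigma_s^{-2}$ and observe that, for the modified process $\Sigma'$, for every $t > T$,
\[
(\eta'_t)^{-2} \;=\; \sum_{s=1}^{T}(\sigma'_s)^{-2} \;+\; \sum_{s=T+1}^{t}\sigma_s^{-2} \;=\; \eta_t^{-2} + C,
\qquad C := \sum_{s=1}^{T}\big[(\sigma'_s)^{-2} - \sigma_s^{-2}\big],
\]
where $C \in \Re$ is a finite constant (a finite sum of differences of nonnegative reals). Next I would check that $\Sigma'$ is an admissible learning process satisfying Assumption \ref{assumption: limit learning}: the variances are strictly positive for $t\le T$ by hypothesis and agree with $\Sigma$ for $t>T$, so the induced precisions $(\eta'_t)^{-2}=\sum_{s\le t}(\sigma'_s)^{-2}$ are nonnegative and nondecreasing; and since $\eta_t^{-2}\to+\infty$ under the sub-quadratic process $\Sigma$, we get $(\eta'_t)^{-2} = \eta_t^{-2}+C \to +\infty$.

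Then I would transfer the growth bound: because $\Sigma$ is sub-quadratic, there exist $\epsilon>0$, $\overline C<\infty$ and $T_0<\infty$ with $\eta_t^{-2}\le \overline C\, t^{2-\epsilon}$ for all $t\ge T_0$; hence for $t\ge\max\{T_0,T+1\}$,
\[
(\eta'_t)^{-2} \;=\; \eta_t^{-2} + C \;\le\; \overline C\, t^{2-\epsilon} + |C| \;\le\; (\overline C + |C|)\, t^{2-\epsilon},
\]
so $(\eta'_t)^{-2} = O(t^{2-\epsilon})$, i.e.\ $\Sigma'$ is itself sub-quadratic. Applying Theorem \ref{thrm:char}(i) to $\Sigma'$ then yields $\lambda_\infty(\theta\mid\Sigma',\lambda_0)=RD(\theta)$ for a.e.\ $\theta$ and every $\lambda_0\in(0,1)$, which is the claim.

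I do not expect a genuine obstacle. The only points needing brief care are that some replacement variances $(\sigma'_s)^2$ may be infinite, so that $C$ can be negative --- handled by the $|C|$ bound above together with the automatic inequality $(\eta'_t)^{-2}=\sum_{s\le t}(\sigma'_s)^{-2}\ge 0$ --- and the (elementary) fact that adding a constant leaves the asymptotic order $O(t^{2-\epsilon})$ unchanged. Everything else is an immediate appeal to Theorem \ref{thrm:char}(i).
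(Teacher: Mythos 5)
Your argument is correct and is precisely the (omitted) argument the paper intends: a finite prefix modification shifts the precision sequence $(\eta_t^{-2})_t$ by only an additive constant for $t>T$, which preserves both Assumption \ref{assumption: limit learning} and the $O(t^{2-\epsilon})$ bound, so Theorem \ref{thrm:char}(i) applies verbatim to $\Sigma'$. The only implicit assumption is that the replacement variances $(\sigma'_t)^2$ are strictly positive (as the paper's definition of a learning process requires), since a zero variance would induce perfect learning and break the conclusion; your handling of infinite variances and a possibly negative $C$ is fine.
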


This gives a dynamic perspective on the gap between almost-common knowledge and common knowledge of payoffs in static incomplete information games. 

\begin{corollary}
The following are limit outcomes of canonical learning processes:
\label{cor:prop_1}
\begin{itemize}
    \item[(i)] (IID Gaussian signals) Suppose $\Sigma=(\sigma^2,\sigma^2,\ldots)$. Then for any $\lambda_0 \in (0,1)$ and $\sigma>0$, limit aggregate play is the risk dominant action.
    \item[(ii)] (Social learning) At each time $t \geq 2$, player $i$ exchanges information with a randomly matched player $j$. That is, in the first period, each play receives a private signal $\sigma^2_1 = \sigma^2$ and in all subsequent periods, players' posterior precisions double: $\sigma^2_t = \eta^2_{t-1}$. Then the non-risk dominant action is played on a positive measure of states. In fact, we will later show that if $\sigma$ is small, limit play converges to \emph{initial play dominance} where $\lambda_{\infty}(\theta) = 1$ if and only if $\theta \geq \lambda_0$.
\end{itemize}
\end{corollary}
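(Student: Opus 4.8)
The plan is to reduce each part to a one-line application of Theorem~\ref{thrm:char} by computing the sequence of posterior precisions $(\eta_t^{-2})_t$ induced by the learning process in question and checking which side of the quadratic threshold it falls on. Recall from the discussion preceding Assumption~\ref{assumption: limit learning} that a learning process may be parametrized directly by $(\eta_t^{-2})_t$ via $\sigma_t^{-2}=\eta_t^{-2}-\eta_{t-1}^{-2}$, and that the only equilibrium objects that matter are the belief-threshold recursion \eqref{main-text-mu-lom} and the aggregate-play map $\lambda_t(\theta)=\Phi((\theta-\mu_t^*)/\eta_t)$, both of which depend on the process solely through $(\eta_t^{-2})_t$.

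\textbf{Part (i).} Under iid Gaussian signals $\sigma_t^2\equiv\sigma^2$, additivity of precisions gives $\eta_t^{-2}=\sum_{s=1}^t\sigma_s^{-2}=t\sigma^{-2}$, which is linear in $t$. Hence $\eta_t^{-2}=O(t^{2-\epsilon})$ for, say, $\epsilon=1/2$ (indeed $t\sigma^{-2}\le\sigma^{-2}t^{3/2}$ for all $t\ge1$). Theorem~\ref{thrm:char}(i) then yields $\lambda_\infty(\theta)=RD(\theta)$ a.e. for every $\lambda_0\in(0,1)$ and every $\sigma>0$, which is the claim. (This also subsumes the vanishing-noise conclusion of Example~\ref{ex:brd}.)

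\textbf{Part (ii).} First I would translate the matching protocol into a precision recursion. At $t=1$ we have $\eta_1^{-2}=\sigma^{-2}$; in any later period an agent is matched with a partner who, by symmetry, holds a posterior of identical precision $\eta_{t-1}^{-2}$, and pooling information sets makes the new precision $\eta_t^{-2}=2\eta_{t-1}^{-2}$ — equivalently $\sigma_t^{-2}=\eta_t^{-2}-\eta_{t-1}^{-2}=\eta_{t-1}^{-2}$, i.e. $\sigma_t^2=\eta_{t-1}^2$ as stated. Thus $\eta_t^{-2}=2^{t-1}\sigma^{-2}$ grows geometrically, so $\eta_t^{-2}=\Omega(t^{2+\epsilon})$ for every $\epsilon>0$ (e.g. $\Omega(t^3)$), exponential growth dominating any polynomial. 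Theorem~\ref{thrm:char}(ii) then delivers a positive-measure interval of states on which $\lambda_\infty(\theta)=NRD(\theta)$. The sharper ``in fact'' clause — that for small $\sigma$ limit play coincides with initial-play dominance, $\lambda_\infty(\theta)=1\iff\theta\ge\lambda_0$ — is explicitly deferred in the statement and I would not prove it here; it follows from the later characterization of implementable limit outcomes around Proposition~\ref{prop:implementable}.

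\textbf{Main obstacle.} There is essentially no analytic difficulty once the precision sequences are identified; the only point that needs care is the legitimacy of collapsing the social-learning protocol to the single sequence $\eta_t^{-2}=2^{t-1}\sigma^{-2}$. Random matching induces correlation in posterior means \emph{within} matched pairs, so the joint cross-sectional distribution of beliefs is not the one generated by handing every agent an independent, more precise signal. I would address this with a brief remark: the objects entering Theorem~\ref{thrm:char} depend only on the \emph{marginal} posterior of a representative agent, which remains $N(\theta,\eta_t^2)$ with $\eta_t^{-2}=2^{t-1}\sigma^{-2}$ regardless of cross-pair correlation, so $\lambda_t(\theta)=\Phi((\theta-\mu_t^*)/\eta_t)$ and \eqref{main-text-mu-lom} are unaffected, and the identification of the process with $(\eta_t^{-2})_t$ is valid.
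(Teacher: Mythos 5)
Your proposal is correct. Part (i) is exactly the paper's argument: compute $\eta_t^{-2}=\sigma^{-2}t$ and invoke Theorem~\ref{thrm:char}(i) with a sub-quadratic exponent (the paper takes $p=1$, you take $t^{3/2}$; immaterial).

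For part (ii) you take a mildly different route that is worth flagging. The paper does not apply Theorem~\ref{thrm:char}(ii) directly: it invokes the intermediate Lemma~\ref{lem: old-thm-1(ii)} with its explicit constant $\underline C$, and then chooses $\underline\sigma$ small enough that $\sigma^{-2}2^{t-1}\geq \underline C t^{3}$ holds for \emph{all} $t\geq 1$, so its written proof establishes the claim only for $\sigma<\underline\sigma$ (consistent with the surrounding text, ``if the initial signal is informative enough''). Your direct appeal to Theorem~\ref{thrm:char}(ii) is equally legitimate and in fact yields a stronger conclusion: $\sigma^{-2}2^{t-1}=\Omega(t^{3})$ for \emph{every} $\sigma>0$ since exponential growth eventually dominates any polynomial, and Step~2 of the paper's proof of Theorem~\ref{thrm:char}(ii) shows the implication holds for an arbitrary constant and starting time in the $\Omega$ bound, so non-risk-dominant play on a positive-measure interval obtains for all $\sigma>0$, not only small $\sigma$. (The restriction to small $\sigma$ matters only for the deferred ``initial-play dominance'' refinement, which you correctly leave to Proposition~\ref{prop:implementable}.) Finally, your remark on within-pair correlation under random matching addresses a point the paper's proof passes over silently, and your resolution is the right one: the threshold recursion \eqref{main-text-mu-lom} depends only on a representative agent's marginal posterior over $\theta$ (because $\lambda_{t-1}(\cdot)$ is a deterministic function of $\theta$ by the continuum law of large numbers, applied here over independent pairs), so collapsing the protocol to the precision sequence $\eta_t^{-2}=2^{t-1}\sigma^{-2}$ is valid.
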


Corollary \ref{cor:prop_1} (i) shows that limit risk dominant play independent of initial conditions holds for a canonical informational environment. Corollary \ref{cor:prop_1} (ii) studies a social learning setting where at time $1$, each agent observes an iid private signal, and at each subsequent time, each agent is matched at random with another player in the population and observes the other player's information. This provides a simple informational setting where, if the initial signal is informative enough, non risk-dominant coordination arises. 

\textbf{Comparative statics.} Theorem \ref{thrm:char} identified a sub- and supercritical regime for how learning speeds impact limit play. We now develop comparative statics on how the set of equilibria vary with primitives. Recall that limit play is jointly pinned down by the limit threshold $\mu^*_{\infty}$ and the fundamental $\theta$: $\lambda_{\infty} = \mathbbm{1}(\theta \geq \mu^*_{\infty})$. Thus, $\mu^*_{\infty} = c-1/2$ corresponds to risk-dominance; as $|\mu^*_{\infty} - (c - 1/2)|$ becomes large, the non-risk dominant limit profile is selected on a wider set of states.

\begin{proposition}[Comparative statics]\label{prop: comp stat}  
The following are comparative statics on $|\mu^*_{\infty} - (c - 1/2)|$: 
    \begin{itemize}
        \item[(i)] \textbf{Extreme initial play induces extreme limit play.} For any $|\lambda_0-(1/2)|\leq |\lambda_0'-(1/2)|$ and any $\Sigma$, the limit threshold is further from the risk-dominance threshold when initial play is more extreme: 
        \[
        \Big|\mu_\infty^*(\Sigma,\lambda_0)-(c-1/2)\Big|\leq \Big|\mu_\infty^*(\Sigma,\lambda_0')-(c-1/2)\Big|.
        \]

        \item[(ii)] \textbf{Monotonicity in learning speeds.} 
        For any $\Sigma \geq \Sigma'$, the limit threshold is further from the risk-dominance threshold when learning is quicker: 
        \[
        \Big|\mu_\infty^*(\Sigma,\lambda_0)-(c-1/2)\Big|\leq \Big|\mu_\infty^*(\Sigma',\lambda_0)-(c-1/2)\Big|.
        \]
        
        \item[(iii)] \textbf{Early learning rates matter more.}
        Fix any $\Sigma = (\sigma_t^2)_t$ and define the pairwise $s\leftrightarrow s'$ permutation $\Sigma^{s\leftrightarrow s'} := (\tilde\sigma_t^2)_t$ where
        \[
        \tilde\sigma_t^2 = \begin{cases}
            \sigma_t^2 &\text{if $t \neq s, s'$} \\
            \sigma_{s'}^2 &\text{if $t = s$} \\
            \sigma_s^2 &\text{if $t = s'$}. 
        \end{cases}
        \]
        Then,  the limit threshold is further from the risk-dominance threshold when informative signals are front-loaded: for any $s \geq s'$ and $\sigma_s^2 \geq \sigma_{s'}^2$,
        \[
        \Big|\mu_\infty^*(\Sigma,\lambda_0)-(c-1/2)\Big|\leq \Big|\mu_\infty^*(\Sigma^{s\leftrightarrow s'},\lambda_0)-(c-1/2)\Big|.
        \]
    \end{itemize}
\end{proposition}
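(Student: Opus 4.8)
All three parts run on one engine: the monotonicity and ``contraction toward $c-1/2$'' structure of the one-step map implicit in the belief-threshold law of motion \eqref{main-text-mu-lom}. For $k \ge 0$ let $f_k$ be the map sending $\mu^*_{t-1}$ to the unique solution $\mu^*_t$ of $\mu^*_t = \mu^*_{t-1} + k\,\Phi^{-1}(c-\mu^*_t)$, so that along any trajectory $\mu^*_t = f_{k_t}(\mu^*_{t-1})$ with $k_t := \sqrt{\eta^2_{t-1}+\eta^2_t}$ and boundary $\mu^*_1 = c-\lambda_0$. I would first record four properties of $f_k$, each a short implicit‑function computation: (a) \emph{well‑posedness / state monotonicity}: $\mu'\mapsto \mu'-k\Phi^{-1}(c-\mu')$ is strictly increasing on $(c-1,c)$ (its derivative is $1+k\cdot(\text{positive})$), so $f_k$ is well defined, strictly increasing, with derivative in $(0,1]$; (b) \emph{fixed point}: $f_k(c-1/2)=c-1/2$ for every $k$ (since $\Phi^{-1}(1/2)=0$), hence $f_k$ maps $[c-1/2,\infty)$ into itself and $(-\infty,c-1/2]$ into itself, and $|f_k(\mu)-(c-1/2)|\le |\mu-(c-1/2)|$; (c) \emph{monotonicity in $k$ toward the fixed point}: $f_k(\mu)$ is weakly decreasing in $k$ when $\mu\ge c-1/2$ and weakly increasing when $\mu\le c-1/2$, i.e.\ $|f_k(\mu)-(c-1/2)|$ is weakly decreasing in $k$; (d) \emph{reflection symmetry}: the isometry $R(\mu):=2c-1-\mu$ (which fixes $c-1/2$) satisfies $R\circ f_k=f_k\circ R$ (using $\Phi^{-1}(1-p)=-\Phi^{-1}(p)$) and carries the boundary $c-\lambda_0$ to $c-(1-\lambda_0)$, so $|\mu^*_\infty(\cdot)-(c-1/2)|$ is invariant under $\lambda_0\mapsto 1-\lambda_0$; in particular I may assume the relevant $\lambda_0$'s lie in $[0,1/2]$, so $\mu^*_1=c-\lambda_0\ge c-1/2$ and, by (b), the whole trajectory stays in $[c-1/2,\infty)$. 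Since $(\mu^*_t)_t$ is monotone in $t$ (noted after \eqref{main-text-mu-lom}) and bounded between $c-1/2$ and $c-\lambda_0$, $\mu^*_\infty$ exists and inequalities between trajectories pass to it.

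\emph{Part (i).} With $\Sigma$ fixed, the \emph{same} composition $F_t:=f_{k_t}\circ\cdots\circ f_{k_2}$ — a composition of increasing maps all fixing $c-1/2$ — takes $\mu^*_1$ to $\mu^*_t$ in both instances, and $|\mu^*_1(\lambda_0)-(c-1/2)|=|\lambda_0-1/2|$. By (d) take $\lambda_0,\lambda_0'\le 1/2$; then $|\lambda_0-1/2|\le|\lambda_0'-1/2|$ becomes $c-1/2\le c-\lambda_0\le c-\lambda_0'$, and monotonicity of $F_t$ together with invariance of $[c-1/2,\infty)$ gives $c-1/2\le \mu^*_t(\lambda_0)\le \mu^*_t(\lambda_0')$, hence $|\mu^*_t(\lambda_0)-(c-1/2)|\le|\mu^*_t(\lambda_0')-(c-1/2)|$ for every $t$; let $t\to\infty$.

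\emph{Parts (ii)–(iii).} Both reduce to a \emph{monotone coupling} in the coefficients: if $k_t\ge k_t'$ for all $t$ and $\mu^*_1=\mu_1^{*\prime}\ge c-1/2$, then $c-1/2\le \mu^*_t\le \mu_t^{*\prime}$ for all $t$, whence $|\mu^*_\infty-(c-1/2)|\le|\mu_\infty^{*\prime}-(c-1/2)|$; the inductive step uses (a) to get $f_{k_t}(\mu^*_{t-1})\le f_{k_t}(\mu_{t-1}^{*\prime})$ and (c) to get $f_{k_t}(\mu_{t-1}^{*\prime})\le f_{k_t'}(\mu_{t-1}^{*\prime})$, both lying $\ge c-1/2$ by (b). For (ii), $\Sigma\ge\Sigma'$ gives $\eta^{-2}_t(\Sigma)=\sum_{u\le t}\sigma_u^{-2}\le\sum_{u\le t}(\sigma_u')^{-2}=\eta^{-2}_t(\Sigma')$, so $\eta^2_t(\Sigma)\ge\eta^2_t(\Sigma')$ and $k_t(\Sigma)\ge k_t(\Sigma')$ for all $t$; the coupling (after (d) puts $\lambda_0\le 1/2$) yields the stated inequality. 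For (iii), the $s\leftrightarrow s'$ swap changes cumulative precision only at dates $s'\le t<s$, where $\eta^{-2}_t(\Sigma)-\eta^{-2}_t(\Sigma^{s\leftrightarrow s'})=\sigma_{s'}^{-2}-\sigma_{s}^{-2}\ge 0$ — this is exactly where $s\ge s'$ and $\sigma^2_s\ge\sigma^2_{s'}$ enter: front‑loading the more precise of the two signals weakly raises cumulative precision at every intermediate date. Hence $\eta^2_t(\Sigma)\le\eta^2_t(\Sigma^{s\leftrightarrow s'})$ and $k_t(\Sigma)\le k_t(\Sigma^{s\leftrightarrow s'})$ for all $t$; since both trajectories start at the common $\mu^*_1=c-\lambda_0$, the coupling lemma compares $|\mu^*_\infty(\cdot)-(c-1/2)|$ for the two processes and shows front‑loading informative signals moves the limit threshold (weakly) away from $c-1/2$, which is (iii).

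\emph{Main obstacle.} The only place real work is needed is property (c): showing that increasing $k$ drives $f_k(\mu)$ weakly toward $c-1/2$, and that each $f_k$ leaves the half‑lines $[c-1/2,\infty)$ and $(-\infty,c-1/2]$ invariant — and establishing these as genuine inequalities (not merely statements about derivatives) so they can be chained through all $t$. I would handle this by writing $f_k$ as the inverse of $g_k(\mu'):=\mu'-k\Phi^{-1}(c-\mu')$ and arguing at the level of the increasing family $\{g_k\}_k$: $\partial_k g_k(\mu')=-\Phi^{-1}(c-\mu')$ has the sign of $\mu'-(c-1/2)$, so on $\{\mu'\ge c-1/2\}$ the $g_k$ increase in $k$ and therefore their inverses decrease in $k$, and symmetrically on $\{\mu'\le c-1/2\}$; half‑line invariance is immediate from $g_k(c-1/2)=c-1/2$ plus monotonicity. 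Everything else is routine — the limit step is monotone convergence plus continuity of $|\cdot-(c-1/2)|$, and the edge cases $\sigma_1=0$ (Example~\ref{ex:comp_info}) or $s'=1$ cause no trouble because the boundary value $\mu^*_1=c-\lambda_0$ is imposed directly rather than through \eqref{main-text-mu-lom}.
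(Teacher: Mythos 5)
Your argument is correct in substance and runs on the same underlying monotonicity as the paper's, but it is packaged differently. The paper changes variables to $\gamma_t:=(\mu_t^*-\mu_{t-1}^*)/\sqrt{\eta_t^2+\eta_{t-1}^2}$, derives the recursion $A_t\gamma_t=\Phi(\gamma_{t-1})-\Phi(\gamma_t)$, and proves the comparative statics by induction using monotonicity of the implicit map $F_A$ in both $A$ and its argument (Lemmas \ref{lem:gamma_lom}--\ref{lem: F_A_properties}); you instead work directly in $\mu^*$-space with the one-step maps $f_{k}=g_k^{-1}$, $g_k(\mu')=\mu'-k\Phi^{-1}(c-\mu')$, and a monotone coupling. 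The two are essentially equivalent (your property (c) is the mirror image of Lemma \ref{lem: F_A_properties}(i), and your state-monotonicity (a) is Lemma \ref{lem: F_A_properties}(ii)), but your version buys two small things: the reflection symmetry $R(\mu)=2c-1-\mu$ handles the case where $\lambda_0$ and $\lambda_0'$ lie on opposite sides of $1/2$ in part (i), which the paper's ``wlog $\lambda_0>1/2$'' treatment leaves implicit; and the derivation of $k$-monotonicity at the level of the increasing family $\{g_k\}$ is a clean way to get the inequality version of (c) without the implicit function theorem. Your verification of (a)--(d) and the passage to the limit are all sound.

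The one point you must not gloss over is part (iii). Your computation is correct: for $s'\le t<s$, $\eta_t^{-2}(\Sigma)-\eta_t^{-2}(\Sigma^{s\leftrightarrow s'})=\sigma_{s'}^{-2}-\sigma_s^{-2}\ge 0$, hence $\eta_t^2(\Sigma)\le\eta_t^2(\Sigma^{s\leftrightarrow s'})$ and $k_t(\Sigma)\le k_t(\Sigma^{s\leftrightarrow s'})$ for all $t$. Feeding this into your coupling lemma yields
\[
\Big|\mu_\infty^*(\Sigma^{s\leftrightarrow s'},\lambda_0)-(c-1/2)\Big|\;\le\;\Big|\mu_\infty^*(\Sigma,\lambda_0)-(c-1/2)\Big|,
\]
which is the \emph{reverse} of the displayed inequality in Proposition \ref{prop: comp stat}(iii), so you cannot simply conclude ``which is (iii).'' What you have proved is consistent with the proposition's prose (under $\sigma_s^2\ge\sigma_{s'}^2$ with $s\ge s'$, $\Sigma$ is the process with the informative signal front-loaded, and front-loading pushes the limit threshold \emph{away} from $c-1/2$) and with the extreme example of a precise first-period signal freezing the dynamics near $\lambda_0$-dominance; the displayed formula and the line ``$\eta_t^2\ge\tilde\eta_t^2$'' in the paper's own proof of (iii) contain a compensating sign error. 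State explicitly that you are proving the inequality with the two processes' roles exchanged (equivalently, with the hypothesis $\sigma_s^2\le\sigma_{s'}^2$), rather than asserting agreement with the formula as printed.
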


Proposition \ref{prop: comp stat} is intuitive. Part (i) states that the non-risk dominant play expands (i.e., on a wider set of states the non-risk dominant equilibrium is played in the limit) whenever initial play is more extreme on either side of $1/2$; Part (ii) states that non-risk dominant play expands when signals are more precise at each time step (although, per Theorem \ref{thrm:char}, this only has bite in the superquadratic regime); Part (iii) shows that non-risk dominant play expands when the learning process is permuted so that more informative signals arrive early, and less informative signals arrive late. Intuitively, learning and aggregate behavior at early time steps are more important because they propagate into the future.

\textbf{Possible limit play.} Theorem~\ref{thrm:char} states that when learning is sufficiently fast, the limit action is non-risk dominant at some open interval of states. Proposition \ref{prop: comp stat} established how this interval varies as primitives change. This raises the following question: what kinds of limit play can be rationalized by some learning rate? 

Say that a profile is $\lambda_0$-dominant if, under common-knowledge of $\theta$, players play as-if they best-respond to the conjecture that $\lambda_0$ proportion of players play the risky action.\footnote{This is equivalent to the definition of $p$-dominance in \cite*{morris1995p}.} Our next result establishes that any profile between $\lambda_0$-dominance and risk-dominance can be rationalized by some learning speed.

\begin{proposition}[The image of learning rates]
\label{prop:implementable} 
For any $\lambda_0 \in (0,1)$,
\[
\bigcup_{\Sigma} \lambda_\infty(\theta|\Sigma,\lambda_0) = \bigcup_{\substack{\text{$\mu^*$ between} \\ \text{$c - \lambda_0$ and $c-1/2$}}} \mathbbm{1}(\theta \geq \mu^*).  
\]
In particular, for any threshold $\mu^*$ between $c-\lambda_0$ and $c-(1/2)$, there exists some learning process $\Sigma\geq 0$ such that:
\[
\lambda_\infty(\theta|\Sigma,\lambda_0)=\mathbbm{1}(\theta \geq \mu^*) \quad \text{a.e.}
\]
\end{proposition}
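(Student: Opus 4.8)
The plan is to establish the two inclusions separately, with the "$\subseteq$" direction essentially free and the "$\supseteq$" direction requiring an explicit construction. For the easy containment: by the analysis of \eqref{main-text-mu-lom}, the sequence $(\mu_t^*)_t$ is monotone and starts at $\mu_1^* = c - \lambda_0$. Since $\Phi^{-1}(c - \mu_t^*)$ has the sign of $(c-1/2) - \mu_t^*$ (because $\Phi^{-1}(1/2)=0$ and $\lambda_{t-1}$ is increasing toward its limit), the increments $\mu_t^* - \mu_{t-1}^*$ always point toward $c-1/2$ and never overshoot it—so $\mu_\infty^*$ lies weakly between $c-\lambda_0$ and $c-1/2$. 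This shows $\bigcup_\Sigma \lambda_\infty(\theta\mid\Sigma,\lambda_0) \subseteq \bigcup_{\mu^* \text{ between } c-\lambda_0,\, c-1/2} \mathbbm 1(\theta \geq \mu^*)$, and also that $\lambda_\infty(\theta) = \mathbbm 1(\theta \geq \mu_\infty^*)$ a.e. by the limit-learning argument already given in the text.

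For the reverse containment, fix a target threshold $\mu^*$ strictly between $c-\lambda_0$ and $c-1/2$ (the two endpoints are handled by Examples \ref{ex:comp_info} and, in the limiting sense, \ref{ex:brd}, or directly). I would work with the posterior-precision sequence $(\eta_t^{-2})_t$ as the primitive, as the text permits, and design it so that $(\mu_t^*)_t$ converges to exactly $\mu^*$. From \eqref{main-text-mu-lom}, $\mu_t^* - \mu_{t-1}^* = \sqrt{\eta_{t-1}^2 + \eta_t^2}\,\Phi^{-1}(c - \mu_t^*)$. The idea is to pick increments $d_t := \mu_t^* - \mu_{t-1}^* \geq 0$ summing to $\mu^* - (c-\lambda_0)$ (so the target is reached), with $d_t \to 0$; then solve for the required $\eta_t^2$ recursively. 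Concretely, given $\mu_{t-1}^*$ and a desired $\mu_t^*$ (hence a desired $d_t$ and a known value $\Phi^{-1}(c-\mu_t^*) \neq 0$), the equation $d_t = \sqrt{\eta_{t-1}^2 + \eta_t^2}\,\Phi^{-1}(c-\mu_t^*)$ determines $\eta_t^2 = \big(d_t / \Phi^{-1}(c-\mu_t^*)\big)^2 - \eta_{t-1}^2$, and one must check this is positive and that $\eta_t^{-2}$ is increasing (i.e. $\sigma_t^{-2} = \eta_t^{-2} - \eta_{t-1}^{-2} \geq 0$) and diverges (Assumption \ref{assumption: limit learning}). Choosing $d_t$ to decay slowly—e.g. $d_t \propto 1/t$ appropriately normalized and truncated so the partial sums hit the target exactly in the limit—makes $\eta_t^2$ shrink fast enough for divergence of precision while keeping $\sqrt{\eta_{t-1}^2+\eta_t^2}$ comparable to $\eta_{t-1}$, so monotonicity of the precision sequence is maintained. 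A clean alternative: first choose any divergent precision sequence, define the induced $(\mu_t^*)_t$, observe it converges to some limit in $[c-1/2, c-\lambda_0]$ (or the mirrored interval), and then use a continuity/intermediate-value argument in a one-parameter family of learning processes interpolating between the global-games process of Example \ref{ex:brd} (limit $c-1/2$) and near-perfect learning of Example \ref{ex:comp_info} (limit $c-\lambda_0$) to hit any intermediate $\mu^*$.

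I expect the main obstacle to be the bookkeeping in the explicit construction: ensuring simultaneously that (a) $\eta_t^2 > 0$ at every step, (b) $\eta_t^{-2}$ is nondecreasing so that it is a legitimate learning process, (c) $\eta_t^{-2} \to \infty$, and (d) the increments $d_t$ sum to precisely $\mu^* - (c-\lambda_0)$ in the limit and not more (no overshoot past $\mu^*$, which would be self-correcting but complicates the argument) and not less. The sign bookkeeping—tracking whether $\mu^*$ is above or below $c-1/2$ relative to $c-\lambda_0$, i.e. whether $\lambda_0 \gtrless 1/2$—also needs care, but is symmetric. The cleaner continuity argument sidesteps (a)–(c) entirely by only ever using manifestly valid learning processes, at the cost of needing to verify that $\mu_\infty^*$ depends continuously on the interpolation parameter, which follows from monotone dependence (Proposition \ref{prop: comp stat}(ii)) plus the fact that the two extremes are achieved; I would likely present that version.
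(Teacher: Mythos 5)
Your easy inclusion and your overall strategy for the hard inclusion (prescribe the path of thresholds, back out the required $A_t=\sqrt{\eta_t^2+\eta_{t-1}^2}$ from \eqref{eq: mu star}, then recover a legitimate noise process) are the same as the paper's. But both of your routes have a genuine gap at exactly the point where the paper's proof does its real work. In the version you say you would present (the interpolation argument), the inference ``$\mu_\infty^*$ is monotone in the parameter and attains both extremes, hence attains everything in between'' is false: a monotone function that hits both endpoints of an interval can still skip interior values. Since $\mu_\infty^*$ is a pointwise limit (over $t$) of continuous functions of the parameter, it is only semicontinuous for free, and monotone semicontinuous functions do jump. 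Establishing genuine continuity of $\mu_\infty^*$ along the family is a uniform-in-$t$ stability statement about an infinite recursion whose qualitative behavior changes at the sub/super-quadratic boundary; it is not supplied by Proposition \ref{prop: comp stat}(ii) and is at least as hard as the direct construction.

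In your explicit construction, the unresolved step is the one the paper spends most of its proof on: given a target sequence $A_t^2=\eta_t^2+\eta_{t-1}^2$, recovering $(\eta_t^2)_t$ is the \emph{alternating} recursion $\eta_{t+1}^2=A_{t+1}^2-\eta_t^2$, so whether $\eta_t^2$ stays nonnegative and nonincreasing depends on the entire tail of $(A_s^2)_{s>t}$ and on the choice of $\eta_1^2$, not on local comparability of $\sqrt{\eta_{t-1}^2+\eta_t^2}$ with $\eta_{t-1}$. The paper handles this by (i) choosing the $\gamma_t$ (equivalently the increments) to decay \emph{geometrically} toward $\gamma^*=\Phi^{-1}(c-\mu^*)$, so that $A_t\asymp 2^{-t}$ with controlled constants, (ii) proving the second-difference condition $A_{t+1}^2-A_{t+2}^2\leq A_t^2-A_{t+1}^2$ (which needs $\gamma_2$ close to $\gamma^*$ so that the ratio of the upper to lower geometric constants is below $\sqrt{17/8}$), and (iii) pinning $\eta_1^2=\sum_{s\geq1}(A_{2s}^2-A_{2s+1}^2)$ via a nested-intervals argument; condition (ii) is exactly what makes the intervals nest. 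Your proposed $d_t\propto 1/t$ cannot work as stated ($\sum_t 1/t$ diverges, so it cannot sum to the finite target $\mu^*-(c-\lambda_0)$, and truncating forces $A_t=0$, i.e.\ $\eta_t=\eta_{t-1}=0$, which is infeasible), and a generic summable choice of $(d_t)$ gives no guarantee that any admissible $\eta_1^2$ exists. So the construction needs the specific decay structure and the nested-intervals step, neither of which appears in your sketch.
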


The intuition behind Proposition \ref{prop:implementable} is that for any time $t$, we can arbitrarily restrict the evolution of the system by choosing a sufficiently precise signal for time $t$. That is, any level of coordination on the non-risk dominant action can be rationalized by some sequence of signal noises.\footnote{Recall that $\mu_1^*=c-\lambda_0$ and $\mu_t^*$ evolves monotonically away.} Combined with our comparative static on learning speeds, this also implies $\lim_{\Sigma \downarrow 0} \lambda_{\infty}(\theta|\Sigma,\lambda_0) = \mathbbm{1}(\theta \geq c - \lambda_0)$---that is, in the vanishing noise limit along the whole path, limit play converges to that of the complete environment with all players best-responding to $\lambda_0$. 

Proposition \ref{prop:implementable} also highlights the following asymmetry between initial play and learning rates: when initial play is large ($\lambda_0 > 1/2$ large), slow learning hurts efficiency since the set of fundamental states at which the payoff-dominant action is played shrinks to $\{\theta \geq c - 1/2\}$; conversely, when initial play is small ($\lambda_0 < 1/2$ small), \emph{slow learning improves efficiency} since on the set of fundamentals $\{\theta \geq c - 1/2\}$ the payoff-dominant action is played in the limit.

\section{Transition dynamics}\label{sec:transition}
We have thus far focused on limit play. We now turn to studying the connection between the speed of learning and the time path of aggregate play. We will consider the iid signal case such that the learning process $\Sigma(\sigma)$ is parametrized by the standard deviation of each period's signal $\sigma>0$. We focus on the iid case because it is a canonical model of learning, and makes results easier to state since $\Sigma$ is now parametrized by a scalar; nonetheless, we expect that the qualitative features of our results extend beyond this case. Within this iid learning environment, recall that Corollary~\ref{cor:prop_1} (i) implies that the limit action is risk dominant for all fundamentals and initial conditions. The following result shows that the path of aggregate play exhibits qualitatively different behavior depending on learning rates. 


\begin{proposition}\label{prop: dynamics} 
Fix any $\lambda_0 \in (0,1)$ and $\theta$ between $c-1/2$ and $c-\lambda_0$. Define $\overline{\beta} :=1+|c-(1/2)-\theta|/2$. For any $\epsilon > 0$, $\alpha \in (0, 1)$, and $\beta \in (1, \overline \beta)$, there exists $\underline \sigma, \overline \sigma > 0$ such that:
    \begin{itemize}
        \item[(i)] \textbf{Sudden transition.} For all $\sigma < \underline \sigma$, there exists $T(\sigma)\geq 1$ such that: for all $1 \leq t \leq \alpha T(\sigma)$, $|\lambda_t(\theta)-NRD(\theta)|<\epsilon$ and for all $t \geq \beta T(\sigma)$, $|\lambda_t(\theta)-RD(\theta)| < \epsilon$; 
        \item[(ii)] \textbf{Gradual transition.} For all $\sigma > \overline \sigma$ and all $t \geq 1$, $|\lambda_{t+1}(\theta) -\lambda_t(\theta)| < \epsilon$.
    \end{itemize}
\end{proposition}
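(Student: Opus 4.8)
The plan is to work directly with the explicit iid law of motion. With iid signals of variance $\sigma^2$, posterior precision at time $t$ is $\eta_t^{-2} = t\sigma^{-2}$, so $\eta_t^2 = \sigma^2/t$, and $\sqrt{\eta_{t-1}^2+\eta_t^2} = \sigma\sqrt{(2t-1)/(t(t-1))} \approx \sigma\sqrt{2/t}$ for large $t$. Thus (\ref{main-text-mu-lom}) becomes $\mu_t^* = \mu_{t-1}^* + \sigma\sqrt{2/t}\,(1+o(1))\cdot\Phi^{-1}(c-\mu_t^*)$ with $\mu_1^* = c-\lambda_0$. Since aggregate play is $\lambda_t(\theta) = \Phi\big((\theta-\mu_t^*)/\eta_t\big) = \Phi\big(\sqrt{t}(\theta-\mu_t^*)/\sigma\big)$, the behavior of $\lambda_t(\theta)$ relative to the $0/1$ extremes is controlled by the sign and magnitude of $\sqrt{t}(\theta-\mu_t^*)$. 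The key fact I would first establish (via Theorem~\ref{thrm:char}(i), or directly) is that $\mu_t^* \to c-1/2$ monotonically, and crucially I need the \emph{rate}: summing the increments, $\mu_t^* - (c-1/2)$ behaves like a quantity whose distance from the fixed point decays, and I'd show $|\mu_t^* - (c-1/2)|$ is eventually $\Theta(1/\sqrt{t})$ up to logarithmic factors — or more precisely I want to pin down when $\sqrt t\,|\mu_t^* - (c-1/2)|$ is large versus small. Actually the cleaner route: track $v_t := \sqrt{t}(\theta - \mu_t^*)$ and show it satisfies an approximate recursion; since $\theta - (c-1/2)$ is a fixed nonzero constant of definite sign, and $\mu_t^* \to c-1/2$, for large $t$ we have $v_t \approx \sqrt t (\theta - (c-1/2)) + \sqrt t((c-1/2)-\mu_t^*)$, and I must compare the two terms.

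**The two regimes.**

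For part (ii), gradual transition: I would bound the one-step change $|\lambda_{t+1}(\theta)-\lambda_t(\theta)|$ directly. Writing $\lambda_t(\theta) = \Phi(\sqrt t(\theta-\mu_t^*)/\sigma)$, the increment is controlled by (a) the change in $\mu_t^*$, which is $O(\sigma/\sqrt t)$ hence gives an argument-change $O(1)$ times $\sqrt{t+1}/\sqrt t \approx 1$... I need to be careful: the $\sqrt t/\sigma$ prefactor fights the $\sigma$ smallness. The point is that for $\sigma$ \emph{large}, the argument $\sqrt t(\theta-\mu_t^*)/\sigma$ is small in magnitude uniformly (since $|\theta - \mu_t^*|$ stays bounded and can even be shown $O(\sigma/\sqrt t)\cdot(\text{something})$ — no, $\theta-\mu_t^*$ tends to $\theta-(c-1/2)\ne 0$). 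So instead: $\lambda_{t+1}-\lambda_t$ telescopes and I bound it by $\sup|\Phi'|\cdot|\Delta(\text{argument})|$; the argument change between consecutive times is $\frac{\sqrt{t+1}-\sqrt t}{\sigma}(\theta-\mu_{t+1}^*) + \frac{\sqrt t}{\sigma}(\mu_t^*-\mu_{t+1}^*)$; the first piece is $O(1/(\sigma\sqrt t))$, the second is $\frac{\sqrt t}{\sigma}\cdot O(\sigma/\sqrt t)\cdot\Phi^{-1}(c-\mu_{t+1}^*)$ and since $\mu_{t+1}^*\to c-1/2$, $\Phi^{-1}(c-\mu_{t+1}^*)\to 0$, so this is $o(1)$; choosing $\sigma$ large makes everything $<\epsilon$. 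I'd need a uniform-in-$t$ version, handling small $t$ separately (finitely many terms, each made small by large $\sigma$ since $\lambda_1 = \Phi((\theta-c+\lambda_0)/\sigma)\to 1/2$ and successive $\mu_t^*$ move little). For part (i), sudden transition: with $\sigma$ small, $\mu_t^*$ barely moves per step but there are many steps; I'd show $\mu_t^*$ stays within $\epsilon'$ of $c-\lambda_0$ for $t$ up to some $T_1(\sigma)$ with $T_1\to\infty$, during which $\lambda_t(\theta)$ is within $\epsilon$ of $NRD(\theta)$ because $\theta-\mu_t^*$ has the sign of $\theta-(c-\lambda_0)$ and $\sqrt t/\sigma\to\infty$; then $\mu_t^*$ crosses $\theta$ and subsequently (for $t\ge \beta T(\sigma)$, using the rate at which $\mu_t^*$ traverses from $\theta$ to $c-1/2$) $\sqrt t(\theta-\mu_t^*)/\sigma\to-\infty$, forcing $\lambda_t(\theta)\to RD(\theta)$.

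**The main obstacle.**

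The crux — and the reason for the specific constant $\overline\beta = 1 + |c-1/2-\theta|/2$ — is controlling the \emph{timing} $T(\sigma)$ of the crossing and showing the transition window $[\alpha T(\sigma), \beta T(\sigma)]$ works with $\beta$ strictly below $\overline\beta$. I expect this to require a fairly precise asymptotic analysis of the cumulative increments $\mu_t^* - \mu_1^* = \sum_{s=2}^t \sigma\sqrt{2/s}\,(1+o(1))\Phi^{-1}(c-\mu_s^*)$. Near $t=1$, while $\mu_s^*\approx c-\lambda_0$, $\Phi^{-1}(c-\mu_s^*)\approx\Phi^{-1}(\lambda_0)$ is roughly constant, so $\mu_t^* - (c-\lambda_0) \approx \sigma\Phi^{-1}(\lambda_0)\sum_{s\le t}\sqrt{2/s} \approx 2\sqrt 2\,\sigma\Phi^{-1}(\lambda_0)\sqrt t$; this gives the crossing time where $\mu_t^*$ reaches $\theta$: $\sqrt{T}\sim \frac{\theta-(c-\lambda_0)}{2\sqrt2\,\sigma\Phi^{-1}(\lambda_0)}$, so $T(\sigma)=\Theta(1/\sigma^2)\to\infty$. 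The delicate part is that after crossing, the drift $\Phi^{-1}(c-\mu_t^*)$ slows as $\mu_t^*$ nears $c-1/2$, and I must show that nonetheless $\sqrt t(\theta - \mu_t^*)\to-\infty$ on a timescale matching $\beta T(\sigma)$ with $\beta<\overline\beta$ — essentially a Grönwall/comparison argument bounding $\mu_t^*$ between explicit sub- and super-solutions of the continuous-time ODE approximation $\dot m = \sigma\sqrt{2/s}\,\Phi^{-1}(c-m)$, i.e. $\frac{dm}{d\sqrt s} = 2\sqrt2\,\sigma\,\Phi^{-1}(c-m)$. Getting the constant $\overline\beta$ exactly right — rather than some loose multiple — will be the main technical work; I'd handle it by linearizing $\Phi^{-1}(c-m)$ near both $c-\lambda_0$ and $c-1/2$ and patching, keeping the dominant $\sqrt t$-scaling explicit, and absorbing all lower-order and $o(1)$ terms into the freedom afforded by $\epsilon$, $\alpha<1$, and $\beta<\overline\beta$.
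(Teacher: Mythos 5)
Your skeleton matches the paper's: define $T(\sigma)$ as the time at which $\mu_t^*$ crosses $\theta$, show $T(\sigma)\to\infty$ as $\sigma\downarrow 0$, control $\lambda_t(\theta)=\Phi\bigl(\sqrt{t}(\theta-\mu_t^*)/\sigma\bigr)$ by telescoping the increments of $\mu^*$ (noting that the $\sigma$ in each increment cancels against the $\sigma$ in the denominator), and for part (ii) bound the one-step change of $\lambda_t$ by $\sup|\Phi'|$ times the change in the argument. Where you diverge is in your diagnosis of the ``main obstacle.'' The constant $\overline\beta$ is not a sharp threshold requiring linearization, sub/super-solutions, or a Gr\"onwall comparison with the ODE $dm/d\sqrt{s}=2\sqrt2\,\sigma\Phi^{-1}(c-m)$. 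The paper's entire treatment of the post-crossing window rests on one elementary, $\sigma$-free inequality: since the increments $\mu_{s+1}^*-\mu_s^*$ are decreasing (Lemma~\ref{lemma: mu star limit}(iii)) and sum to at most $1$, we have $\mu_{s+1}^*-\mu_s^*\le 1/s$, hence $\mu_{\beta T(\sigma)}^*-\mu_{T(\sigma)}^*\le\sum_{s=T(\sigma)}^{\beta T(\sigma)-1}s^{-1}\le\beta-1$. The condition $\beta<\overline\beta$ then guarantees $c-\mu_{\beta T(\sigma)}^*\ge c-\theta-(\overline\beta-1)>1/2$, i.e.\ the drift $\Phi^{-1}(c-\mu_{s+1}^*)$ stays bounded below by a fixed positive constant on $[T(\sigma),\beta T(\sigma)]$, so the telescoped sum contributes roughly $(\beta-1)T(\sigma)\cdot\sqrt{2/T(\sigma)}\cdot\text{const}\to\infty$ and $\lambda_t\to 0$ uniformly on $t\ge\beta T(\sigma)$. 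The same telescoping with the lower bound $\Phi^{-1}(c-\mu_{s+1}^*)\ge\Phi^{-1}(c-\theta)$ for $s+1\le T(\sigma)$ gives the uniform bound on $[1,\alpha T(\sigma)]$, with no need for your explicit asymptotic $\mu_t^*-(c-\lambda_0)\approx 2\sqrt2\,\sigma\Phi^{-1}(\lambda_0)\sqrt t$. Two smaller notes: your guess that $|\mu_t^*-(c-1/2)|$ is eventually $\Theta(1/\sqrt t)$ is wrong (the linearized recursion $d_{t+1}\approx d_t/(1+\text{const}\cdot\sigma/\sqrt t)$ gives stretched-exponential decay $\exp(-C\sigma\sqrt t)$), but this is not load-bearing since you only need the sign and divergence of $\sqrt t(\theta-\mu_t^*)$; and for part (ii), the same bound $\mu_{t+1}^*-\mu_t^*<1/t$ gives $\frac{\sqrt{t+1}}{\sigma}(\mu_{t+1}^*-\mu_t^*)\le\sqrt2/\sigma$ uniformly in $t$, which resolves in one line the uniformity issue you flag about handling small $t$ separately.
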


Proposition \ref{prop: dynamics} says that, within the regime of iid learning, sufficiently fast learning yields a \emph{sudden} transition from aggregate non-risk dominant to risk dominant play that occurs around some finite time. Conversely, sufficiently slow learning yields sufficiently \emph{gradual} transitions in aggregate play.

\begin{figure}[h!]  
\centering
\captionsetup{width=0.9\linewidth}
    \caption{Mechanism underlying transition regimes} \includegraphics[width=0.9\textwidth]{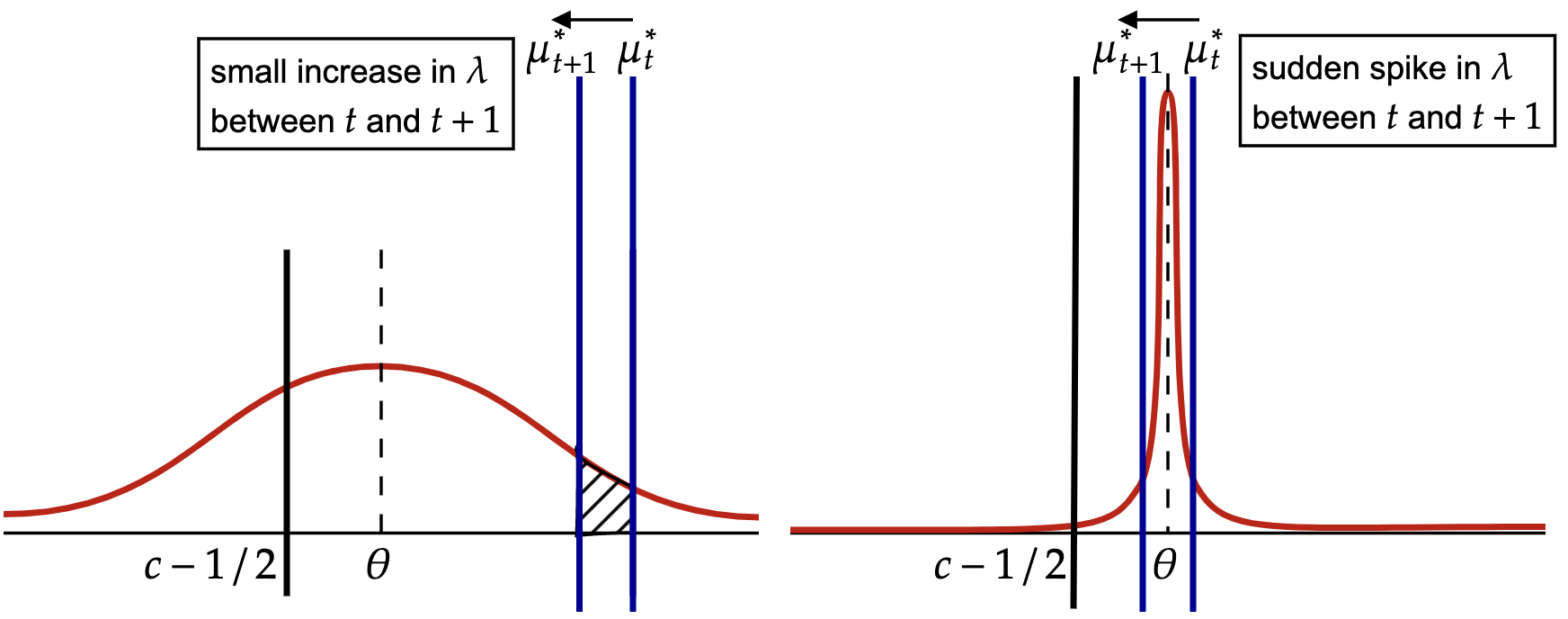}
     \label{fig:dynamic_intuition}
\end{figure}

The intuition behind Proposition \ref{prop: dynamics} is illustrated in Figure \ref{fig:dynamic_intuition}. In the case where learning is slow, the distribution of posterior means is dispersed so as the threshold changes, a relatively small mass of new agents take the risk dominant action. Conversely, when learning is fast, the distribution of posterior means contracts around $\theta$ so that there is a sudden spike in agents taking the risk dominant action when it is crossed.

Figure \ref{fig:dynamics} numerically illustrates the time path of play $(\lambda_t)_t$ for iid learning. Panel (a) shows the dynamics when $\sigma = 1$; panel (b) shows the dynamics when $\sigma = 0.01$. In particular, when learning is fast, an analyst would observe next-to no change in the measure of agents playing the risky action before a sudden spike. 

\begin{figure}[h!]
    \centering
      \caption{Gradual vs sudden transition to risk dominant play} 
      \begin{quote}
    \vspace{-1em}
    \centering 
    \footnotesize Parameters: $\theta = 0.6$, $\lambda_0 = 0.2$, $c = 1$.
    \end{quote} 
      \vspace{-0.8em}
    \subfloat[$\sigma = 1$: gradual transition]{\includegraphics[width=0.45\textwidth]{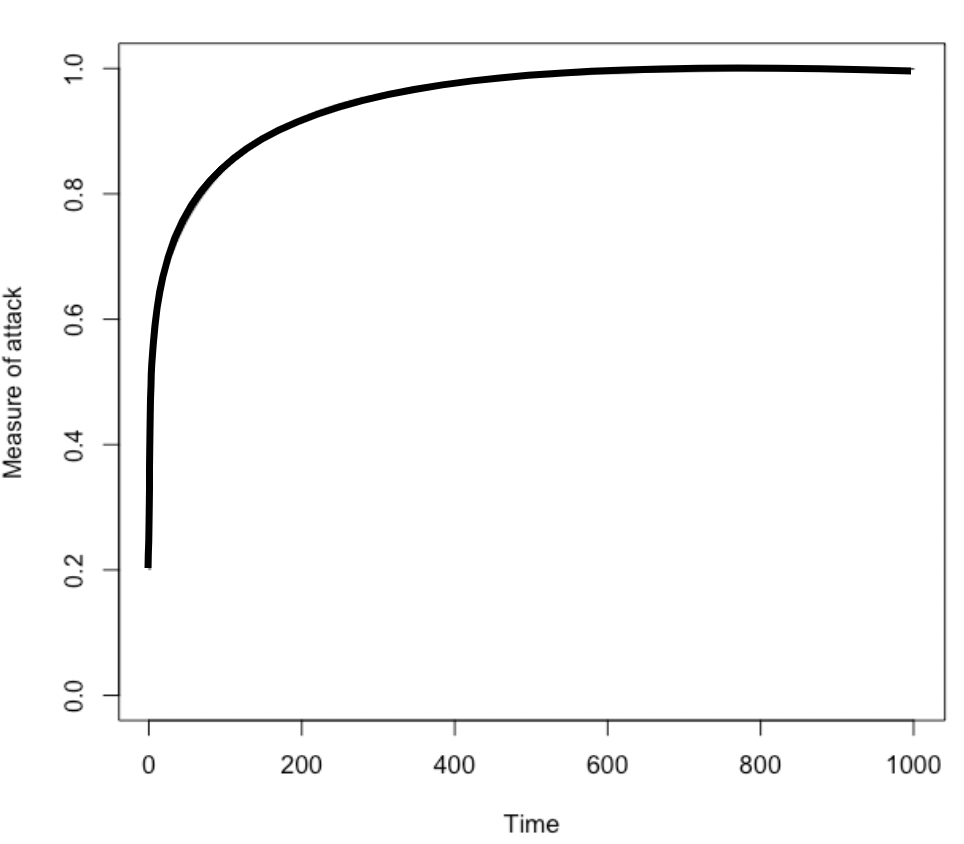}} 
    \subfloat[$\sigma = 0.01$: sudden phase transition]{\includegraphics[width=0.45\textwidth]{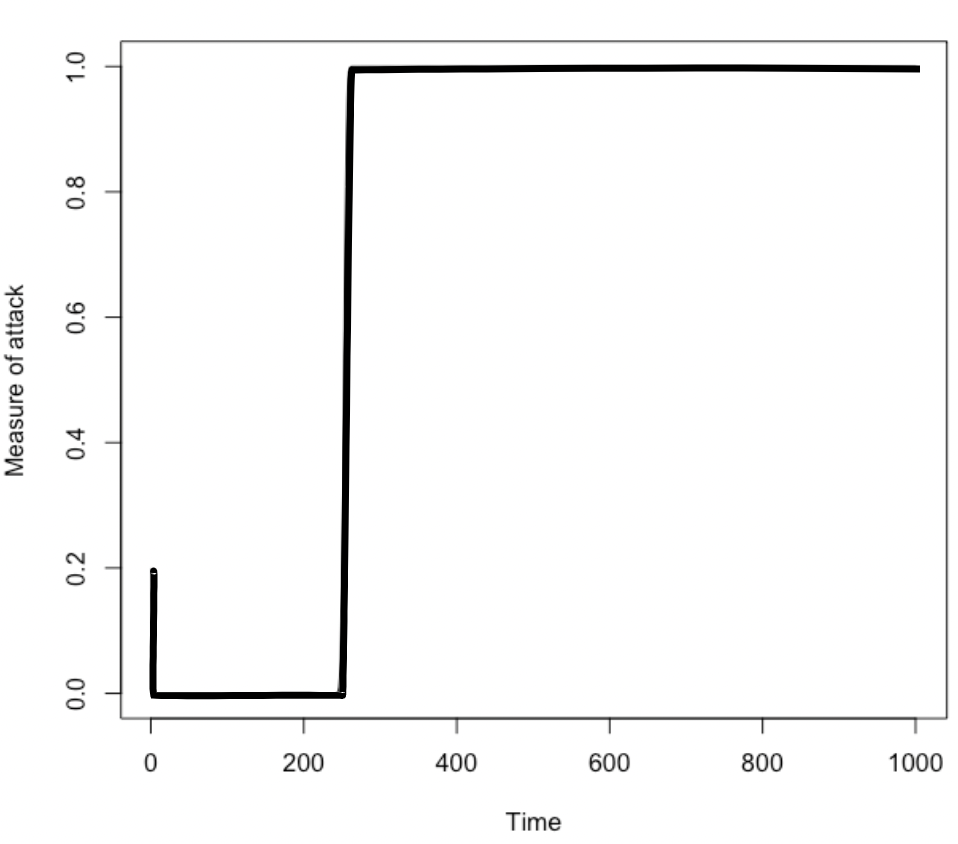}} 
    \label{fig:dynamics}
\end{figure}

Proposition \ref{prop: dynamics} has a subtle connection to the extant literature on coordination games. A common theme in this literature is that heterogeneity---in preferences, beliefs, etc.---across players can induce uniqueness in static supermodular games \citep{morris2006heterogeneity}. A dynamic version of this insight is that with lots of hetrogeneity, as aggregate incentives shift, the set of equilibria vary smoothly---this is analogous to Part (i) of Proposition \ref{prop: dynamics}; by contrast, with little heterogeneity, the set of equilbiria can vary discontinuously---this is analogous to Part (ii) of Proposition \ref{prop: dynamics}. The key insight of Proposition \ref{prop: dynamics} is that learning rates induce time-varying hetrogeneity in beliefs which, in turn, matters for transition dynamics.\footnote{\cite{koh2022speed} relates heterogeneity to the speed of contagion in a networked population. This can be equivalently interpreted (see, e.g., \cite{morris1997interaction}) as best-response dynamics or interim deletion of strictly dominated strategies where each player's beliefs and higher-order beliefs are encoded by the network. However, \cite{koh2022speed} study speed of contagion for graphs/beliefs which are held fixed i.e., in the incomplete information game interpretation, there is no learning.}  

Economically, Proposition \ref{prop: dynamics} offers a simple alternate interpretation of sudden changes in the time-series of aggregate play: when learning is precise, players have very similar beliefs by the time the ``tipping point'' arrives and the state has increased to the point that is optimal for such players to play the risky action. Sudden spikes have been commonly interpreted as ``equilibrium shifts'' driven by shocks or new public information \citep{chwe2013rational,morris2019crises} interacting with uniform rank beliefs. Our result offers an alternate explanation driven by private learning and time-varying heterogeneity.

\section{Concluding Remarks}
We have developed a simple but novel model of inertial coordination games. This allowed us to combine and unify insights from the literature on best-response dynamics in complete-information coordination games \citep{crawford1995adaptive} and global games \citep{carlsson1993global}. 
We think that inertial coordination games serve as a natural description of many  `real world' coordination environments with both shrinking uncertainty about fundamentals (e.g., solvency of a bank, strength of a currency regime, quality of a networked product) as well as an endogenously changing state shaped by past play (e.g., past withdrawals, short positions, past sales). At the same time, we also showed that inertial coordination games share much of the game-theoretic logic underpinning coordination and learning in games.\footnote{We show in Appendix \ref{app:small lags} that for a fixed speed of learning (as a function of physical time), limit equilibrium predictions coincide as inertia vanishes (gap between time periods shrink). In this regard, inertial coordination games generalizes the setting with contemporaneous coordination motives. Nonetheless for any fixed level of inertia, the sub/super-quadratic boundary continues to hold.}
In this regard, we offer a different perspective on `history versus expectations'  \citep{krugman1991history} in economic modeling---in the presence of inertia, expectations \emph{about} history shape coordination outcomes. In particular, the speed at which players grow confident is a crucial determinant for whether coordination succeeds---the risk-dominant action is played in the limit if and only if posterior precisions grow sub-quadratically (Theorem \ref{thrm:char}). This delivered a suite of rich predictions of whether and to what extent initial shocks can propagate. In the super-quadratic regime, we showed that any profile of limit play between risk- and initial play-dominance can be rationalized by \emph{some} learning speed. In the sub-quadratic regime, we further identified qualitatively different behavior in the path of aggregate play (Proposition \ref{prop: dynamics}): when learning is slow, transition dynamics are gradual; when learning is fast, sudden.



    


\newpage
\appendix 

\begin{center}
    \large 
    \textbf{APPENDIX TO INERTIAL COORDINATION GAMES}
\end{center}

\section{Omitted proofs} \label{appendix:proofs}
Section~\ref{appendix: dynamics} contains several useful lemmas that characterize the dynamics of our model. Sections~\ref{appendix: thrm1}-\ref{appendix: a5} collect all remaining proofs. We sometimes assume $\lambda_0 > \frac{1}{2}$ without loss of generality because symmetric arguments apply to $\lambda_0 < \frac{1}{2}$. In particular, the proofs of Theorem~\ref{thrm:char} and Proposition~\ref{prop:implementable} consist of several lemmas, as described in Figure~\ref{fig:proof_map}.

\begin{figure}[h!]  
\centering
\captionsetup{width=1\linewidth}
    \caption{Roadmap for proofs of Theorem \ref{thrm:char} and Proposition \ref{prop:implementable}} \includegraphics[width=1\textwidth]{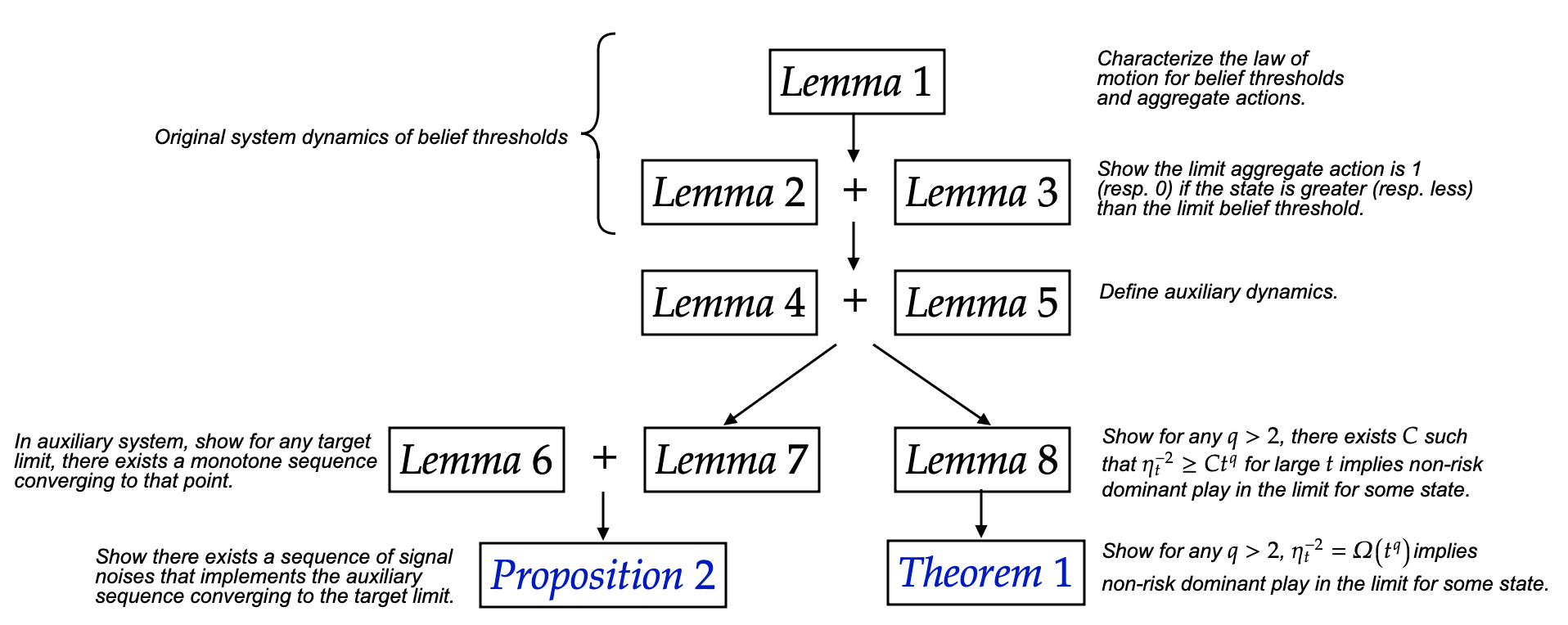}
    \label{fig:proof_map}
\end{figure}

\subsection{System Dynamics} \label{appendix: dynamics} 
Lemmas~\ref{lem: lom}-\ref{lemma: mu star limit} show that in equilibrium, agents use a cutoff strategy with respect to posterior means. Lemmas~\ref{lem:gamma_lom}-\ref{lem: A_gamma_1_properties} study useful auxiliary system dynamics.

\begin{lemma}
    \label{lem: lom}
    At each time $t\geq 1$ and for each player $i \in [0,1]$, $a_{it}=1$ if and only if $\mu_{it}\geq \mu_t^*$, where $\mu_1^*:=c - \lambda_0$ and for all $t>1$, $\mu_t^*$ is the unique solution to the nonlinear difference equation
    \begin{equation}
        \label{eq: mu star}
        \tag{$\mu^*$-LoM}
        \mu_t^*+\Phi\left(\frac{\mu_t^*-\mu_{t-1}^*}{\sqrt{\eta_{t-1}^2+\eta_t^2}}\right)=c
    \end{equation} 
    Hence, for each $t\geq 1$, 
    \begin{equation}
        \label{eq: lam LoM}
        \tag{$\lambda$-LoM}
        \lambda_t(\theta)=\Phi\left(\frac{\theta-\mu_t^*}{\eta_t}\right)
    \end{equation}
\end{lemma}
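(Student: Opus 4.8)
The plan is to prove the three assertions — cutoff behavior in posterior means, the difference equation (\ref{eq: mu star}), and the aggregate-play formula (\ref{eq: lam LoM}) — simultaneously by induction on $t$, which also pins down the (essentially unique, i.e.\ unique up to a measure-zero set of indifferent agents) equilibrium. The inductive hypothesis I would carry forward is that in equilibrium $\lambda_{t-1}$ is a bounded, weakly increasing, deterministic function of $\theta$ of the form (\ref{eq: lam LoM}). The base case $t=1$ is immediate: since $\lambda_0$ is common knowledge, agent $i$'s expected payoff from the risky action is $\Ex[\theta+\lambda_0\mid\mathcal F_{i1}] = \mu_{i1}+\lambda_0$, which is at least $c$ exactly when $\mu_{i1}\ge c-\lambda_0=:\mu_1^*$.

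The step from the cutoff $\mu_t^*$ to aggregate play $\lambda_t(\cdot)$ — used in both the base case and the inductive step — rests on the continuum law of large numbers: conditional on $\theta$, each posterior mean $\mu_{it}=\eta_t^2\sum_{s\le t}\sigma_s^{-2}x_{is}$ is Gaussian with mean $\theta$ and variance $\eta_t^4\sum_{s\le t}\sigma_s^{-2}=\eta_t^2$, so the cross-sectional distribution of posterior means at state $\theta$ is $N(\theta,\eta_t^2)$ almost surely. Hence $\lambda_t(\theta)=\Pr_{\mu\sim N(\theta,\eta_t^2)}(\mu\ge\mu_t^*)=\Phi((\theta-\mu_t^*)/\eta_t)$, which is increasing in $\theta$, closing the induction on monotonicity.

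For the inductive step, assume the claim through time $t-1\ge 1$, so that $\theta_t=\theta+\lambda_{t-1}(\theta)$ with $\lambda_{t-1}$ increasing and $[0,1]$-valued. Then agent $i$'s expected risky payoff at time $t$ is $V(\mu_{it}):=\mu_{it}+\Ex_{\theta\sim N(\mu_{it},\eta_t^2)}[\lambda_{t-1}(\theta)]$. Since $m\mapsto N(m,\eta_t^2)$ is increasing in the first-order stochastic dominance order and $\lambda_{t-1}$ is increasing and bounded, $V$ is continuous, strictly increasing, and diverges to $\pm\infty$ as $\mu_{it}\to\pm\infty$; thus there is a unique threshold $\mu_t^*$ with $V(\mu_t^*)=c$ and the optimal action is $a_{it}=\mathbbm 1\{\mu_{it}\ge\mu_t^*\}$. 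To recover (\ref{eq: mu star}), substitute $\lambda_{t-1}(\theta)=\Phi((\theta-\mu_{t-1}^*)/\eta_{t-1})$ into $V(\mu_t^*)=c$ and apply the Gaussian identity $\Ex_{Z\sim N(m,v^2)}[\Phi(aZ+b)]=\Phi\big((am+b)/\sqrt{1+a^2v^2}\big)$ with $a=1/\eta_{t-1}$, $b=-\mu_{t-1}^*/\eta_{t-1}$, $m=\mu_t^*$, $v^2=\eta_t^2$, which collapses the integral to $\Phi\big((\mu_t^*-\mu_{t-1}^*)/\sqrt{\eta_{t-1}^2+\eta_t^2}\big)$. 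Uniqueness of the solution to (\ref{eq: mu star}) is then immediate because its left-hand side is continuous and strictly increasing in $\mu_t^*$ with range $\Re$.

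I do not expect a deep obstacle: the substantive content is (i) the monotone-comparative-statics argument turning the best response into a cutoff, and (ii) the bookkeeping for the continuum law of large numbers that licenses treating $\lambda_{t-1}$ as a deterministic increasing function of $\theta$; the reduction to (\ref{eq: mu star}) is just the Gaussian convolution identity. The one point requiring care is that the inductive hypothesis must propagate \emph{monotonicity} of $\lambda_{t-1}(\cdot)$ — not merely boundedness — since monotonicity of $V$, and hence the existence of a single well-defined threshold rather than an interval of best responses, relies on it.
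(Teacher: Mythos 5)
Your proposal is correct and follows essentially the same route as the paper: induction on $t$, a base case using common knowledge of $\lambda_0$, the continuum law of large numbers to get the cross-sectional distribution $N(\theta,\eta_t^2)$ of posterior means, and monotonicity of the expected risky payoff to obtain a unique cutoff; your direct use of the Gaussian identity $\Ex_{Z\sim N(m,v^2)}[\Phi(aZ+b)]=\Phi\bigl((am+b)/\sqrt{1+a^2v^2}\bigr)$ is just the paper's computation of $\mathbb{P}_{it}(\mu_{j,t-1}\geq\mu_{t-1}^*)$ via the predictive distribution $\mu_{j,t-1}\mid\mathcal{F}_{it}\sim N(\mu_{it},\eta_t^2+\eta_{t-1}^2)$ written in a different order. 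Your remark that monotonicity of $\lambda_{t-1}(\cdot)$ must be carried through the induction is a fair point of care, though in both arguments it is automatic once $\lambda_{t-1}$ takes the explicit form $\Phi((\theta-\mu_{t-1}^*)/\eta_{t-1})$.
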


\begin{proof}[Proof of Lemma \ref{lem: lom}]
    We prove Lemma~\ref{lem: lom} by induction. At $t=1$, each agent $i$'s posterior belief is given by $\theta|\mathcal{F}_{i1}\sim N(x_{i1},\sigma_1^2)$. Hence, she takes $a_{i1}=1$ if and only if $\mathbb{E}_{i1}[\theta_1]=\mu_{i1}=x_{i1}\geq c - \lambda_0=\mu_1^*$. Since $x_{i1}|\theta\sim N(\theta,\sigma_1^2)$,
    \begin{align*}
        \lambda_1(\theta)=\mathbb{P}_\theta(x_{i1}\geq\mu_1^*)=1-\Phi\left(\frac{\mu_1^*-\theta}{\sigma_1}\right)=\Phi\left(\frac{\theta-\mu_1^*}{\sigma_1}\right).
    \end{align*}
    Hence, \eqref{eq: lam LoM} holds for $t=1$.
    
    Now, for $t>1$, suppose that for all agents $j$, $a_{j,t-1}=1$ if and only if $\mu_{j,t-1}\geq \mu_{t-1}^*$. By standard Gaussian-Gaussian Bayesian updating, we have $\theta|\mathcal{F}_{it} \sim N(\mu_{it},\eta_t^2)$, where
    \begin{align*}
    \mu_{it}=\sum_{s=1}^t \frac{\sigma_s^{-2}}{\eta_t^{-2}} x_{is} \quad \text{and} \quad \eta_t^2=\left(\sum_{s=1}^t \sigma_s^{-2}\right)^{-1}.
    \end{align*}

Conditional on $\theta$, the time-$(t-1)$ distribution of posterior means is $\mu_{j,t-1}|\theta \sim N(\theta,\eta_{t-1}^2)$. Since agents' information sets are independent conditional on $\theta$, this implies that, under agent $i$'s time-$t$ posterior belief, the marginal distribution of time-$(t-1)$ posterior means is $\mu_{j,t-1}|\mathcal{F}_{it} \sim N(\mu_{it},\eta_t^2+\eta_{t-1}^2)$. Hence, by the inductive step, $i$ takes $a_{it}=1$ if and only if
\begin{align*}
    c\leq \mathbb{E}_{it}\left[\theta+\lambda_{t-1}(\theta) \right]=\mu_{it}+\mathbb{P}_{it}(\mu_{j,t-1}\geq \mu_{t-1}^*)=\mu_{it}+\Phi\left(\frac{\mu_{it}-\mu_{t-1}^*}{\sqrt{\eta_{t-1}^2+\eta_t^2}}\right)
\end{align*}

Let $F(\mu_{it})$ be the RHS function. Since $F$ is continuous and strictly increasing in $\mu_{it}$, and $F(c)>c$ and $F(c-1)<c$ hold, $F(\mu_{it})=c$ has a unique solution $\mu_t^*\in(c-1,c)$. Hence, there exists a unique cutoff $\mu_t^*$, defined implicitly by \eqref{eq: mu star} at time $t$, such that $i$ takes $a_{it}=1$ if and only if $\mu_{it}\geq \mu_t^*$. 

Finally, since $\mu_{it} | \theta \sim N(\theta,\eta_t^2)$, we see that $\lambda_t(\theta)=\mathbb{P}_\theta(\mu_{it}\geq \mu_t^*)=\Phi\left(\frac{\theta-\mu_t^*}{\eta_t}\right)$ and hence \eqref{eq: lam LoM} holds at time $t$, as desired.
\end{proof}


\begin{lemma}
    \label{lemma: mu star ss}
    $(\mu_t^*)_t$ has a unique steady state $\mu_{ss}^*=c-1/2$. For any $t\in\mathcal{T}$, $\mu_{t+1}^*>\mu_t^*$ if and only if $\mu_t^*<\mu_{ss}^*$.
\end{lemma}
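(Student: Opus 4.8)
The plan is to solve for the steady state directly and then extract the monotonicity from \eqref{eq: mu star}, closing with a short case analysis to fix a one-step index mismatch. First I would compute the steady state: setting $\mu_{t-1}^* = \mu_t^* = \mu^*$ in \eqref{eq: mu star} and using $\Phi(0) = 1/2$ gives $\mu^* + 1/2 = c$, so $\mu^* = c - 1/2 =: \mu_{ss}^*$. This is the unique steady state since the displayed equation is linear in $\mu^*$ (equivalently, this follows from the uniqueness of the solution to \eqref{eq: mu star} established in Lemma~\ref{lem: lom}). Note that $\mu_{ss}^*$ does not depend on $t$ precisely because the coefficient $\sqrt{\eta_{t-1}^2+\eta_t^2}$ multiplies a term that vanishes at a steady state.

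Next I would read off a trichotomy from \eqref{eq: mu star} written at time $t+1$, namely
\[
\Phi\!\left(\frac{\mu_{t+1}^* - \mu_t^*}{\sqrt{\eta_t^2 + \eta_{t+1}^2}}\right) = c - \mu_{t+1}^*.
\]
Since $\Phi$ is strictly increasing and $\sqrt{\eta_t^2 + \eta_{t+1}^2} > 0$, the sign of $\mu_{t+1}^* - \mu_t^*$ equals the sign of $(c - \mu_{t+1}^*) - \Phi(0) = \mu_{ss}^* - \mu_{t+1}^*$. This gives the three equivalences $\mu_{t+1}^* > \mu_t^* \iff \mu_{t+1}^* < \mu_{ss}^*$; $\ \mu_{t+1}^* = \mu_t^* \iff \mu_{t+1}^* = \mu_{ss}^*$; and $\mu_{t+1}^* < \mu_t^* \iff \mu_{t+1}^* > \mu_{ss}^*$.

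Finally I would convert the condition on $\mu_{t+1}^*$ into the stated condition on $\mu_t^*$. For the ``if'' direction, assume $\mu_t^* < \mu_{ss}^*$ and suppose toward a contradiction that $\mu_{t+1}^* \le \mu_t^*$: if $\mu_{t+1}^* = \mu_t^*$, the trichotomy forces $\mu_{t+1}^* = \mu_{ss}^*$, contradicting $\mu_{t+1}^* = \mu_t^* < \mu_{ss}^*$; if $\mu_{t+1}^* < \mu_t^*$, it forces $\mu_{t+1}^* > \mu_{ss}^*$, contradicting $\mu_{t+1}^* < \mu_t^* < \mu_{ss}^*$; hence $\mu_{t+1}^* > \mu_t^*$. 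For the ``only if'' direction, if $\mu_{t+1}^* > \mu_t^*$ then the trichotomy gives $\mu_{t+1}^* < \mu_{ss}^*$, so $\mu_t^* < \mu_{t+1}^* < \mu_{ss}^*$. The only step that is more than bookkeeping — and the one I would flag as the main (if minor) obstacle — is exactly this conversion: \eqref{eq: mu star} naturally controls the sign of $\mu_{t+1}^* - \mu_t^*$ through $\mu_{t+1}^*$, while the lemma is phrased through $\mu_t^*$, and the contradiction argument above (which implicitly uses that the update map $\mu_t^* \mapsto \mu_{t+1}^*$ is order-preserving and crosses the diagonal only at $\mu_{ss}^*$) is what bridges the two.
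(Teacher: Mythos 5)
Your proposal is correct and follows essentially the same route as the paper: both solve for the steady state by setting $\mu_{t-1}^*=\mu_t^*$ in \eqref{eq: mu star}, both derive the key equivalence $\mu_{t+1}^*>\mu_t^* \iff \mu_{t+1}^*<\mu_{ss}^*$ from the law of motion, and both then transfer the condition from $\mu_{t+1}^*$ to $\mu_t^*$ (the paper via the contrapositive $\mu_{t+1}^*\le\mu_t^* \iff \mu_{t+1}^*\ge\mu_{ss}^* \Rightarrow \mu_t^*\ge\mu_{ss}^*$, you via an equivalent case-by-case contradiction). The ``obstacle'' you flag at the end is handled by exactly the bookkeeping you describe and needs no appeal to order-preservation of the update map.
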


\begin{proof}
    Assume $\mu_{t-1}^*=c-(1/2)$. Since $\mu_t^*:=c-(1/2)$ solves \eqref{eq: mu star} and it has a unique solution, $c-(1/2)$ is a steady state. To see that it's the unique steady state, assume $\mu_{t-1}^*=\mu_t^*$. Then, \eqref{eq: mu star} reduces to $\mu_t^*+\Phi(0)=c$ and hence $\mu_t^*=c-(1/2)$, as desired.
    
    Next, fix any $t\in\mathcal{T}$. We have
    \begin{align*}
        \mu_{t+1}^*>\mu_t^* \Leftrightarrow\quad\Phi\left(\frac{\mu_{t+1}^*-\mu_t^*}{\sqrt{\eta_t^2+\eta_{t+1}^2}}\right)>\frac{1}{2} 
        \Leftrightarrow\quad\mu_{t+1}^*<\mu_{ss}^*. \tag{From \eqref{eq: mu star}}
    \end{align*}
    Hence, if $\mu_{t+1}^*>\mu_t^*$, $\mu_t^*<\mu_{t+1}^*<\mu_{ss}^*$. Conversely, if $\mu_{t+1}^*\leq\mu_t^*$, which is equivalent to $\mu_{t+1}^*\geq\mu_{ss}^*$, we have $\mu_t^*\geq\mu_{t+1}^*\geq\mu_{ss}^*$.
\end{proof}


\begin{lemma}
    \label{lemma: mu star limit}
    \begin{itemize}
        \item[(i)] If $\lambda_0>\frac{1}{2}$, then $\mu_t^* \uparrow \mu_\infty^*\in(\mu_1^*,\mu_{ss}^*]$.
        \item[(ii)] If $\lambda_0<\frac{1}{2}$, then $\mu_t^* \downarrow \mu_\infty^*\in[\mu_{ss}^*,\mu_1^*)$.
        \item[(iii)] $|\mu^*_{t + 1} - \mu_t^*|$ is decreasing in $t$.\footnote{Note that parts (i),(ii), and (iii) of this Lemma hold without Assumption~\ref{assumption: limit learning}.}
        \item[(iv)] $\lambda_\infty(\theta)=\mathbbm{1}[\theta \geq \mu_\infty^*]$ a.e.
    \end{itemize}
\end{lemma}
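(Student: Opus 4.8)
\textbf{Proof plan for Lemma~\ref{lemma: mu star limit}.} The plan is to handle parts (i)--(iii) first, since they follow directly from the monotonicity structure already established in Lemmas~\ref{lem: lom} and \ref{lemma: mu star ss}, and then derive part (iv) as a consequence of limit learning (Assumption~\ref{assumption: limit learning}) together with (i)--(ii).

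For part (i), suppose $\lambda_0 > 1/2$, so that $\mu_1^* = c - \lambda_0 < c - 1/2 = \mu_{ss}^*$. The plan is to show by induction that $\mu_t^* < \mu_{ss}^*$ for all $t$: if $\mu_{t}^* < \mu_{ss}^*$, then Lemma~\ref{lemma: mu star ss} gives $\mu_{t+1}^* > \mu_t^*$, and since the direction $\mu_{t+1}^* > \mu_t^*$ is equivalent (again by Lemma~\ref{lemma: mu star ss}) to $\mu_{t+1}^* < \mu_{ss}^*$, the induction closes. Hence $(\mu_t^*)_t$ is strictly increasing and bounded above by $\mu_{ss}^*$, so it converges to some $\mu_\infty^* \in (\mu_1^*, \mu_{ss}^*]$. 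Part (ii) is the mirror image: when $\lambda_0 < 1/2$ we have $\mu_1^* > \mu_{ss}^*$, the analogous induction shows $\mu_t^* > \mu_{ss}^*$ for all $t$ with $(\mu_t^*)_t$ strictly decreasing, hence convergence to $\mu_\infty^* \in [\mu_{ss}^*, \mu_1^*)$. (The case $\lambda_0 = 1/2$ is the already-identified steady state and needs no argument.)

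For part (iii), I would work with a single case, say $\lambda_0 > 1/2$ (the other is symmetric), where $\mu_{t+1}^* - \mu_t^* > 0$ for all $t$. Write $d_t := \mu_{t+1}^* - \mu_t^*$. From \eqref{eq: mu star} applied at $t+1$ and at $t$, subtract to get $d_t + \Phi\!\big(d_t / \sqrt{\eta_t^2 + \eta_{t+1}^2}\big) - \Phi\!\big(d_{t-1}/\sqrt{\eta_{t-1}^2 + \eta_t^2}\big) = 0$ after using the indifference equation to cancel the $c$'s. The map $x \mapsto x + \Phi(x/s)$ is strictly increasing for any $s > 0$, and since posterior precisions are nondecreasing we have $\sqrt{\eta_t^2 + \eta_{t+1}^2} \le \sqrt{\eta_{t-1}^2 + \eta_t^2}$, so $\Phi(\cdot/\sqrt{\eta_t^2+\eta_{t+1}^2})$ dominates $\Phi(\cdot/\sqrt{\eta_{t-1}^2+\eta_t^2})$ pointwise on the positive reals; combining these monotonicity facts forces $d_t \le d_{t-1}$. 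I expect this comparison step — carefully tracking which Gaussian CDF is steeper and in which direction the inequality on the $\eta$'s cuts — to be the one place where a sign slip is easy, so that is the step I would write out most carefully.

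For part (iv), recall from \eqref{eq: lam LoM} that $\lambda_t(\theta) = \Phi\big((\theta - \mu_t^*)/\eta_t\big)$. By Assumption~\ref{assumption: limit learning}, $\eta_t \to 0$, and by parts (i)--(ii), $\mu_t^* \to \mu_\infty^*$. For any fixed $\theta \neq \mu_\infty^*$, the argument $(\theta - \mu_t^*)/\eta_t$ diverges to $+\infty$ if $\theta > \mu_\infty^*$ and to $-\infty$ if $\theta < \mu_\infty^*$ (using that $\mu_t^* - \mu_\infty^*$ is $o(1)$ while $1/\eta_t \to \infty$), so $\lambda_t(\theta) \to \mathbbm{1}[\theta \geq \mu_\infty^*]$ pointwise for all $\theta \neq \mu_\infty^*$, i.e. for almost every $\theta$. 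One small caveat to address in the write-up: one should confirm the ratio $(\theta - \mu_t^*)/\eta_t$ genuinely diverges rather than merely having a divergent numerator over a divergent denominator — this holds because $\theta - \mu_t^* \to \theta - \mu_\infty^* \neq 0$ is eventually bounded away from $0$, so the ratio is (eventually) bounded below in absolute value by a positive constant times $1/\eta_t$.
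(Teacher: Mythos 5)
Your proposal is correct and follows essentially the same route as the paper: parts (i)--(ii) via the monotone-bounded-sequence argument built on Lemma~\ref{lemma: mu star ss}, and part (iv) by observing that $\theta-\mu_t^*$ is eventually bounded away from zero while $1/\eta_t\to\infty$ under limit learning. The only (cosmetic) divergence is in part (iii): the paper reads off $|\mu_{t+1}^*-\mu_t^*|=\sqrt{\eta_t^2+\eta_{t+1}^2}\,\bigl|\Phi^{-1}(c-\mu_{t+1}^*)\bigr|$ directly from \eqref{eq: mu star} and notes that both factors are positive and decreasing in $t$, whereas you subtract consecutive indifference conditions and run a comparison/contradiction argument using the strict monotonicity of $x\mapsto x+\Phi(x/s)$ together with $\sqrt{\eta_t^2+\eta_{t+1}^2}\leq\sqrt{\eta_{t-1}^2+\eta_t^2}$ --- both arguments are valid and rest on the same underlying monotonicity facts.
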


\begin{proof}
\textbf{Parts (i) and (ii).} Wlog, assume $\lambda_0 > \frac{1}{2}$. Since Lemma~\ref{lemma: mu star ss} implies that $(\mu_t^*)_t$ is a strictly increasing sequence bounded above by $\mu_{ss}^*$, it converges to $\mu_\infty^*\in(\mu_1^*,\mu_{ss}^*]$.

\textbf{Part (iii).} Note that \eqref{eq: mu star} implies $|\mu^*_{t + 1} - \mu_t^*| = \sqrt{\eta_t^2 + \eta_{t + 1}^2}\left|\Phi^{-1}(c - \mu_{t + 1}^*)\right|.$ Note that $\sqrt{\eta_t^2 + \eta_{t + 1}^2}$ is non-increasing in $t$. First, assume $\lambda_0 > \frac{1}{2}$. Then, $(c-\mu_{t+1}^*)> \frac{1}{2}$ decreasing in $t$ implies $\Phi^{-1}(c-\mu_{t+1}^*)>0$ is decreasing in $t$, and therefore $|\Phi^{-1}(c-\mu_{t+1}^*)|$ is decreasing in $t$. Hence, $|\mu^*_{t + 1} - \mu_t^*|$ is also decreasing in $t$. 

Next, assume $\lambda_0<\frac{1}{2}$. Then, $(c-\mu_{t+1}^*)< \frac{1}{2}$ increasing in $t$ implies $\Phi^{-1}(c-\mu_{t+1}^*)<0$ increasing in $t$, and therefore $|\Phi^{-1}(c-\mu_{t+1}^*)|$ is decreasing in $t$. Hence, $|\mu^*_{t + 1} - \mu_t^*|$ is also decreasing in $t$. 

\textbf{Part (iv).} Wlog, assume $\lambda_0 > \frac{1}{2}$. First, we show $(\theta-\mu_t^*)/\eta_t\to-\infty$ when $\theta<\mu_\infty^*$. Since $((\theta-\mu_t^*)/\eta_t)_t$ is decreasing and becomes negative for sufficiently large $t$, it should either converge to some negative constant or diverge to $-\infty$. Since $\lim_{t\to\infty}(\theta-\mu_t^*)<0$, $\lim_{t\to\infty}(\theta-\mu_t^*)/\eta_t=-\infty$ must hold by limit learning.

Next, we show $(\theta-\mu_t^*)/\eta_t\to\infty$ when $\theta>\mu_\infty^*$. Lemma~\ref{lemma: mu star limit} (i) implies
\begin{equation*}
    0<\frac{\theta-\mu_\infty^*}{\eta_t}<\frac{\theta-\mu_t^*}{\eta_t}. \tag{$\mu_t^* < \mu_\infty^*$}
\end{equation*}
Since $(\theta-\mu_\infty^*)/\eta_t\to\infty$ as $t\to\infty$, $(\theta-\mu_t^*)/\eta_t\to\infty$ holds.
\end{proof}






\begin{lemma}
\label{lem:gamma_lom}
Define $\gamma_1:=\Phi^{-1}(\lambda_0)$, and for $t\geq 2$, define
\[
\gamma_t:=\frac{\mu_t^*-\mu_{t-1}^*}{\sqrt{\eta_t^2+\eta_{t-1}^2}} \quad \text{and} \quad A_t:=\sqrt{\eta_t^2+\eta_{t-1}^2}>0
\]
Then,
\begin{itemize}
    \item[(i)] $(\gamma_t)_{t\geq 2}$ obeys the law of motion
    \begin{equation}
    \label{eq: gamma_lom}
    \tag{$\gamma$-LoM}
    A_t\gamma_t=\Phi(\gamma_{t-1})-\Phi(\gamma_t)
    \end{equation}
    \item[(ii)] If $\lambda_0>(1/2)$, then $\gamma_t \downarrow \gamma_\infty\geq 0$.
    \item[(iii)] If $\lambda_0<(1/2)$, then $\gamma_t \uparrow \gamma_\infty\leq 0$.
\end{itemize}
\end{lemma}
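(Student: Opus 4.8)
\textbf{Proof plan for Lemma \ref{lem:gamma_lom}.}

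The plan is to derive each part directly from the already-established law of motion \eqref{eq: mu star} and the monotonicity facts in Lemma \ref{lemma: mu star limit}. For part (i), I would start from the definition $\gamma_t := (\mu_t^* - \mu_{t-1}^*)/\sqrt{\eta_t^2 + \eta_{t-1}^2}$, so that $A_t \gamma_t = \mu_t^* - \mu_{t-1}^*$ by construction. The task is then to re-express $\mu_t^* - \mu_{t-1}^*$ in terms of $\Phi(\gamma_{t-1})$ and $\Phi(\gamma_t)$. Rewriting \eqref{eq: mu star} at time $t$ gives $\mu_t^* = c - \Phi(\gamma_t)$, and the same identity at time $t-1$ gives $\mu_{t-1}^* = c - \Phi(\gamma_{t-1})$; here I need to handle the boundary carefully, checking that $\mu_1^* = c - \lambda_0 = c - \Phi(\gamma_1)$ with $\gamma_1 := \Phi^{-1}(\lambda_0)$, which holds by definition. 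Subtracting yields $\mu_t^* - \mu_{t-1}^* = \Phi(\gamma_{t-1}) - \Phi(\gamma_t)$, which is exactly \eqref{eq: gamma_lom}. This part is essentially bookkeeping.

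For parts (ii) and (iii), I would translate the known behavior of $(\mu_t^*)_t$. Take $\lambda_0 > 1/2$ (the case $\lambda_0 < 1/2$ is symmetric). By Lemma \ref{lemma: mu star ss}, $(\mu_t^*)_t$ is strictly increasing and bounded above by $\mu_{ss}^* = c - 1/2$, so $\mu_t^* - \mu_{t-1}^* > 0$ for all $t$, hence $\gamma_t > 0$ since $A_t > 0$; also $\gamma_1 = \Phi^{-1}(\lambda_0) > 0$. To get that $\gamma_t$ is decreasing: from part (iii) of Lemma \ref{lemma: mu star limit}, $|\mu_{t+1}^* - \mu_t^*|$ is decreasing in $t$, and $\sqrt{\eta_t^2 + \eta_{t-1}^2}$ is non-increasing; but that alone does not immediately give monotonicity of the ratio. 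Instead I would use the cleaner route via \eqref{eq: gamma_lom}: since $\Phi$ is increasing, $\gamma_t < \gamma_{t-1} \iff \Phi(\gamma_t) < \Phi(\gamma_{t-1}) \iff A_t \gamma_t > 0$, which holds because both $A_t > 0$ and $\gamma_t > 0$. Thus $(\gamma_t)_{t \geq 2}$ is strictly decreasing and bounded below by $0$, so it converges to some $\gamma_\infty \geq 0$. The $\lambda_0 < 1/2$ case reverses all inequalities: $\gamma_t < 0$, and \eqref{eq: gamma_lom} gives $\gamma_t > \gamma_{t-1}$, so $\gamma_t \uparrow \gamma_\infty \leq 0$.

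The only mild subtlety—and the step I would be most careful about—is the indexing at the boundary between $\gamma_1$ (defined separately as $\Phi^{-1}(\lambda_0)$) and the recursion that is stated for $t \geq 2$: I need to confirm that \eqref{eq: gamma_lom} holds at $t = 2$ with this definition of $\gamma_1$, which it does precisely because $\mu_1^* = c - \Phi(\gamma_1)$. Everything else is a direct consequence of the sign and monotonicity structure already proved for $(\mu_t^*)_t$, so no genuinely new estimate is required; the argument via \eqref{eq: gamma_lom} itself is the most economical way to get the monotonicity of $(\gamma_t)_t$ without re-deriving ratio bounds.
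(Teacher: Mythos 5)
Your proposal is correct and follows essentially the same route as the paper: part (i) by subtracting \eqref{eq: mu star} at consecutive times after noting $\mu_1^*=c-\Phi(\gamma_1)$, and parts (ii)--(iii) via the equivalence $\gamma_{t-1}\geq\gamma_t \iff \gamma_t\geq 0$ read off from \eqref{eq: gamma_lom}. The only cosmetic difference is that you import the sign of $\gamma_t$ from the monotonicity of $(\mu_t^*)_t$ already proved in Lemmas~\ref{lemma: mu star ss}--\ref{lemma: mu star limit}, whereas the paper establishes $\gamma_t\geq 0$ self-containedly by a short contradiction-plus-induction argument using only \eqref{eq: gamma_lom}; both are valid and non-circular.
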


\begin{proof}
\textbf{Part (i).} Subtracting (\ref{eq: mu star}) at time $t-1$ from (\ref{eq: mu star}) at time $t$ yields:
\begin{align*}
    A_t\gamma_t=\mu_t^*-\mu_{t-1}^*=\Big(c-\Phi(\gamma_t)\Big)-\Big(c-\Phi(\gamma_{t-1})\Big)=\Phi(\gamma_{t-1})-\Phi(\gamma_t)
\end{align*}
\textbf{Parts (ii) and (iii).} Given the initial condition $\mu_1^*=c-\lambda_0$, defining $\gamma_1:=\Phi^{-1}(\lambda_0)$ yields $\mu_1^*=c-\Phi(\gamma_1)$. Note that for all $t \geq 2$,
\begin{align*}
    \gamma_{t-1} \geq \gamma_t \iff \Phi(\gamma_{t-1}) \geq \Phi(\gamma_t) \iff \gamma_t\geq 0 \tag{From \eqref{eq: gamma_lom}}
\end{align*}
Wlog, assume $\lambda_0>(1/2)$, then $\gamma_1>0$. Suppose for a contradiction that $\gamma_2<0$: then $\gamma_1<\gamma_2<0$, which is a contradiction since $\gamma_1>0$ by definition. Hence, $\gamma_1\geq \gamma_2\geq 0$. By induction, $(\gamma_t)_t$ is a monotone decreasing sequence bounded below by $0$, so it converges: $\gamma_t \downarrow \gamma_\infty \geq 0$, as desired. 
\end{proof}

Next, we study the function $F_A$ such that \eqref{eq: gamma_lom} implies: $\gamma_t=F_{A_t}(\gamma_{t-1})$. 

\begin{lemma}
\label{lem: F_A_properties}
For any $A>0$, let $F_A: [0,\infty) \rightarrow [0,\infty)$ be implicitly defined as follows:
\[
A F_A(x)=\Phi(x)-\Phi(F_A(x))
\]
Then,
\begin{itemize}
    \item[(i)] Fix any $x>0$. $F_A(x)$ is decreasing in $A$, and $F_A(x) \downarrow 0$ as $A \to \infty$.
    \item[(ii)] Fix any $A>0$. $F_A(x)$ is increasing in $x$. 
\end{itemize}
\end{lemma}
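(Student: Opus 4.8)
The plan is to work directly with the defining identity, treating $F_A(x)$ as the unique zero of a strictly monotone auxiliary function and then reading off both comparative statics by elementary comparison arguments. First I would check that $F_A$ is well-defined on $[0,\infty)$. Fix $A>0$ and $x\geq 0$ and set $G(y):=Ay+\Phi(y)-\Phi(x)$ for $y\in[0,\infty)$. Then $G$ is continuous and strictly increasing (since $A>0$ and $\Phi$ is strictly increasing), with $G(0)=\tfrac12-\Phi(x)\leq 0$, $G(y)\to+\infty$ as $y\to\infty$, and $G(x)=Ax\geq 0$. Hence $G$ has a unique zero, which we call $F_A(x)$, and it lies in $[0,x]$. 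Note that when $x>0$ we have $G(0)=\tfrac12-\Phi(x)<0$, so in that case $F_A(x)>0$ strictly; when $x=0$ the identity forces $F_A(0)=0$.

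For part (ii), fix $A>0$ and take $0\leq x_1<x_2$. Writing $H(y):=Ay+\Phi(y)$, the definition gives $H(F_A(x_1))=\Phi(x_1)<\Phi(x_2)=H(F_A(x_2))$; since $H$ is strictly increasing this forces $F_A(x_1)<F_A(x_2)$, which is the claim (and in fact yields strict monotonicity).

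For the monotonicity in $A$ in part (i), fix $x>0$ and take $0<A<A'$; write $y:=F_A(x)$ and $y':=F_{A'}(x)$, both strictly positive by the remark above. Then $Ay+\Phi(y)=\Phi(x)=A'y'+\Phi(y')$. If $y'\geq y$, then, using $A'>A$ and $y>0$, we get $A'y'+\Phi(y')\geq A'y+\Phi(y)>Ay+\Phi(y)=\Phi(x)$, a contradiction; hence $y'<y$, i.e. $F_A(x)$ is strictly decreasing in $A$. Being decreasing and bounded below by $0$, $F_A(x)$ converges as $A\to\infty$.

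To identify the limit as $0$, I would rearrange the defining identity to $F_A(x)=\bigl(\Phi(x)-\Phi(F_A(x))\bigr)/A\leq\bigl(\Phi(x)-\tfrac12\bigr)/A$, using that $F_A(x)\geq 0$ implies $\Phi(F_A(x))\geq\tfrac12$; letting $A\to\infty$ gives $F_A(x)\downarrow 0$. There is no real obstacle beyond keeping the implicit definition honest; the only point requiring a moment's care is this limit claim, where one needs the a priori bound $\Phi(F_A(x))\geq\tfrac12$ to pin down the rate of decay rather than arguing from monotonicity alone.
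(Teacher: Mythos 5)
Your proof is correct. Parts (i)'s two claims are handled exactly as in the paper: the same contradiction argument for monotonicity in $A$ (the paper also implicitly relies on $F_A(x)>0$, which you make explicit), and the same bound $F_A(x)\leq A^{-1}\bigl(\Phi(x)-\tfrac12\bigr)$ for the limit. The one genuine difference is part (ii): the paper differentiates the defining identity via the implicit function theorem to get $F_A'(x)=\varphi(x)/(A+\varphi(F_A(x)))>0$, whereas you observe that $y\mapsto Ay+\Phi(y)$ is strictly increasing and maps $F_A(x)$ to $\Phi(x)$, so monotonicity of $F_A$ follows by direct comparison. Your route is more elementary (no differentiability needed) and delivers strict monotonicity just as cleanly; the paper's route additionally yields the explicit derivative, though that formula is not used elsewhere. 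Your preliminary well-definedness check (existence and uniqueness of the zero of $G(y)=Ay+\Phi(y)-\Phi(x)$ on $[0,x]$) is a worthwhile addition that the paper leaves implicit.
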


\begin{proof}
\textbf{Part (i).} Fix any $x>0$. For any $A'>A>0$, note that $A' F_{A'}(x)-AF_A(x)=\Phi(F_A(x))-\Phi(F_{A'}(x))$. Suppose for a contradiction that $F_{A'}(x)\geq F_A(x)$. Since $\Phi$ is increasing, the above implies $A' F_{A'}(x)\leq AF_A(x)$, which is a contradiction since $A'>A$. 

Next, we follow the argument of Lemma~\ref{lem:gamma_lom}(ii). Assume $x>0$, and suppose for a contradiction that $x<F_A(x)$. Then, $AF_A(x)=\Phi(x)-\Phi(F_A(x))<0$ and $A>0$ implies $F_A(x)<0<x$, a contradiction. Hence, $x \geq F_A(x)\geq 0$, which implies $\Phi(x)-\Phi(F_A(x)) \in [0,\Phi(x)-(1/2)]$ and hence
\[
F_A(x) \in \Big[0,A^{-1}(\Phi(x)-(1/2))\Big] \Rightarrow F_A(x) \downarrow 0 \quad \text{as } A\to \infty
\]
as desired.

\textbf{Part (ii).} Let $y=F_A(x)$ and use the implicit function theorem to differentiate it with respect to $x$:
\begin{align*}
    Ay'=\varphi(x)-\varphi(y)y' 
    \implies y'=\frac{\varphi(x)}{A+\varphi(y)}>0,
\end{align*}
where $\varphi$ is a standard normal PDF.
\end{proof}



Next, we study the function $A_{\gamma^*}$ such that \eqref{eq: gamma_lom} implies: $F_{A_{\gamma^*}(\gamma_{t-1})}(\gamma_{t-1})=(\gamma^*+\gamma_{t-1})/2$. 

\begin{lemma}
\label{lem: A_gamma_properties}
For any $\gamma^*>0$, let $A_{\gamma^*}: [\gamma^*,\infty) \rightarrow [0,\infty)$ be defined as:
\[
A_{\gamma^*}(x):=\frac{\Phi(x)-\Phi\Big((\gamma^*+x)/2\Big)}{(\gamma^*+x)/2}
\]
Then, there exists $m > \gamma^*$ such that $A_{\gamma^*}(x) > 0$ and $A_{\gamma^*}'(x) > 0$ for all $x \in (\gamma^*, m)$.
\end{lemma}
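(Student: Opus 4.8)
The plan is to analyze the function $A_{\gamma^*}(x) = \frac{\Phi(x) - \Phi((\gamma^*+x)/2)}{(\gamma^*+x)/2}$ near its left endpoint $x = \gamma^*$ and show both positivity and a positive derivative on a right-neighborhood of $\gamma^*$. First I would observe that at $x = \gamma^*$ the numerator is $\Phi(\gamma^*) - \Phi(\gamma^*) = 0$ and the denominator is $\gamma^* > 0$, so $A_{\gamma^*}(\gamma^*) = 0$. Then for $x > \gamma^*$ we have $(\gamma^*+x)/2 \in (\gamma^*, x)$ strictly, so since $\Phi$ is strictly increasing, $\Phi(x) - \Phi((\gamma^*+x)/2) > 0$, and the denominator $(\gamma^*+x)/2 > \gamma^*/2 > 0$; hence $A_{\gamma^*}(x) > 0$ for all $x > \gamma^*$. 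This already gives positivity on all of $(\gamma^*, \infty)$, so the only real work is the derivative condition, which I expect need only hold on a possibly small interval $(\gamma^*, m)$.

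For the derivative, I would compute $A_{\gamma^*}'$ via the quotient rule. Writing $g(x) = \Phi(x) - \Phi((\gamma^*+x)/2)$ and $h(x) = (\gamma^*+x)/2$, we have $g'(x) = \varphi(x) - \tfrac12 \varphi((\gamma^*+x)/2)$ and $h'(x) = \tfrac12$, so $A_{\gamma^*}'(x) = \frac{g'(x) h(x) - g(x) h'(x)}{h(x)^2}$. Since $h(x)^2 > 0$, the sign is that of $N(x) := g'(x)h(x) - \tfrac12 g(x)$. At $x = \gamma^*$: $g(\gamma^*) = 0$ and $g'(\gamma^*) = \varphi(\gamma^*) - \tfrac12\varphi(\gamma^*) = \tfrac12\varphi(\gamma^*) > 0$, and $h(\gamma^*) = \gamma^* > 0$, so $N(\gamma^*) = \tfrac12\varphi(\gamma^*)\cdot\gamma^* > 0$ (using $\gamma^* > 0$). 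Since $N$ is continuous (being built from $\varphi$, $\Phi$, and polynomials), there is an $m > \gamma^*$ with $N(x) > 0$, hence $A_{\gamma^*}'(x) > 0$, for all $x \in (\gamma^*, m)$. Combined with the positivity shown above (which holds in particular on $(\gamma^*, m)$), this proves the claim.

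The main obstacle — really the only subtlety — is being careful that the strict inequality $\gamma^* > 0$ is what makes $N(\gamma^*)$ strictly positive rather than merely nonnegative; if $\gamma^*$ were $0$ the term $\tfrac12\varphi(\gamma^*)\gamma^*$ would vanish and one would need a second-order expansion. Since the lemma hypothesizes $\gamma^* > 0$, this is not an issue, and the continuity argument closes everything cleanly. I would not attempt to get an explicit value of $m$; the statement only asserts existence, so a soft continuity argument suffices. One should also note that $F_A$ does not actually appear in this lemma's statement (the preceding sentence about $F_{A_{\gamma^*}(\gamma_{t-1})}$ is motivation), so the proof need not invoke Lemma~\ref{lem: F_A_properties}, though one could alternatively verify consistency with it as a sanity check.
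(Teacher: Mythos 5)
Your proof is correct and follows essentially the same route as the paper: compute $A_{\gamma^*}'$ by the quotient rule, evaluate at $x=\gamma^*$ to get the strictly positive value $\varphi(\gamma^*)/(2\gamma^*)$ (using $\gamma^*>0$), and conclude by continuity. The only cosmetic difference is that you establish $A_{\gamma^*}(x)>0$ directly from the strict monotonicity of $\Phi$ on all of $(\gamma^*,\infty)$, whereas the paper deduces it on $(\gamma^*,m)$ from $A_{\gamma^*}(\gamma^*)=0$ together with the positive derivative; both are fine.
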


\begin{proof}
Note that
\begin{align*}
    A_{\gamma^*}'(x) = \left(\frac{\gamma^*+x}{2}\right)^{-2}\left(\frac{\gamma^*+x}{2}\left[\varphi(x)-\frac{1}{2}\varphi\left(\frac{\gamma^*+x}{2}\right)\right]-\frac{1}{2}\left[\Phi(x)-\Phi\left(\frac{\gamma^*+x}{2}\right)\right]\right),
\end{align*}
and
\[
A_{\gamma^*}'(\gamma^*)=\frac{\gamma^*\varphi(\gamma^*)}{2(\gamma^*)^2}>0.
\]
Since $A_{\gamma^*}'$ is continuous, there exists $m > \gamma^*$ such that $A_{\gamma^*}'(x) > 0$ for all $x \in [\gamma^*, m)$. Since $A_{\gamma^*}(\gamma^*) = 0$ holds by definition, this implies $A_{\gamma^*}(x) > 0$ for all $x \in [\gamma^*, m)$.
\end{proof}

Finally, we study the function $A_{\gamma_1}$ such that \eqref{eq: gamma_lom} implies: $F_{A_{\gamma_1}(x)}(\gamma_1)=x$. 

\begin{lemma}
\label{lem: A_gamma_1_properties}
For any $\gamma_1>0$, let $A_{\gamma_1}: [0,\gamma_1] \rightarrow [0,\infty)$ be defined as
\[
A_{\gamma_1}(x):=\frac{\Phi(\gamma_1)-\Phi(x)}{x}.
\]
Then, $A_{\gamma_1}(\gamma_1)=0$ and $A_{\gamma_1}(x)$ is decreasing on $x \in [0,\gamma_1]$. Thus, for any fixed $\gamma^*\in (0, \gamma_1)$,
\[
A_{\gamma_1}(x) \uparrow A_{\gamma_1}(\gamma^*)=:\bar{A}^*>0
\]
holds as $x \downarrow \gamma^*$.
\end{lemma}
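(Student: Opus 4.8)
The plan is to verify the three assertions of the lemma---the boundary value at $\gamma_1$, monotonicity, and the one-sided limit at $\gamma^*$---by elementary calculus, exploiting that $A_{\gamma_1}(x)=\big(\Phi(\gamma_1)-\Phi(x)\big)/x$ is smooth on $(0,\gamma_1]$.

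First I would record the boundary value and interior positivity. Plugging in $x=\gamma_1$ gives $A_{\gamma_1}(\gamma_1)=\big(\Phi(\gamma_1)-\Phi(\gamma_1)\big)/\gamma_1=0$. For $x\in(0,\gamma_1)$, strict monotonicity of $\Phi$ gives $\Phi(\gamma_1)>\Phi(x)$, and $x>0$, so $A_{\gamma_1}(x)>0$. As $x\downarrow 0$ the numerator tends to $\Phi(\gamma_1)-\tfrac12>0$ while the denominator vanishes, so $A_{\gamma_1}(x)\to+\infty$; reading $A_{\gamma_1}(0)$ as this limit keeps the claimed codomain and the "decreasing on $[0,\gamma_1]$" statement consistent (this left-endpoint blow-up is the only point that needs a word of care, and is really the only mild obstacle here).

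Next, for monotonicity I would differentiate via the quotient rule. Writing $\varphi=\Phi'$ for the standard normal density, one gets, for $x\in(0,\gamma_1]$,
\[
A_{\gamma_1}'(x)=\frac{-x\varphi(x)-\big(\Phi(\gamma_1)-\Phi(x)\big)}{x^2}=\frac{\big(\Phi(x)-\Phi(\gamma_1)\big)-x\varphi(x)}{x^2}.
\]
On $(0,\gamma_1)$ both summands in the numerator are strictly negative---$\Phi(x)-\Phi(\gamma_1)<0$ since $x<\gamma_1$ and $\Phi$ is increasing, and $-x\varphi(x)<0$ since $x>0$---so $A_{\gamma_1}'(x)<0$ there, i.e. $A_{\gamma_1}$ is strictly decreasing on $[0,\gamma_1]$.

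Finally, for the limit claim, fix $\gamma^*\in(0,\gamma_1)$. Continuity of $A_{\gamma_1}$ at $\gamma^*$ gives $A_{\gamma_1}(x)\to A_{\gamma_1}(\gamma^*)$ as $x\downarrow\gamma^*$, and the strict monotonicity just established gives $A_{\gamma_1}(x)<A_{\gamma_1}(\gamma^*)$ for $x\in(\gamma^*,\gamma_1]$, so the convergence is monotone from below: $A_{\gamma_1}(x)\uparrow A_{\gamma_1}(\gamma^*)$. That $\bar A^*:=A_{\gamma_1}(\gamma^*)=\big(\Phi(\gamma_1)-\Phi(\gamma^*)\big)/\gamma^*>0$ is exactly the interior positivity from the first step. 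No deeper argument is needed.
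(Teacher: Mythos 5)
Your proof is correct and follows essentially the same route as the paper's: compute $A_{\gamma_1}'(x)=x^{-2}\bigl[-x\varphi(x)-\bigl(\Phi(\gamma_1)-\Phi(x)\bigr)\bigr]<0$ by the quotient rule, then obtain the monotone one-sided limit from continuity and strict monotonicity. Your extra remark about the blow-up as $x\downarrow 0$ is a reasonable clarification of the domain statement but is not needed for how the lemma is used.
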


\begin{proof}
$A_{\gamma_1}(\gamma_1)=0$ follows immediately from the definition.
Note that $A_{\gamma_1}(x)$ is continuous and differentiable on $x \in [0,\gamma_1]$.
\[
A_{\gamma_1}'(x)=x^{-2}\Big[-x\varphi(x)-\Big(\Phi(\gamma_1)-\Phi(x)\Big) \Big]<0 \quad \forall x \in [0,\gamma_1]
\]
Then $A_{\gamma_1}(x) \uparrow \bar A^*$ follows from continuity of $A_{\gamma_1}(\cdot)$.
\end{proof}

With these lemmas in hand, we can understand how $(\gamma_t)_t$ changes as a function of $(A_t)_t$. 

\subsection{Theorem~\ref{thrm:char} and Corollary~\ref{cor:prop_1}} \label{appendix: thrm1}
Without loss of generality, assume $\lambda_0 > \frac{1}{2}$.

\begin{proof}[Proof of Theorem~\ref{thrm:char} (i)]
First, note that for any $\epsilon>2$, $\lim_{t\to\infty} t^{2-\epsilon}=0$. Hence, $\eta_t^{-2}=O(t^{2-\epsilon})$ implies $\lim_{t\to\infty} \eta_t^{-2}=0$. By our assumption of limit learning, the implication we want to show is vacuously true.

Hence, fix any $\epsilon \in (0,2)$, and let $p=2-\epsilon$. Suppose there exist $C,T$ such that $\eta_t^{-2} \leq Ct^p$ holds for all $t \geq T$. Then we have
\begin{align*}
    \eta_t^{-2} \leq Ct^p \implies \eta_t^2 \geq C^{-1}t^{-p} \implies \eta_t^2+\eta_{t-1}^2 \geq 2\eta_t^2 \geq 2C^{-1}t^{-p} \tag{$\eta_t^2 \leq \eta_{t-1}^2$}\\
    \implies (\eta_t^2+\eta_{t-1}^2)^{-1/2} \leq (C/2)^{1/2} t^{p/2}
\end{align*}
for all $t \geq T$. By Lemma~\ref{lemma: mu star limit} (iii), $|\mu_{t+1}^*-\mu_t^*|=\mu_{t+1}^*-\mu_t^*$ is decreasing in $t$, which implies
\begin{align*}
    \mu_{t+1}^*-\mu_t^* \leq t^{-1} \sum_{s=1}^t (\mu_{s+1}^*-\mu_s^*)=t^{-1} (\mu_{t+1}^*-\mu_1^*)\leq t^{-1}. \tag{$\mu_{t+1}^* - \mu_1^* \leq 1$}
\end{align*}
Using this inequality, we can apply the squeeze theorem and compute $\mu_\infty^*$ as follows:
\begin{align*}
    0\leq \frac{\mu_{t+1}^*-\mu_t^*}{\sqrt{\eta_t^2+\eta_{t+1}^2}}\leq \underbrace{(C/2)^{1/2} t^{p/2-1}}_{\to 0 \,\,\,\,\because\, p < 2} \\
    \implies \mu_\infty^*=c-\Phi\left(\lim_{t\to\infty}\frac{\mu_{t+1}^*-\mu_t^*}{\sqrt{\eta_t^2+\eta_{t+1}^2}}\right)=c-(1/2)
\end{align*}
and the conclusion follows from Lemma~\ref{lemma: mu star limit} (iv).
\end{proof}

\begin{proof}[Proof of Theorem~\ref{thrm:char} (ii)]
Throughout this proof, we denote $q=2+\epsilon$ and when convenient, we omit dependencies on the variables $q>2$ and (wlog) $\lambda_0>1/2$. Let ``p.m." denote ``on an interval of positive (Lebesgue) measure."

\textbf{Step 1.} First, we show the following lemma.

\begin{lemma}
\label{lem: old-thm-1(ii)}
For any $q>2$ and $\lambda_0>1/2$, there exists $\underline{C}>0$ (depending on $q$ and $\lambda_0$) such that, for any $T\geq 1$,
\[
\eta_t^{-2}\geq \underline{C}t^q \quad \forall t\geq T \implies \lambda_\infty(\theta)=NRD(\theta) \quad \text{p.m.}
\]
\end{lemma}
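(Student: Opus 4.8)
The plan is to show that super-quadratic growth of posterior precisions forces the auxiliary sequence $(\gamma_t)_t$ to converge to a strictly positive limit $\gamma_\infty > 0$, which by Lemma~\ref{lem:gamma_lom} and the definition $\gamma_t = (\mu_t^*-\mu_{t-1}^*)/\sqrt{\eta_t^2+\eta_{t-1}^2}$ translates into $\mu_\infty^* > c-1/2$ being bounded away from the risk-dominance threshold; then the positive-measure interval $(\underline\theta,\overline\theta)$ is simply $(c-1/2,\mu_\infty^*)$, on which $\lambda_\infty(\theta) = \mathbbm 1[\theta \geq \mu_\infty^*] = 0 = NRD(\theta)$ by Lemma~\ref{lemma: mu star limit}(iv). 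So the entire content of the lemma is the quantitative claim: if $\eta_t^{-2} \geq \underline C t^q$ for all $t \geq T$ (with $q > 2$ and $\underline C$ to be chosen depending on $q,\lambda_0$), then $\gamma_\infty > 0$.

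\textbf{Executing this.} First I would recall from Lemma~\ref{lem:gamma_lom}(ii) that (wlog $\lambda_0 > 1/2$) $\gamma_t \downarrow \gamma_\infty \geq 0$, so the only thing to rule out is $\gamma_\infty = 0$. Suppose toward a contradiction that $\gamma_t \downarrow 0$. The recursion $A_t\gamma_t = \Phi(\gamma_{t-1}) - \Phi(\gamma_t)$ with $A_t = \sqrt{\eta_t^2+\eta_{t-1}^2}$ can be summed: since $\Phi(\gamma_{t-1})-\Phi(\gamma_t) \geq 0$ telescopes, $\sum_{t\geq 2} A_t \gamma_t = \Phi(\gamma_1) - \Phi(\gamma_\infty) = \Phi(\gamma_1) - 1/2 < \infty$. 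Now the super-quadratic hypothesis gives, for $t \geq T$, $\eta_t^2 \leq \underline C^{-1} t^{-q}$, hence $A_t \leq \sqrt{2\underline C^{-1}} \, t^{-q/2}$ (using $\eta_{t-1}^2 \le \eta_t^2$ is false — actually $\eta_t^2 \le \eta_{t-1}^2$, so $A_t \le \sqrt{2}\,\eta_{t-1}$; let me instead bound $A_t \le \sqrt{2\eta_{t-1}^2} \le \sqrt{2\underline C^{-1}(t-1)^{-q}}$). The subtlety is that a summability obstruction from $A_t \to 0$ alone is not enough — I need $\gamma_t$ not to decay too fast. The cleanest route is: since $\gamma_t$ is decreasing, $\Phi(\gamma_{t-1}) - \Phi(\gamma_t) \leq \varphi(\gamma_t)(\gamma_{t-1}-\gamma_t) \leq \varphi(0)(\gamma_{t-1}-\gamma_t)$, so $A_t \gamma_t \leq \varphi(0)(\gamma_{t-1}-\gamma_t)$, giving $\gamma_{t-1} - \gamma_t \geq \varphi(0)^{-1} A_t \gamma_t$, i.e. $\gamma_t \leq \gamma_{t-1}(1 - \varphi(0)^{-1}A_t)$ — wrong direction. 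Instead use the lower bound $\Phi(\gamma_{t-1})-\Phi(\gamma_t) \geq \varphi(\gamma_1)(\gamma_{t-1}-\gamma_t)$ valid on the bounded range $[\gamma_\infty,\gamma_1]$, giving $\gamma_{t-1}-\gamma_t \leq \varphi(\gamma_1)^{-1}A_t\gamma_t \leq \varphi(\gamma_1)^{-1}A_t \gamma_{t-1}$, so $\gamma_t \geq \gamma_{t-1}(1 - \varphi(\gamma_1)^{-1}A_t)$. Since $\sum A_t < \infty$ (as $A_t = O(t^{-q/2})$ with $q/2 > 1$), the infinite product $\prod_{t}(1 - \varphi(\gamma_1)^{-1}A_t)$ converges to a strictly positive number once $T$ is large enough that $\varphi(\gamma_1)^{-1}A_t < 1$; hence $\gamma_\infty \geq \gamma_T \prod_{t > T}(1-\varphi(\gamma_1)^{-1}A_t) > 0$, contradicting $\gamma_\infty = 0$. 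This is also where the constant $\underline C$ enters — choosing $\underline C$ large makes $A_t$ uniformly small, which is convenient but the tail argument already handles it; the role of $\underline C$ depending on $\lambda_0$ (hence $\gamma_1$) is to ensure $\varphi(\gamma_1)^{-1}A_t < 1$ can be arranged, though honestly $q > 2$ plus large $T$ suffices. Actually $\underline C$ matters to control behavior before the argument kicks in; I'd state it cleanly at the end.

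\textbf{The main obstacle} I anticipate is the direction of the inequality in the telescoping/product estimate: one must use the correct mean-value bound (lower-bounding $\Phi(\gamma_{t-1})-\Phi(\gamma_t)$ by $\varphi(\gamma_1)$ times the gap, since all $\gamma_t \in [0,\gamma_1]$ and $\varphi$ is minimized there on that interval) to get a multiplicative lower bound $\gamma_t \geq \gamma_{t-1}(1-cA_t)$ rather than an upper bound, and then invoke convergence of $\sum A_t$ — which is exactly where $q > 2$ (equivalently $q/2 > 1$) is essential. A secondary point requiring care is making the bound $A_t \leq \sqrt{2\underline C^{-1}}\,(t-1)^{-q/2}$ rigorous from $\eta_t^{-2} \geq \underline C t^q$ together with monotonicity $\eta_{t-1}^2 \geq \eta_t^2$, and handling finitely many initial terms $t < T$ (which only contribute a positive constant factor). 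Finally, I would close the argument by noting $\gamma_\infty > 0 \implies \mu_t^* - \mu_{t-1}^* = A_t\gamma_t$ still summable so $\mu_\infty^*$ exists, and from $\mu_\infty^* = c - \Phi(\gamma_\infty) < c - 1/2$ together with Lemma~\ref{lemma: mu star limit}(iv), the non-risk-dominant action is played on $(c-1/2,\mu_\infty^*)$, a positive-measure interval, completing the proof of Lemma~\ref{lem: old-thm-1(ii)}.
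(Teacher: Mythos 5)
Your argument is correct and takes a genuinely different route from the paper. The paper proves this lemma by explicitly constructing a dominating comparison sequence $\gamma_t(\epsilon)=\gamma_1-\sum_{s\le t}\Delta_s(\epsilon)$ that decreases to a preassigned $\epsilon>0$, computing the associated $A_t(\epsilon)$, verifying $A_t\le A_t(\epsilon)$ term by term (this is where the specific constant $\underline C$ is manufactured), and then invoking the monotonicity of $F_A$ in $A$ and in $x$ (Lemma~\ref{lem: F_A_properties}) to conclude $\gamma_t\ge\gamma_t(\epsilon)\downarrow\epsilon$ by induction. You instead derive the multiplicative lower bound $\gamma_t\ge\gamma_{t-1}\bigl(1-\varphi(\gamma_1)^{-1}A_t\bigr)$ directly from the mean value theorem (using that $\varphi$ is minimized at $\gamma_1$ on $[0,\gamma_1]$, so the inequality points the right way) and close with convergence of the infinite product, which holds because $A_t\le\sqrt{2}\,\eta_{t-1}=O(t^{-q/2})$ is summable precisely when $q>2$. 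This is shorter, and it actually proves something \emph{stronger}: your bound works for \emph{every} $\underline C>0$ and every $T$, not just for one sufficiently large constant, so it would also absorb Step~2 of the paper's proof of Theorem~\ref{thrm:char}(ii) (which exists only to pass from the specific $\underline C$ to an arbitrary $\Omega$-constant). Two loose ends to tidy up. First, the product bound $\gamma_\infty\ge\gamma_{T_0}\prod_{t>T_0}(1-\varphi(\gamma_1)^{-1}A_t)$ needs $\gamma_{T_0}>0$ \emph{strictly}; Lemma~\ref{lem:gamma_lom}(ii) only gives $\gamma_t\ge 0$, but strict positivity follows in one line from \eqref{eq: gamma_lom}: if $\gamma_{t-1}>0$ and $\gamma_t=0$ then $0=A_t\gamma_t=\Phi(\gamma_{t-1})-\Phi(0)>0$, a contradiction, so $\gamma_1>0$ propagates forward. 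Second, the orientation of the final interval is flipped in your write-up: with $\lambda_0>1/2$ you get $\mu_\infty^*=c-\Phi(\gamma_\infty)<c-1/2$, and the positive-measure set is $(\mu_\infty^*,\,c-1/2)$, on which $\lambda_\infty=1$ while $RD=0$, so $\lambda_\infty=NRD$ there; neither slip affects the substance.
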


\begin{proof}[Proof of Lemma \ref{lem: old-thm-1(ii)}]
Let $\gamma_1:=\Phi^{-1}(\lambda_0)$ and define the following expressions:
\begin{gather*}
    L:=\frac{2^{-q}}{2^{-q}+1}, \ \ \ r:=\frac{q}{2}>1, \ \ \ S:=\sum_{t=2}^\infty t^{-r}\leq \int_{t=1}^{t=\infty} t^{-r} \ dt=\frac{1}{r-1}<\infty, \\
    K:=\left[\frac{1}{2}\frac{\varphi(\gamma_1)}{S}\right]^2, \ \ \ C:=LK, \ \ \ \underline{C}:=C^{-1}
\end{gather*}
Note that $\underline{C}$ does not depend on $t$, but does depend on $q,\lambda_0$. We begin with the case $T=1$. Fix any $\Sigma$ with induced posterior precisions $(\eta_t^{-2})_t$ such that $\eta_t^{-2} \geq \underline{C}t^q$ for all $t \geq 1$. Then the following implications hold:
\begin{align}
    \eta_t^{-2} \geq \underline{C}t^q \quad \forall t\geq 1 \notag\\
    \implies \eta_t^2 \leq Ct^{-q} \quad \forall t\geq 1 \notag\\
    \implies \eta_t^2+\eta_{t-1}^2 \leq C(t^{-q}+(t-1)^{-q}) \quad \forall t\geq 2 \notag\\
    \implies (A_t)^2:=\eta_t^2+\eta_{t-1}^2 \leq Kt^{-q} \quad \forall t\geq 2 \notag\\
    \implies A_t \leq \frac{1}{2}\frac{\varphi(\gamma_1)}{S} t^{-r} \quad \forall t\geq 2 \label{ineq: A_t}
\end{align}
where the third implication follows from
\[
\frac{C}{K}=L=\frac{2^{-q}}{2^{-q}+1} \leq \frac{t^{-q}}{t^{-q}+(t-1)^{-q}} \quad \forall \ t\geq 2.
\]
since
\begin{align*}
    \frac{d}{dt}\left(\frac{t^{-q}}{t^{-q}+(t-1)^{-q}}\right)=q(t(t-1))^{-q}\Big((t-1)^{-1}-t^{-1} \Big)\geq 0
\end{align*}

Next, for each $\epsilon \in (0,\gamma_1)$ and each $t\geq 2$, define:
\begin{align*}
    \Delta_t(\epsilon):=\frac{\gamma_1-\epsilon}{S}t^{-r}, \ \ \ \gamma_t(\epsilon):=\gamma_1-\sum_{s=2}^t \Delta_t(\epsilon) , \ \ \ A_t(\epsilon):=\frac{\Phi(\gamma_{t-1}(\epsilon))-\Phi(\gamma_t(\epsilon))}{\gamma_t(\epsilon)}
\end{align*}
with the initial condition $\gamma_1 (\epsilon) = \gamma_1$. Observe that $\gamma_t(\epsilon) \downarrow \epsilon$ as $t\to\infty$ and
\begin{align}
    \Phi(\gamma_{t-1}(\epsilon))-\Phi(\gamma_t(\epsilon)) &\geq \varphi(\gamma_{t-1}(\epsilon)) (\gamma_{t-1}(\epsilon) - \gamma_t(\epsilon)) \tag{$0 < \gamma_t(\epsilon) < \gamma_{t-1}(\epsilon)$ $\forall t\geq 2$}\\
    &\geq \varphi(\gamma_1) \Delta_t(\epsilon) \tag{$0 < \gamma_{t-1}(\epsilon) \leq \gamma_1$ and $\gamma_{t-1}(\epsilon) - \gamma_t(\epsilon) = \Delta_t(\epsilon)$ $\forall t\geq 2$}\\
    &\geq \frac{\gamma_t(\epsilon)}{\gamma_1} \varphi(\gamma_1) \Delta_t(\epsilon) \quad \forall t\geq 2 \tag{$\frac{\gamma_t(\epsilon)}{\gamma_1} \in (0,1)$ $\forall t\geq 2$}\\
    \implies \frac{\varphi(\gamma_1)}{\gamma_1}&\Delta_t(\epsilon)\leq A_t(\epsilon) \quad \forall t\geq 2 \label{ineq: A_t(ep)}
\end{align}
where we also use the fact that $\varphi(x)$ is decreasing for $x\geq 0$. Also note that
\[
\lim_{\epsilon \downarrow 0} \frac{\varphi(\gamma_1)}{\gamma_1}\frac{\gamma_1-\epsilon}{S} = \frac{\varphi(\gamma_1)}{S} > \frac{1}{2}\frac{\varphi(\gamma_1)}{S}
\]
Hence, we can choose $\epsilon>0$ small enough, independent of $t$, such that
\[
A_t\underbrace{\leq}_{\eqref{ineq: A_t}} \frac{1}{2} \frac{\varphi(\gamma_{1})}{S} t^{-r}\leq \frac{\varphi(\gamma_1)}{\gamma_1}\frac{\gamma_1-\epsilon}{S}t^{-r}\underbrace{\leq}_{\eqref{ineq: A_t(ep)}} A_t(\epsilon) \quad \ \forall \ t\geq 2.
\]
By Lemma~\ref{lem: F_A_properties}, we can show $\gamma_t \geq \gamma_t (\epsilon)$ for $t \geq 2$ inductively. Hence,
\[
\gamma_\infty = \lim_{t \to \infty} F_{A_t}(\gamma_{t - 1}) \geq \lim_{t \to \infty} F_{A_t(\epsilon)}(\gamma_{t - 1}(\epsilon)) = \epsilon>0
\]
where $\gamma_\infty$ is the limit of $(\gamma_t)_t$ induced by $\Sigma$.

Next, fix any $T\geq 1$, and let $\underline{C}$ be the constant derived above (which does not depend on $T$). Fix any $\Sigma$ satisfying $\eta_t^{-2}\geq \underline{C}t^q$ for all $t\geq T$, and define:
\[
\Tilde{\underline{C}}(\Sigma):=L^{-1}\left[\frac{1}{2}\frac{\varphi(\gamma_T)}{S}\right]^{-2}\leq \underline{C}. \tag{$\gamma_T\leq \gamma_1$}
\]
where we note that $0\leq\gamma_T\leq \gamma_1$ implies $\varphi(\gamma_T)\geq \varphi(\gamma_1)$. Hence, if we define $\tilde{t}:=t-(T-1)$ and $\tilde\eta_{\tilde{t}}^{-2} = \eta_t^{-2}$, $\Sigma$ satisfies
\begin{align*}
    \tilde\eta_{\tilde{t}}^{-2} \geq \underline{C}\{\tilde{t}+(T-1)\}^q\geq \underline{C}(\tilde{t})^q \geq \Tilde{\underline{C}}(\Sigma)(\tilde{t})^q \quad \forall \tilde{t}\geq 1
\end{align*}
By an exactly analogous argument as the $T=1$ case, $\gamma_\infty>0$ holds as desired.
\end{proof}

\textbf{Step 2.} To finish the argument, we want to show: for any $q>2$ and (wlog) $\lambda_0>1/2$,
\[
\eta_t^{-2}=\Omega(t^q) \implies \lambda_\infty(\theta)=NRD(\theta) \quad \text{p.m.}
\]
Fix any $q>2$, $C>0$, $T\geq 1$, and $\Sigma$. Fix any $q' \in (2,q)$, and choose $T'\geq T$ large enough such that $C(T')^{q-q'}\geq \underline{C}(q',\lambda_0),$ where $\underline{C}(q',\lambda_0)$ is the constant from Lemma \ref{lem: old-thm-1(ii)}. Then, the following implications hold: 
\begin{align*}
    \eta_t^{-2} \geq Ct^q=Ct^{q-q'}\cdot t^{q'} \quad \forall t\geq T \\
    \implies \eta_t^{-2} \geq \underline{C}(q',\lambda_0) t^{q'} \quad \forall t\geq T' \\
    \implies \lambda_\infty(\theta)=NRD(\theta) \quad \text{p.m.} \tag{From Lemma~\ref{lem: old-thm-1(ii)}}
\end{align*}
\end{proof}

\begin{proof}[Proof of Corollary \ref{cor:prop_1}]
\label{lem:pf_cor_1}
\textbf{Part (i).} If $\sigma_t=\sigma$ for all $t\geq 1$, then $\eta_t^{-2}=\sigma^{-2}t$. Take $C:=\sigma^{-2}$ and $p:=1$, and the conclusion directly follows from Theorem~\ref{thrm:char} (i).


\textbf{Part (ii).} If $\sigma_1:=\sigma$ and $\sigma_t:=\eta_{t-1}$ for all $t\geq 2$, then $\eta_t^{-2}=\sigma^{-2}2^{t-1}$ holds. Fix $q=3$ and any $\lambda_0$, and let $\underline C$ be the constant from Lemma~\ref{lem: old-thm-1(ii)} when $q = 2$. Choose $\underline{\sigma}>0$ small enough such that $\underline{\sigma}^{-2}2^{t-1}\geq \underline Ct^3 \quad \forall \ t\geq 1,$ and note that $\eta_t^{-2} \geq \underline{\sigma}^{-2}2^{t-1}$ for all $\sigma<\underline{\sigma}$. The conclusion then follows from Lemma~\ref{lem: old-thm-1(ii)}.
\end{proof}

\subsection{Proposition~\ref{prop: comp stat}} \label{appendix: comp}
\begin{proof}
    Without loss of generality, we assume $\lambda_0 > \frac{1}{2}$. 

    \textbf{Parts (i) and (ii)}:
    First, we prove Part (ii). Fix $\Sigma \geq \Sigma'$. Then it follows that $A_t := \sqrt{\eta_t^2 + \eta_{t-1}^2}\geq \sqrt{(\eta_t')^2 + (\eta_{t-1}')^2} =: A_t'$ for all $t\geq 2$. Let $(\gamma_t)_t$ and $(\gamma_t')_t$ be the induced sequences, respectively. We use induction to show that $\gamma_t\leq \gamma_t'$ for all $t\geq 1$. For the base case, note that $\gamma_1=\Phi^{-1}(\lambda_0)=\gamma_1'$. For the inductive step, assume $\gamma_t\leq \gamma_t'$. By Lemma \ref{lem: F_A_properties},
\[
\gamma_{t+1}=F_{A_{t+1}}(\gamma_t) \leq F_{A_{t+1}'}(\gamma_t)\leq F_{A_{t+1}'}(\gamma_t')=\gamma_{t+1}' \tag{$A_{t+1} \geq A_{t+1}'$ and $\gamma_t\leq \gamma_t'$}
\]
Hence, $\gamma_\infty\leq \gamma_\infty'$ holds, which implies
\[
\mu_\infty^*=c-\Phi(\gamma_\infty)\geq c-\Phi(\gamma_\infty')=(\mu_\infty^*)'
\]
as desired.

To prove Part (i), assume $\lambda_0>(1/2)$ and $\Sigma=\Sigma'$. Then, by the same inductive argument as above, we have
\begin{align*}
    \lambda_0\geq \lambda_0' \implies \gamma_1\geq \gamma_1' \implies \gamma_2=F_{A_2}(\gamma_1)\geq F_{A_2}(\gamma_1')=\gamma_2' \\
    \implies \gamma_t\geq \gamma_t' \quad \forall \ t \implies \gamma_\infty\geq \gamma_\infty' \implies \mu_\infty^*\leq (\mu_\infty^*)',
\end{align*}

        \textbf{Part (iii).} Let $\eta_t^2$ and $\tilde\eta_t^2$ denote the time-$t$ posterior variances under $\Sigma$ and $\Sigma^{s \leftrightarrow s'}$, respectively. Then, it immediately follows that $\eta_t^2 \geq \tilde\eta_t^2$ holds for all $t$. Hence, by an analogous argument as in Part (i), we have $\mu_t^* (\Sigma, \lambda_0) \geq \mu_t^*(\Sigma^{s \leftrightarrow s'}, \lambda_0)$ holds for all $t$, which implies $\mu_\infty^* (\Sigma, \lambda_0) \geq \mu_\infty^*(\Sigma^{s \leftrightarrow s'}, \lambda_0)$.


\end{proof}

\subsection{Proposition~\ref{prop:implementable}}
We prove Proposition~\ref{prop:implementable} in two steps. First, we show that for any threshold $\mu^* \in (c - \lambda_0, c - (1/2))$, there exists a sequence $(A_t)_t$ with $A_t \downarrow 0$ as $t \to \infty$ that implements it, i.e., $\mu_\infty^* = \mu^*$. Second, we show that there exists a noise process $\Sigma=(\sigma_t)_t$ that implements the sequence $(A_t)_t$ constructed in the first step.

\begin{proof}
    Without loss of generality, we assume $\lambda_0 > \frac{1}{2}$.

    \textbf{Step 1.} Fix any $\mu^* \in (c - \lambda_0, c - (1/2))$. Our goal is to construct a sequence $(A_t)_t$ that satisfies $A_t \downarrow 0$ as $t \to \infty$, $A_{t+1}^2-A_{t+2}^2\leq A_t^2-A_{t+1}^2$ for $t \geq 2$, and $\mu_\infty^* = \mu^*$.

    By taking $t\to\infty$ for (\ref{eq: mu star}), we see that $\mu_\infty^*+\Phi(\gamma_\infty)=c$ and hence 
\[
\mu_\infty^*=\mu^* \iff \gamma_\infty=\gamma^*:=\Phi^{-1}(c-\mu^*) \in (0,\gamma_1)
\]
where $\gamma_1 := \Phi^{-1}(\lambda_0)$. Hence, it suffices to find a sequence $(A_t)_t$ whose induced $(\gamma_t)_t$ via (\ref{eq: gamma_lom}) satisfies $\gamma_t \downarrow \gamma^*$. By Lemma \ref{lem: A_gamma_properties} and Lemma~\ref{lem: A_gamma_1_properties}, we can choose $\gamma_2 \in (\gamma^*, \gamma_1)$ close enough to $\gamma^*$ such that
\[
A_2 := A_{\gamma_1}(\gamma_2) > \sqrt{2}A_{\gamma^*}(\gamma_2) > 0
\]
because i) $A_{\gamma_1} (\gamma^*) = \bar A^* >0 = A_{\gamma^*} (\gamma^*)$; ii) $A_{\gamma^*} (\gamma_2) > 0$ for $\gamma_2 \in (\gamma^*, m)$; and iii) $A_{\gamma_1} (\cdot)$ and $A_{\gamma^*} (\cdot)$ are continuous. For all $t\geq 3$, we define 
\[
A_t:=A_{\gamma^*}(\gamma_{t-1}) = \frac{\Phi(\gamma_{t-1}) - \Phi(\gamma_t)}{\gamma_t} \quad \text{and} \quad \gamma_t:=\frac{\gamma^*+\gamma_{t-1}}{2},
\]
which implies
\[
\gamma_t = \frac{\gamma_2 - \gamma^*}{2^{t - 2}} + \gamma^*.
\]
By construction, $\gamma_t \downarrow \gamma^*$ and $A_t\downarrow 0$. It remains to show that $(A_t)_t$ converges exponentially, which is a fact we use in the next step. Note that for all $t\geq 3$,
\[
\gamma_{t-1}-\gamma_t=\gamma_t-\gamma^*=4(\gamma_2-\gamma^*)2^{-t}.
\]
Since $\varphi$ is decreasing on $(0,\infty)$, we have
\begin{align*}
    &\varphi(\gamma_2)(\gamma_{t-1}-\gamma_t)\\
    \leq\varphi(\gamma_{t-1})(\gamma_{t-1}-\gamma_t)
    \leq &\Phi(\gamma_{t-1})-\Phi(\gamma_t)
    \leq \varphi(\gamma_t)(\gamma_{t-1}-\gamma_t)\tag{Mean value theorem} \\
    \leq &\varphi(\gamma^*) (\gamma_{t-1}-\gamma_t)
\end{align*}
Hence,
\begin{align*}
    \underbrace{4\frac{\varphi(\gamma_2)}{\gamma_2}(\gamma_2-\gamma^*)}_{=: c}2^{-t}\leq\frac{\Phi(\gamma_{t-1})-\Phi(\gamma_t)}{\gamma_2}\leq A_t \tag{$\gamma_t \leq \gamma_2$}\\
    \leq \frac{\Phi(\gamma_{t-1})-\Phi(\gamma_t)}{\gamma^*}\leq \underbrace{4\frac{\varphi(\gamma^*)}{\gamma^*}(\gamma_2-\gamma^*)}_{=: C} 2^{-t} \tag{$\gamma^* < \gamma_t$}
\end{align*}
as desired.

Finally, we want to show that the sequence $(A_t)_t$ constructed above satisfies $A_{t+1}^2-A_{t+2}^2\leq A_t^2-A_{t+1}^2$ for $t \geq 2$ if we take $\gamma_2$ sufficiently close to $\gamma^*$. We defined the sequence of $A_t$ so that $A_2\geq \sqrt{2}A_3$ holds. Hence, we have
\begin{align*}
    A_2\geq \sqrt{2}A_3 \implies A_2^2\geq 2A_3^2\geq 2A_3^2-A_4^2 \implies A_3^2-A_4^2\leq A_2^2-A_3^2
\end{align*}
as desired. For $t\geq 3$, note that
\begin{gather*}
    c2^{-t}\leq A_t \leq C2^{-t} \\
    \implies \left(c^2-\frac{C^2}{4}\right)2^{-2t}\leq A_t^2-A_{t+1}^2 \leq \left(C^2-\frac{c^2}{4}\right)2^{-2t}
\end{gather*}
Thus, a sufficient condition for $A_{t+1}^2-A_{t+2}^2\leq A_t^2-A_{t+1}^2$ is
\begin{align*}
    \left(C^2-\frac{c^2}{4}\right)2^{-2(t+1)} \leq \left(c^2-\frac{C^2}{4}\right)2^{-2t} \Leftrightarrow \frac{C}{c} = \frac{\varphi(\gamma^*)/\gamma^*}{\varphi(\gamma_2)/\gamma_2} \leq \sqrt{\frac{17}{8}},
\end{align*}
which is guaranteed by choosing $\gamma_2$ close enough to $\gamma^*$.

    \textbf{Step 2.} To complete the proof of Proposition~\ref{prop:implementable}, we show that there exists a noise process $\Sigma=(\sigma_t)_t$ that implements the sequence $(A_t)_t$ constructed above. First, we claim that it suffices to find a non-negative, weakly decreasing sequence $(\eta_t^2)_t$ of posterior variances. This is true because given any such $(\eta_t^2)_t$, the sequence $(\eta_t^{-2})_t$ is non-negative and weakly increasing. Thus, $(\eta_t^2)_t$ is implemented by the noise process $\sigma_1^{-2}:=\eta_1^{-2},$ and $\sigma_t^{-2}=\eta_t^{-2}-\eta_{t-1}^{-2}$ for $t\geq 2$. Thus, it suffices to find a non-negative and weakly decreasing sequence $(\eta_t^2)_{t\geq 1}$ such that $A_t^2=\eta_t^2+\eta_{t-1}^2$ for $t\geq 2$. Note that any choice of $\eta_1^2$ pins down a candidate solution $(\eta_t^2)_{t\geq 1}$ by $\eta_{t+1}^2 = A_{t+1}^2 - \eta_t^2$. We therefore proceed by stating the implied constraints on $\eta_1^2$.

\textbf{Step 2.1.}
Let us begin by establishing the formula
\[
\eta_t^2(\eta_1^2):=(-1)^{t+1}\eta_1^2+\sum_{s=2}^t (-1)^{t+s} A_s^2
\]
This follows by induction. For the base case, note that $\eta_2^2=A_2^2-\eta_1^2$. For the inductive step, suppose the formula holds for $\eta_t^2$. Then,
\begin{align*}
    \eta_{t+1}^2(\eta_1^2)=A_{t+1}^2-\eta_t^2(\eta_1^2)&=A_{t+1}^2+(-1)^{(t+1)+1}\eta_1^2+\sum_{s=2}^t (-1)^{(t+1)+s} A_s^2 \\
    &=(-1)^{t+2}\eta_1^2+\sum_{s=2}^{t+1} (-1)^{(t+1)+s} A_s^2
\end{align*}
since $2(t+1)$ is even.

\textbf{Step 2.2.} Next, we pin down the implied constraints on $\eta_1^2$. The base case constraints are:
\begin{align*}
    0\leq A_2^2-\eta_1^2 \leq \eta_1^2 \iff
    L_1:=\frac{A_2^2}{2}\leq \eta_1^2 \leq A_2^2=:U_1
\end{align*}
For every $t\geq 2$, we have the constraint $0\leq \eta_{t+1}^2\leq \eta_t^2.$ We can split this set of constraints into two cases.

\underline{Case 1.}
If $t$ is even, then the time-$t$ constraint from above is
\begin{align*}
    \sum_{s=2}^{t+1}(-1)^{t+s} A_s^2 \leq \eta_1^2 \leq \sum_{s=2}^t (-1)^{t+s} A_s^2-\frac{A_{t+1}^2}{2} 
\end{align*}
which is equivalent to
\[
L_t:=(A_2^2-A_3^2)+\cdots+(A_t^2-A_{t+1}^2) \leq \eta_1^2 \leq (A_2^2-A_3^2)+\cdots+(A_t^2-(A_{t+1}^2/2))=:U_t
\]

\underline{Case 2.} If $t$ is odd, then the time-$t$ constraint from above is:
\[
\sum_{s=2}^t (-1)^{(t+1)+s}A_s^2+\frac{A_{t+1}^2}{2}\leq \eta_1^2\leq \sum_{s=2}^{t+1} (-1)^{(t+1)+s} A_s^2
\]
which is equivalent to:
\begin{align*}
    L_t:=(A_2^2-A_3^2)+\cdots+(A_{t-1}^2-A_t^2)+(A_{t+1}^2/2)\leq \eta_1^2 \\
    \leq (A_2^2-A_3^2)+\cdots+(A_{t-1}^2-A_t^2)+A_{t+1}^2=:U_t
\end{align*}
Note that for odd $t$, $L_t\geq L_{t-1}$ and for even $t$, $U_t\leq U_{t-1}$. Hence, only the lower bounds for odd $t$ and the upper bounds for even $t$ have bite.

\textbf{Step 2.3.} Finally, we show that the previously derived properties of $(A_t)_t$ imply that there exists $\eta_1^2$ satisfying the above constraints, and hence pins down a solution $\Sigma$. First, note that $(L_t)_{t \text{ odd}}$ are increasing, since for any odd $t \geq 1$,
\begin{align*}
    L_t \leq L_{t+2} \iff \frac{A_{t+1}^2}{2} \leq A_{t+1}^2-A_{t+2}^2+\frac{A_{t+3}^2}{2} \\
    \iff A_{t+2}^2-A_{t+3}^2\leq A_{t+1}^2-A_{t+2}^2
\end{align*}
which holds by Step 1. An analogous argument implies that $(U_t)_{t \text{ even}}$ are decreasing. Second, note that $A_t^2$ decreasing implies $L_t\leq U_{t+1}$ for any odd $t$. Hence, by the nested intervals theorem, we have:
\[
\sup_{t \text{ odd}} L_t=\eta_1^2=\sum_{s=1}^\infty (A_{2s}^2-A_{2s+1}^2)=\inf_{t \text{ even}} U_t
\]
and this sum converges because 
\begin{gather*}
    \left(c^2-\frac{C^2}{4}\right)2^{-4s}\leq A_{2s}^2-A_{2s+1}^2 \leq \left(C^2-\frac{c^2}{4}\right)2^{-4s} \\
    \Rightarrow \frac{1}{15}\left(c^2-\frac{C^2}{4}\right) \leq \sum_{s=1}^\infty (A_{2s}^2-A_{2s+1}^2) \leq \frac{1}{15} \left(C^2-\frac{c^2}{4}\right).
\end{gather*}
Note that $\eta_1^2$ is positive because $C/c \leq \sqrt{17/8}$ implies $c^2 - C^2/4 > 0$.
\end{proof}

\subsection{Proposition~\ref{prop: dynamics}}\label{appendix: a5}
\begin{proof}[Proof]
\label{lem:pf_prop_dynamics}
Without loss of generality, we assume $\lambda_0 > \frac{1}{2}$. \textbf{Part (i).} Fix any $\theta \in (c-\lambda_0,c-(1/2))$. For any $\sigma>0$, define $T(\sigma) \in \mathbb{N}$ implicitly as $\mu_{T(\sigma)}^*\leq \theta<\mu_{T(\sigma)+1}^*$. Note that $T(\sigma)$ is well-defined because for any fixed $\sigma>0$, $\mu_t^* \uparrow c-(1/2)$.

First, we show that $T(\sigma)\to\infty$ as $\sigma \downarrow 0$. Notice that $T(\sigma)$ is weakly decreasing in $\sigma$ because, by the proof of Proposition \ref{prop: comp stat}(ii), for any $\sigma' > \sigma$, $\theta \geq \mu^*_{T(\sigma')}(\sigma') > \mu^*_{T(\sigma')}(\sigma)$. Hence, $T(\sigma) \geq T(\sigma')$ by the definition of $T(\sigma)$. Hence, $T(\sigma)$ either converges to some constant or diverges to $\infty$ as $\sigma \downarrow 0$. Suppose for a contradiction that $\lim_{\sigma\downarrow 0}T(\sigma) = T$ holds for some $T > 0$. Since $T(\sigma) \uparrow T$ as $\sigma \downarrow 0$, for any $\sigma$, we have $\mu^*_{T + 1}(\sigma) > \mu^*_{T(\sigma) + 1}(\sigma) > \theta,$ which implies $\lim_{\sigma\downarrow 0}\mu^*_{T + 1}(\sigma) \geq \theta$. This contradicts $\lim_{\sigma\downarrow 0}\mu^*_{T + 1}(\sigma) = c - \lambda_0$.


Next, fix any $\alpha \in (0,1)$. For any $0\leq t\leq\alpha T(\sigma)$, we have:
\begin{align*}
    \theta-\mu_t^*>\mu_{T(\sigma)}^*-\mu_t^* &=\sum_{s=t}^{T(\sigma)-1} (\mu_{s+1}^*-\mu_s^*)=\sum_{s=t}^{T(\sigma)-1}  \sigma\sqrt{\frac{1}{s+1}+\frac{1}{s}} \Phi^{-1}(c-\mu_{s+1}^*) \\
    \implies \lambda_t&=\Phi\left(\frac{\sqrt{t}(\theta-\mu_t^*)}{\sigma}\right)>\Phi\left(\sum_{s=t}^{T(\sigma)-1}\sqrt{\frac{t}{s+1}+\frac{t}{s}}\Phi^{-1}(c-\mu_{s+1}^*)\right) \\
    &\geq \Phi\left((T(\sigma)-t)\sqrt{\frac{t}{T(\sigma)}+\frac{t}{T(\sigma)-1}}\Phi^{-1}(c-\theta)\right) \\
    &\geq \Phi\left((1-\alpha)T(\sigma)\sqrt{\frac{1}{T(\sigma)}+\frac{1}{T(\sigma)-1}}\Phi^{-1}(c-\theta)\right) \to 1 \quad \text{as} \quad \sigma \downarrow 0.
\end{align*}
Since our lower bound on $\lambda_t$ holds uniformly for $t \in [0,\alpha T(\sigma)]$, for any $\epsilon>0$, there exists $\underline{\sigma}_\alpha>0$ such that for all $\sigma<\underline{\sigma}_\alpha$ and all $t \in [0,\alpha T(\sigma)]$, $\lambda_t\geq 1-\epsilon$.

Finally, fix any $\beta \in (1,\overline{\beta})$, where $\overline{\beta}:=c+(1/2)-\theta-\delta \quad\text{and}\quad \delta=(c-(1/2)-\theta)/2$, and note that, for any $t \in [\beta T(\sigma),\infty)$,
\begin{align*}
    \theta-\mu_t^*\leq \theta-\mu_{\beta T(\sigma)}^*&<\mu_{T(\sigma)+1}^*-\mu_{\beta T(\sigma)}^*=-\sum_{s=T(\sigma)+1}^{\beta T(\sigma)-1} \sigma \sqrt{\frac{1}{s+1}+\frac{1}{s}}\Phi^{-1}(c-\mu_{s+1}^*) \\
    \implies \lambda_t&<\Phi\left(-\sum_{s=T(\sigma)+1}^{\beta T(\sigma)-1} \sqrt{\frac{t}{s+1}+\frac{t}{s}}\Phi^{-1}(c-\mu_{s+1}^*)\right) \\
    &<\Phi\left(-\Big( (\beta-1) T(\sigma)-1\Big) \sqrt{1+\frac{\beta T(\sigma)}{\beta T(\sigma)-1}}\Phi^{-1}(c-\mu_{\beta T(\sigma)}^*)\right) \\
    &<\Phi\left(-\Big( (\beta-1) T(\sigma)-1\Big) \sqrt{1+\frac{\beta T(\sigma)}{\beta T(\sigma)-1}}\Phi^{-1}(c-\theta-(\overline{\beta}-1))\right) \\
    &\to 0 \quad \text{as } \sigma \to 0
\end{align*}
where the inequality follows from
\begin{align*}
    \mu_{\beta T(\sigma)}^*-\mu_{T(\sigma)} &=\sum_{s=T(\sigma)}^{\beta T(\sigma)-1} (\mu_{s+1}^*-\mu_s^*) \leq \sum_{s=T(\sigma)}^{\beta T(\sigma)-1} \frac{1}{s}
    \\ &\leq (\beta-1) T(\sigma) \frac{1}{T(\sigma)}=\beta-1 \\
    \implies c-\mu_{\beta T(\sigma)}^* &\geq c-\theta-(\beta-1)>c-\theta-(\overline{\beta}-1)>\frac{1}{2} \\
    \implies \Phi^{-1}(c-\mu_{\beta T(\sigma)}^*)&>\Phi^{-1}(c-\theta-(\overline{\beta}-1))>0.
\end{align*}
Since this upper bound on $\theta-\mu_t^*$ holds uniformly for $t \in [\beta T(\sigma),\infty)$, there exists $\underline{\sigma}_\beta>0$ such that for all $\sigma<\underline{\sigma}_\beta$ and all $t \in [\beta T(\sigma),\infty)$, $\lambda_t<\epsilon$.

Finally, define $\underline{\sigma}:=\min\{\underline{\sigma}_\alpha,\underline{\sigma}_\beta\}$.

\textbf{Part (ii).} For any $t \geq 1$, we have
\begin{align*}
    |\lambda_{t + 1} - \lambda_t| &= \left| \Phi\bigg(\cfrac{\theta - \mu^*_{t+1}}{\sigma/\sqrt{t+1}}\bigg)  - \Phi\bigg(\cfrac{\theta - \mu^*_t}{\sigma/\sqrt{t}}\bigg) \right|\\
    &\leq \frac{1}{\sqrt{2\pi}\sigma} \left| (\sqrt{t + 1} - \sqrt{t})\theta - \sqrt{t + 1}\mu_{t + 1}^* + \sqrt{t} \mu_t^* \right| \tag{$\Phi'(\cdot) \leq 1/\sqrt{2\pi}$}\\
    &\leq \frac{\sqrt{t + 1} - \sqrt{t}}{\sqrt{2\pi}\sigma}|\theta| + \frac{\sqrt{t + 1}\mu_{t + 1}^* - \sqrt{t} \mu_t^*}{\sqrt{2\pi}\sigma}\\
    &= \frac{\sqrt{t + 1} - \sqrt{t}}{\sqrt{2\pi}\sigma}|\theta| + \frac{\sqrt{t + 1}}{\sqrt{2\pi}\sigma}(\mu_{t + 1}^* - \mu_t^*) + \frac{(\sqrt{t + 1} - \sqrt{t})\mu_t^*}{\sqrt{2\pi}\sigma}\\
    &<  \frac{\sqrt{t + 1} - \sqrt{t}}{\sqrt{2\pi}\sigma}|\theta| + \frac{\sqrt{t + 1}}{\sqrt{2\pi}\sigma t} + \frac{c (\sqrt{t + 1} - \sqrt{t})}{\sqrt{2\pi}\sigma} \tag{$\mu_t^* < c$ and $\mu_{t + 1}^* - \mu_t^* < t^{-1}$}\\
    &\leq \frac{1}{\sqrt{2\pi}\sigma} \left\{ (\sqrt{2} - 1) (|\theta| + c) + \sqrt{2} \right\}.
\end{align*}

Hence, for any $\epsilon$, there exists $\overline \sigma > 0$ such that $|\lambda_{t + 1} - \lambda_t| < \epsilon$ holds for all $t \geq 1$ and $\sigma > \overline \sigma$. Note that this argument does not depend on the value of $\theta$ as long as $|\theta| < \infty$ holds.
\end{proof}

\setstretch{1}
\setlength{\bibsep}{0pt}
\bibliography{WorksCited}

\clearpage

\titleformat{\section}
		{\bfseries\center\scshape}     
         {Appendix \thesection:}
        {0.5em}
        {}
        []

\newpage

\newpage

\clearpage 
\appendix 
\titleformat{\section}
		{\normalsize\bfseries\center\scshape}     
         {Appendix \thesection:}
        {0.5em}
        {}
        []
\renewcommand{\thesection}{\Roman{section}}

\setcounter{page}{1}
{\renewcommand{\thefootnote}{\fnsymbol{footnote}}
\begin{center}
    \large{\textbf{\MakeUppercase{Appendix to 
    `Inertial Coordination Games'}}} \\
    \normalsize{{\textbf{Andrew Koh\footnote[1]{Department of Economics, MIT, \href{mailto:ajkoh@mit.edu}{ajkoh@mit.edu}} \quad  Ricky Li\footnote[2]{Department of Economics, MIT, \href{mailto:sanguanm@mit.edu}{rickyli@mit.edu}}} \quad \textbf{Kei Uzui}\footnote[2]{Department of Economics, MIT, \href{mailto:sanguanm@mit.edu}{kuzui@mit.edu}} \\
    \small{FOR ONLINE PUBLICATION ONLY}}}
\end{center}}

\section{Small Lags}
\label{app:small lags}

Consider $\mathcal{T} = \{1,2,\dots \}$ and fix a learning process $\Sigma = (\sigma_t^2)_{t \in \mathcal{T}}$. Now we define an alternative environment with $\mathcal{T}^{(n)} := \left\{\frac{1}{n}, \frac{2}{n}, \dots \right\}$ and $\Sigma^{(n)} := ((\sigma_t^{(n)})^2)_{t \in \mathcal{T}^{(n)}}$ where $\sigma_t^{(n)} = \sqrt{n}\sigma_s$ when $t \in (s - 1, s] $. Note that $\sigma_t^{(n)}$ is constant in $t$ for all $t \in (s - 1, s]$. Then, the posterior variance is given by
\[
(\eta_t^{(n)})^2 = \left( \sum_{k \in \mathcal{T}^{(n)},k \leq t} \frac{1}{(\sigma_k^{(n)})^2} \right)^{-1}.
\]
Note that for $t \in \mathbb{Z}$ and letting $s(k)=\text{ceil}(k)$, we may write:
\begin{align*}
    (\eta_t^{(n)})^2 = \left( \sum_{k \in \mathcal{T}^{(n)},k \leq t} \frac{1}{n\sigma_{s(k)}^2} \right)^{-1} \\
    =n\left(\sum_{k \in \mathcal{T}^{(n)},k \leq t} \frac{1}{\sigma_{s(k)}^2} \right)^{-1} \\
    =n\left(n \frac{1}{\sigma_1^2}\cdots+n\frac{1}{\sigma_t^2} \right)^{-1}=\eta_t^2
\end{align*}

Using this, we have the following indifference condition and the aggregate time-$t$ play:
\[
\mu_{t + 1/n}^{(n)*} + \Phi\left( \frac{\mu_{t + 1/n}^{(n)*} - \mu_t^{(n)*}}{\sqrt{(\eta_{t + 1/n}^{(n)})^2 + (\eta_t^{(n)})^2}} \right) = c,\quad \lambda_t^{(n)}(\theta) = \Phi\left( \frac{\theta - \mu_t^{(n)*}}{\eta_t^{(n)}} \right)
\]
where $\mu_t^{(n)*}$ is time-$t$ threshold belief for $t \in \mathcal{T}^{(n)}$. Define $\lambda^{(n)}_\infty (\theta) := \lim_{t \to \infty} \lambda_t^{(n)}(\theta)$.

\begin{proposition} \label{prop:smallinertia}
The following statements are true: 
\begin{enumerate}
    \item[P1.]  Consider the model with \textit{contemporaneous incentives}, in which $U(1,t)=\theta+\lambda_t$. Then, the unique strategy that survives iterated deletion of strictly dominated strategies (IDSDS) is the symmetric cutoff strategy $\mu_t^*=c-(1/2)$, that is, everyone plays action $1$ if and only if her posterior mean is greater than $c -(1/2)$. 
    
    \item[P2.]  For each $n \geq 1$ and $t \geq 1$, there exists $\epsilon(n,t)>0$ such that $\mu_t^{(n)*}$ is an $\epsilon(n,t)$-equilibrium of the contemporaneous game. Furthermore, for each fixed $t\geq 1$, $\lim_{n \to \infty} \epsilon(n,t)=0$.
    \item[P3.] Fix any $(\lambda_0,\Sigma,c)$. For any $n\geq 1$, the following hold:
    \begin{itemize}
    \item[(i)] For all $\epsilon > 0$,
    \[
    \eta^{-2}_t = O(t^{2 - \epsilon}) \implies \lambda_\infty^{(n)}(\theta)=RD(\theta) \quad \text{a.e.} 
    \]
    \item[(ii)] For all $\epsilon > 0$, 
    \begin{align*}
        \eta^{-2}_t = \Omega(t^{2 + \epsilon}) \implies &\lambda_\infty^{(n)}(\theta)=NRD(\theta) \\ &\text{for some positive measure interval $(\underline \theta, \overline \theta)$.}
    \end{align*}
\end{itemize}
\end{enumerate}
\end{proposition}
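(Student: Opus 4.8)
The plan is to handle the three parts separately, each building on machinery already in place. For \textbf{P1}, note that under contemporaneous incentives the payoff to the risky action at date $t$ is $\theta+\lambda_t$, which depends only on $\theta$ and on time-$t$ play; since atomless agents are myopic, the game decouples across dates into a family of static global games, the time-$t$ instance having posterior precision $\eta_t^{-2}$. I would then run the standard iterated-dominance argument. An agent with posterior mean $\mu_{it}>c$ has the risky action strictly dominant, and one with $\mu_{it}<c-1$ has the safe action strictly dominant. Given that all agents with mean above $\overline\mu^k$ play risky, the continuum law of large numbers gives $\lambda_t(\theta)\ge\Phi\!\big((\theta-\overline\mu^k)/\eta_t\big)$, so an agent with mean $\mu$ assigns $\mathbb E[\lambda_t\mid\mu]\ge\Phi\!\big((\mu-\overline\mu^k)/(\sqrt2\,\eta_t)\big)$ (using $\mu_{jt}\mid\mathcal{F}_{it}\sim N(\mu_{it},2\eta_t^2)$ and the Gaussian averaging identity); hence after one round the surviving upper threshold drops to the unique solution $\overline\mu^{k+1}$ of $\mu+\Phi\!\big((\mu-\overline\mu^k)/(\sqrt2\,\eta_t)\big)=c$, and a symmetric argument treats play from below. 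The map $\overline\mu^k\mapsto\overline\mu^{k+1}$ is monotone (this is precisely the static counterpart of Lemmas~\ref{lem: lom}--\ref{lemma: mu star ss}), and both sequences converge to a common limit solving $\mu+\Phi(0)=c$, i.e.\ $c-1/2$. Since the symmetric cutoff $c-1/2$ is also a best response to itself, it is the unique strategy profile surviving IDSDS at every date.

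For \textbf{P2}, fix an integer date $t$ and recall the identity $(\eta^{(n)}_t)^2=\eta_t^2$ established above, so the time-$t$ contemporaneous game in the refined environment is exactly the static game of P1. If every agent uses the symmetric cutoff $\mu^{(n)*}_t$, then an agent with posterior mean $\mu$ obtains $\mu+\Phi\!\big((\mu-\mu^{(n)*}_t)/(\sqrt2\,\eta_t)\big)$ from the risky action, an increasing function of $\mu$; elementary algebra then shows every agent's gain from deviating is at most $\epsilon(n,t):=|\mu^{(n)*}_t-(c-1/2)|$, so $\mu^{(n)*}_t$ is an $\epsilon(n,t)$-equilibrium. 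It remains to show $\epsilon(n,t)\to0$ as $n\to\infty$. Because $(\mu^{(n)*}_\cdot)$ is monotone and confined between $c-\lambda_0$ and $c-1/2$ (the analogue of Lemma~\ref{lemma: mu star limit}(i)--(ii)), it suffices to show the threshold at the \emph{second} grid point converges to $c-1/2$. Writing \eqref{eq: mu star} at that step, $\mu^{(n)*}_{2/n}=c-\Phi\!\big((\mu^{(n)*}_{2/n}-(c-\lambda_0))/A_n\big)$ where, using $(\eta^{(n)}_{j/n})^{-2}=j/(n\sigma_1^2)$ for $j\le n$, $A_n=\sqrt{(\eta^{(n)}_{2/n})^2+(\eta^{(n)}_{1/n})^2}=\sigma_1\sqrt{3n/2}\to\infty$, while the numerator is bounded (by $1/2$) uniformly in $n$; hence the argument of $\Phi$ vanishes and $\mu^{(n)*}_{2/n}\to c-\Phi(0)=c-1/2$. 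Monotonicity then squeezes $\mu^{(n)*}_t\to c-1/2$ for every fixed integer $t\ge1$ (with $n\ge2$; the single term $n=1$ does not affect the limit).

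For \textbf{P3}, observe that the refined environment, after reindexing $\mathcal T^{(n)}$ by the positive integers via $\tau=nt$, is itself an instance of the baseline model with learning process $\tilde\sigma_\tau^2=n\sigma^2_{\lceil\tau/n\rceil}$; its reindexed posterior precision $\tilde\eta_\tau^{-2}:=(\eta^{(n)}_{\tau/n})^{-2}$ linearly interpolates the original precisions $(\eta_s^{-2})_s$, equalling $\eta_s^{-2}$ at $\tau=ns$, and $\lambda^{(n)}_\infty(\theta)=\lim_{\tau\to\infty}\lambda^{(n)}_{\tau/n}(\theta)$. Hence $\eta_t^{-2}=O(t^{2-\epsilon})$ gives $\tilde\eta_\tau^{-2}\le\eta^{-2}_{\lceil\tau/n\rceil}=O\big((\tau/n)^{2-\epsilon}\big)=O(\tau^{2-\epsilon})$, while $\eta_t^{-2}=\Omega(t^{2+\epsilon})$ gives $\tilde\eta_\tau^{-2}\ge\eta^{-2}_{\lceil\tau/n\rceil-1}=\Omega\big((\tau/n-1)^{2+\epsilon}\big)=\Omega(\tau^{2+\epsilon})$; limit learning is inherited in both cases. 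Applying Theorem~\ref{thrm:char}(i) and (ii) to the reindexed model then yields $\lambda^{(n)}_\infty(\theta)=RD(\theta)$ a.e.\ in the first case and $\lambda^{(n)}_\infty(\theta)=NRD(\theta)$ on a positive-measure interval in the second.

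The genuinely new step is the convergence claim in P2: one must rule out that taking the lag to zero \emph{freezes} $\mu^{(n)*}$ at a limit bounded away from $c-1/2$. The resolution is that the refined schedule $\sigma^{(n)}_t=\sqrt n\,\sigma_s$ leaves agents nearly uninformed during the first few sub-periods, so that the indifference condition with an almost-flat posterior pins the threshold close to $c-1/2$ essentially immediately; the rest of P2 and all of P3 are then bookkeeping, and P1 is textbook global-games reasoning. A minor point to watch throughout P3 is keeping the $O(\cdot)$ and $\Omega(\cdot)$ constants uniform in $\tau$ (they may depend on the fixed $n$, which is harmless).
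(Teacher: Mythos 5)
Your proposal is correct. P1 reproduces the paper's iterated-dominance contraction essentially verbatim, and your P3 is the same reindexing device the paper uses: the paper applies it explicitly only to part (ii) (via a convexity argument for the interpolated precisions) and re-runs the squeeze argument directly for part (i), but your uniform treatment through $\tilde\eta_\tau^{-2}=(\eta^{(n)}_{\tau/n})^{-2}$ is equally valid and a touch cleaner. The genuine divergence is P2. The paper never shows $\mu_t^{(n)*}\to c-1/2$; instead it bounds $\sup_\theta\lvert\lambda_t^{(n)}(\theta)-\lambda_{t-1/n}^{(n)}(\theta)\rvert$ by an explicit $\epsilon(n,t)\to 0$ and exploits the exact indifference of the threshold type in the \emph{inertial} game to convert this into an $\epsilon$-equilibrium bound for the contemporaneous game. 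You instead compute the regret of a symmetric cutoff $m$ in the contemporaneous game directly---it equals $\lvert m-(c-1/2)\rvert$, since the worst-off type sits at the cutoff and holds Laplacian beliefs---and then show the thresholds themselves collapse onto $c-1/2$: because the per-sub-period variance is $n\sigma_1^2$, the second grid point's indifference condition has $A_n=\sigma_1\sqrt{3n/2}\to\infty$, forcing $\mu^{(n)*}_{2/n}\to c-\Phi(0)$, after which monotonicity (the reindexed analogue of Lemma~\ref{lemma: mu star limit}) squeezes every later threshold. Your route buys a strictly more interpretable conclusion (convergence of the equilibrium strategies, not merely of regrets); in fact the two formulations are equivalent here, since a symmetric cutoff $m$ is an $\epsilon$-equilibrium of the contemporaneous game if and only if $\lvert m-(c-1/2)\rvert\le\epsilon$, so your computation also reveals that the paper's $\epsilon(n,t)$ implicitly controls $\lvert\mu_t^{(n)*}-(c-1/2)\rvert$. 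The only caveats are cosmetic: your argument needs $0<\sigma_1<\infty$ (supplied by the model's strictly positive signal variances), and $\epsilon(n,t)$ should be padded (say by $1/n$) to keep it strictly positive when $\lambda_0=1/2$.
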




\begin{proof} We show each part in turn. \\

    \underline{\textbf{Part 1.}} Fix any $t \in \mathcal{T}$. Define $b(x)$ as the unique value of $\mu$ solving the following equation:
    \[
    \mu + \Phi\left( \frac{\mu - x}{\sqrt{2}\eta_t} \right) = c.
    \]
    We show by induction that a strategy $s^k$ surviving $k$ rounds of IDSDS satisfies
    \[
    s^k(\mu_{it}) = \begin{cases}
        1 &\text{if $\mu_{it} > b^{k - 1} (c)$} \\
        0 &\text{if $\mu_{it} < b^{k - 1} (c - 1)$},
    \end{cases}
    \]
    where $\mu_{it}$ denotes $i$'s posterior mean at time $t$, and $b^0(x) = x$ holds. First, the statement holds for $k = 1$ because it is strictly dominant to play $1$ ($0$) when $\mu_{it} \in (c, + \infty)$ ($\mu_{it} \in (- \infty, c - 1)$). Next suppose that the statement is true for $k$. Since agents know that others play $1$ whenever their posterior mean is greater than $b^{k - 1}(c)$, agent $i$'s expected payoff of playing $1$ is at least
    \[
    \mu_{it} + \Pr_{it} (\mu_{jt} > b^{k - 1}(c)) = \mu_{it} + \Phi\left( \frac{\mu_{it} - b^{k - 1}(c)}{\sqrt{2}\eta_t} \right).
    \]
    This implies that any strategy surviving $k$ rounds such that $s^k(\mu_{it}) = 0$ for $\mu_{it} \in (b(b^{k - 1}(c)),b^{k - 1}(c)]$ is eliminated in the $(k + 1)$-th round. By an analogous argument, $s^k(\mu_{it}) = 1$ for $\mu_{it} \in [b^{k - 1}(c - 1),b(b^{k - 1}(c - 1)))$ is eliminated. Hence, the statement is true for $k + 1$. Finally, since $\lim_{k \to \infty} b^k(c) = \lim_{k \to \infty} b^k(c - 1) = c - (1/2)$ holds, a cutoff strategy with threshold $c - (1/2)$ is the unique strategy surviving IDSDS.

    \underline{\textbf{Part 2.}} Fix any $n$ and $t \in \mathcal{T}^{(n)}$. Without loss of generality, assume $\lambda_0 > \frac{1}{2}$ so that $\mu_t^{(n)*} \uparrow \mu_\infty^{(n)*}$. Suppressing dependencies on these variables, define
    \[
    x_1(\theta)=\frac{\theta - \mu_t^{(n)*}}{\eta_t^{(n)}} \quad \text{and} \quad x_2(\theta)=\frac{\theta - \mu_{t-1/n}^{(n)*}}{\eta_{t-1/n}^{(n)}}
    \]
    and $x(\theta) = \min\Big\{|x_1(\theta)|,|x_2(\theta)|\Big\} $. Let $\overline{c}=\max\{|c|,|c-1|\}$ and observe that
    \begin{align*}
        \left| \lambda_t^{(n)}(\theta) - \lambda_{t - 1/n}^{(n)}(\theta) \right| &= \left| \Phi\left( x_1(\theta) \right) - \Phi\left( x_2(\theta)\right) \right| \\
        &\leq \varphi(x(\theta)) \left| x_1(\theta) - x_2(\theta) \right|\\
        &\leq |\theta|\varphi(x(\theta)) \left(\frac{1}{\eta_t^{(n)}} - \frac{1}{\eta_{t - 1/n}^{(n)}} \right) + \frac{1}{\sqrt{2\pi}}  \frac{|\eta_{t - 1/n}^{(n)}\mu_t^{(n)*} - \eta_t^{(n)}\mu_{t - 1/n}^{(n)*}|}{\eta_t^{(n)}\eta_{t - 1/n}^{(n)}}  \tag{$\varphi(\cdot) \leq (\sqrt{2\pi})^{-1}$}\\
        &\leq |\theta|\varphi(x(\theta)) \left(\frac{1}{\eta_t^{(n)}} - \frac{1}{\eta_{t - 1/n}^{(n)}} \right) + \left|\frac{\mu_t^{(n)*} -  \mu_{t - 1/n}^{(n)*}}{\sqrt{2\pi} \eta_t^{(n)}}\right| + \left|\frac{\eta_{t - 1/n}^{(n)} - \eta_t^{(n)}}{\sqrt{2\pi} \eta_t^{(n)} \eta_{t - 1/n}^{(n)}} \mu_{t - 1/n}^{(n)*}\right| \\
        &\leq |\theta|\varphi(x(\theta))\left(\frac{1}{\eta_t^{(n)}} - \frac{1}{\eta_{t - 1/n}^{(n)}} \right) + \frac{1}{\sqrt{2\pi} \eta_t^{(n)} (nt - 1)}+\frac{\eta_{t - 1/n}^{(n)} - \eta_t^{(n)}}{\sqrt{2\pi} \eta_t^{(n)} \eta_{t - 1/n}^{(n)}} |\mu_{t - 1/n}^{(n)*}| \tag{$\mu_t^{(n)*} -  \mu_{t - 1/n}^{(n)*} \leq (nt - 1)^{-1}$ by \eqref{eq: bound on difference}}\\
        &=\left(|\theta| \varphi(x(\theta))+\frac{1}{\sqrt{2\pi}}|\mu_{t-1/n}^{(n)*}|\right)\left(\frac{1}{\eta_t^{(n)}} - \frac{1}{\eta_{t - 1/n}^{(n)}} \right)+\frac{1}{\sqrt{2\pi} \eta_t^{(n)} (nt - 1)} \\
        &\leq \max\left\{ |\mu_t^{(n)*}| + \eta_t^{(n)}+\frac{\overline{c}}{\sqrt{2\pi}}, |\mu_{t - 1/n}^{(n)*}| + \eta_{t - 1/n}^{(n)}+\frac{\overline{c}}{\sqrt{2\pi}} \right\} \left(\frac{1}{\eta_t^{(n)}} - \frac{1}{\eta_{t - 1/n}^{(n)}} \right) \\
        &+ \frac{1}{\sqrt{2\pi} \eta_t^{(n)} (nt - 1)} \\
        &\leq \left(2\overline{c}+\eta_1\right)\left(\frac{1}{\eta_t} - \frac{1}{\eta_{t - 1/n}^{(n)}} \right) +\frac{1}{\sqrt{2\pi} \eta_t (nt - 1)}:= \epsilon(n,t),
    \end{align*}
    where the last inequality follows from
    \begin{align*}
        |\theta| \varphi(|x_1(\theta)|) &= \left|\mu_t^{(n)*} + \eta_t^{(n)} x_1(\theta)\right| \varphi(|x_1(\theta)|) \\
        &\leq \left(|\mu_t^{(n)*}| + \eta_t^{(n)} |x_1(\theta)|\right) \varphi(|x_1(\theta)|) \\
        &\leq |\mu_t^{(n)*}| + \eta_t^{(n)} \tag{$|a|\varphi(|a|) \leq \varphi(1) \leq 1$}
    \end{align*}
    and $|\theta| \varphi(|x_2(\theta)|) \leq |\mu_{t - 1/n}^{(n)*}| + \eta_{t - 1/n}^{(n)}$ by analogous arguments. Finally, note that for any fixed $t\geq 1$,
    \[
    \eta_{t-1/n}^{(n)} \to \eta_t
    \]
    since
    \[
    \Big(\eta_{t-1/n}^{(n)}\Big)^{-2}=\Big(\eta_{t}^{(n)}\Big)^{-2}-n^{-1}\sigma_t^{-2}=\eta_t^{-2}-n^{-1}\sigma_t^{-2}
    \]
    and
    \[
    \frac{1}{\sqrt{2\pi} \eta_t (nt - 1)} \to 0
    \]
    as $n\to\infty$.
    
    Hence, the above convergence is \textit{uniform in $\theta$}; i.e. there exists $\epsilon(n,t)$ with $\epsilon(n,t) \downarrow 0$ as $n \to \infty$ for any fixed $t\geq 1$, such that
    \[
    \sup_{\theta \in \mathbb{R}} \left| \lambda_t^{(n)}(\theta) - \lambda_{t - 1/n}^{(n)}(\theta) \right|\leq \epsilon(n,t)
    \]
    Hence, the time-$t$ strategy ``$a_{it}=1$ iff $\mu_{it}\geq \mu_t^{(n)*}$" is an $\epsilon(n,t)$-equilibrium at time $t$, since we may bound the time-$t$ expected payoff of using this strategy in the contemporaneous game \textit{uniformly} across players' time-$t$ information sets $\mu_{it}$: if $\mu_{it}\geq \mu_t^{(n)*}$,
    \begin{align*}
        \mathbb{E}_{it}\Big[\theta+\lambda_{t}^{(n)}(\theta) \Big]=\mathbb{E}_{it}\Big[\theta+\lambda_{t-1/n}^{(n)}(\theta) \Big]+\mathbb{E}_{it}\Big[ \lambda_{t}^{(n)}(\theta)-\lambda_{t-1/n}^{(n)}(\theta)\Big] \\
        \geq c-\mathbb{E}_{it}\Big[ \lambda_{t-1/n}^{(n)}(\theta)-\lambda_{t}^{(n)}(\theta)\Big] \\
        \geq c-\mathbb{E}_{it}\Big| \lambda_{t-1/n}^{(n)}(\theta)-\lambda_{t}^{(n)}(\theta)\Big| \\
        \geq c-\epsilon(n,t) \quad \forall \mu_{it} \in \mathbb{R}
    \end{align*} 
    and an exactly analogous argument holds for the case $\mu_{it}<\mu_t^{(n)*}$.
    
    \underline{\textbf{Part 3 (i).}} Without loss of generality, assume $\lambda_0 > \frac{1}{2}$ so that $\mu_t^{(n)*} \uparrow \mu_\infty^{(n)*}$. By definition, $\eta^{-2}_t = O(t^{2 - \epsilon})$ means that there exist $C$ and $T$ such that $\eta_t^{-2} \leq C t^{2 - \epsilon}$ for all $t \in\{T, T+1, \dots\}$. Since $(\eta_t^{(n)})^{-2} = \eta_t^{-2}$ for $t \in \{1,2,\dots \}$, we have $(\eta_t^{(n)})^{-2} \leq C t^{2 - \epsilon}$ for all $t \in\{T, T+1, \dots\}$. Hence,
    \[
    (\eta_t^{(n)})^2 + (\eta_{t - 1/n}^{(n)})^2 \geq 2 (\eta_t^{(n)})^2 \geq 2 C^{-1} t^{-(2 - \epsilon)} \quad\text{for $t \in \{T, T+1,\dots \}$}
    \]
    Furthermore, since $\mu_t^{(n)*} - \mu_{t - 1/n}^{(n)*}$ is decreasing in $t$, we have
    \begin{equation} \label{eq: bound on difference}
        \mu_t^{(n)*} - \mu_{t - 1/n}^{(n)*} \leq \frac{1}{nt - 1} \sum_{s \in \mathcal{T}^{(n)}, s \leq t - 1/n} (\mu_{s + 1/n}^{(n)*} - \mu_s^{(n)*}) = \frac{1}{nt - 1} (\mu_t^{(n)*} - \mu_{1/n}^{(n)*}) \leq \frac{1}{nt - 1}.
    \end{equation}
    Using this inequality, we can apply the squeeze theorem and compute $\mu_\infty^{(n)*}$ as follows: for $t \in\{T, T+1, \dots\}$, we have
   \begin{align*}
    0\leq \frac{\mu_t^{(n)*} - \mu_{t - 1/n}^{(n)*}}{\sqrt{(\eta_t^{(n)})^2 + (\eta_{t - 1/n}^{(n)})^2}}\leq \underbrace{(C/2)^{1/2} \frac{t^{1 - \epsilon/2}}{nt - 1}}_{\to 0 \quad \text{as $t \to \infty$}} \\
    \implies \mu_\infty^{(n)*}=c-\Phi\left(\lim_{t\to\infty}\frac{\mu_t^{(n)*} - \mu_{t - 1/n}^{(n)*}}{\sqrt{(\eta_t^{(n)})^2 + (\eta_{t - 1/n}^{(n)})^2}}\right)=c-(1/2)
    \end{align*}
    Note that since the sequence $(\mu_t^{(n)*})_{t \in \mathcal{T}^{(n)}}$ converges, the subsequence $(\mu_t^{(n)*})_{t \in \mathcal{T}}$ also converges to the same limit $\mu_\infty^{(n)*}$.

    \underline{\textbf{Part 3 (ii).}} By definition, $\eta^{-2}_t = \Omega(t^{2 + \epsilon})$ means that there exist $C$ and $T$ such that $\eta_t^{-2} \geq C t^{2 + \epsilon}$ for all $t \in \mathcal{T}$ and $t \geq T$. Since $t^{2 + \epsilon}$ is strictly convex, it must be that $(\eta_t^{(n)})^{-2} \geq C t^{2 + \epsilon}$ for all $t \in \mathcal{T}^{(n)}$ and $t \geq T$.

    To invoke Theorem~\ref{thrm:char}, consider an auxiliary environment in $\mathcal{T}$ such that $\tilde\eta_s := \eta_{s/n}^{(n)}$ for $s \in \mathcal{T}$. Then, for all $s \in \mathcal{T}$ and $s \geq nT =: \tilde T$, we have
    \begin{align*}
        (\tilde\eta_s)^{-2} &= (\eta_{s/n}^{(n)})^{-2} \geq C \left( \frac{s}{n} \right)^{2 + \epsilon} \tag{$\frac{s}{n} \in \mathcal{T}^{(n)}$ and $\frac{s}{n} \geq T$} \\
        &= \underbrace{Cn^{-(2 + \epsilon)}}_{=: \tilde C} s^{2 + \epsilon},
    \end{align*}
    which implies $(\tilde\eta_s)^{-2} = \Omega (s^{2 + \epsilon})$. Hence, Theorem~\ref{thrm:char} (ii) implies $\lambda_\infty^{(n)}(\theta)=NRD(\theta)$.
\end{proof}

\section{Signals about past play}\label{appendix:pastplay}
In the main text, we claimed that our model is equivalent to one in which players receive noisy signals of past play. We now formalize this claim by considering a single modification to our model: at each time $t$, each player $i \in [0,1]$ observes the independent signals 
\[
x_{it} \sim N\Big(\theta,\sigma^2_t \Big) \quad \text{and} \quad y_{it} \sim N\Big(\Phi^{-1}(\lambda_{t-1}),\tau^2_t\Big) \text{ for $t > 1$}
\]
This formulation follows that of \cite{dasgupta2007coordination} and \cite{trevino2020informational}. 

\begin{proposition}
$(\lambda_t)_t$ is consistent with some $\Sigma$ under our main model if and only if $(\lambda_t)_t$ is consistent with some $(\tau_t^2,\sigma_t^2)_{t\geq 1}$ under the Appendix B model.
\end{proposition}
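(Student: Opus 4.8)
The plan is to reduce everything to one self-contained claim and then read both directions off it: \emph{for any noise sequences $(\tau_t^2,\sigma_t^2)_{t\ge 1}$, the Appendix B model generates exactly the aggregate-play path $(\lambda_t)_t$ of the main model run with the learning process $\Sigma'=((\sigma_t')^2)_t$ defined recursively by $(\sigma_1')^{-2}:=\sigma_1^{-2}$ and, for $t\ge 2$, $(\sigma_t')^{-2}:=\sigma_t^{-2}+\eta_{t-1}^{-2}\tau_t^{-2}$ where $\eta_{t-1}^{-2}:=\sum_{s=1}^{t-1}(\sigma_s')^{-2}$.} The underlying idea is that, in equilibrium, $\Phi^{-1}(\lambda_{t-1})=(\theta-\mu_{t-1}^*)/\eta_{t-1}$ is an \emph{affine} function of $\theta$, so a signal about past play is a disguised Gaussian signal about $\theta$. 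Granting the claim, the ``if'' direction is immediate. For ``only if'', given $(\lambda_t)_t$ consistent with some $\Sigma'=((\sigma'_t)^2)_t$ under the main model, build $(\tau_t^2,\sigma_t^2)$ recursively: set $\sigma_1^{-2}:=(\sigma_1')^{-2}$, and at stage $t\ge2$ (with $\eta_{t-1}^{-2}=\sum_{s\le t-1}(\sigma_s')^{-2}$ already determined) pick any $\tau_t^2>\eta_{t-1}^{-2}(\sigma_t')^{2}$ and set $\sigma_t^{-2}:=(\sigma_t')^{-2}-\eta_{t-1}^{-2}\tau_t^{-2}>0$; then the induced process is exactly $\Sigma'$, so the Appendix B model with $(\tau_t^2,\sigma_t^2)$ reproduces $(\lambda_t)_t$.

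I would prove the claim by induction on $t$, essentially re-running the proof of Lemma~\ref{lem: lom} inside the extended model (which simultaneously yields equivalence and uniqueness). The induction hypothesis is: in the Appendix B model, (a) each agent $i$'s time-$t$ posterior about $\theta$ is $N(\mu_{it},\eta_t^2)$ with $\eta_t^{-2}=\sum_{s\le t}(\sigma_s')^{-2}$; (b) conditional on $\theta$, $\mu_{it}\sim N(\theta,\eta_t^2)$ and posteriors are independent across agents; (c) agent $i$ plays the risky action iff $\mu_{it}\ge \mu_t^*$, with $\mu_t^*$ solving the main-model law of motion \eqref{eq: mu star} (with the $\eta$'s above), so that $\lambda_t(\theta)=\Phi((\theta-\mu_t^*)/\eta_t)$. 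The base case $t=1$ is immediate: no $y$-signal is observed and $\mu_1^*=c-\lambda_0$. For the step, part (c) at $t-1$ gives $\Phi^{-1}(\lambda_{t-1})=(\theta-\mu_{t-1}^*)/\eta_{t-1}$, so the deterministic affine rescaling $\tilde y_{it}:=\eta_{t-1}y_{it}+\mu_{t-1}^*$ obeys $\tilde y_{it}\sim N(\theta,\eta_{t-1}^2\tau_t^2)$: a Gaussian signal about $\theta$ of precision $\eta_{t-1}^{-2}\tau_t^{-2}$, independent of $x_{it}$. Gaussian conjugacy fuses $x_{it}$ and $\tilde y_{it}$ into a single signal of precision $(\sigma_t')^{-2}$, giving (a); (b) follows from the continuum law of large numbers exactly as in Lemma~\ref{lem: lom} (cross-agent independence given $\theta$ persists because the $y$-noise is i.i.d.\ across agents given $\theta$).

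To close the step I would verify (c) at time $t$. As in Lemma~\ref{lem: lom}, agent $i$'s belief over a generic agent $j$'s time-$(t-1)$ posterior mean is $\mu_{j,t-1}\mid\mathcal F_{it}\sim N(\mu_{it},\eta_{t-1}^2+\eta_t^2)$, hence $\Ex_{it}[\lambda_{t-1}(\theta)]=\Phi\big((\mu_{it}-\mu_{t-1}^*)/\sqrt{\eta_{t-1}^2+\eta_t^2}\big)$ (equivalently, integrate $\Phi((\theta-\mu_{t-1}^*)/\eta_{t-1})$ against $\theta\mid\mathcal F_{it}\sim N(\mu_{it},\eta_t^2)$ and use $\Ex_{Z\sim N(m,v)}[\Phi(Z)]=\Phi(m/\sqrt{1+v})$). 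Setting $\mu_{it}+\Ex_{it}[\lambda_{t-1}(\theta)]=c$ is precisely \eqref{eq: mu star} with the $\eta$'s above, so $\mu_t^*$, and therefore $\lambda_t(\cdot)=\Phi((\cdot-\mu_t^*)/\eta_t)$, coincide with the main model under $\Sigma'$; unique solvability of \eqref{eq: mu star} gives uniqueness of the cutoff. This completes the induction and the claim.

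I expect the main obstacle to be organizational rather than technical: the affine rescaling in the inductive step invokes $\mu_{t-1}^*$ and $\eta_{t-1}$, which are themselves equilibrium objects, so the argument must be sequenced so that these are pinned down at stage $t-1$ before being used at stage $t$ — the recursive definition of $\Sigma'$ is well-posed for exactly this reason, and the induction is the right device to make the ``signals about past play are really signals about $\theta$'' slogan rigorous. A secondary point to check carefully is that conditioning on $\theta$ still makes all agents' information mutually independent once $y$-signals are present, so that the continuum law of large numbers (hence the conditional law of $\mu_{it}$ and the formula $\lambda_t(\theta)=\Phi((\theta-\mu_t^*)/\eta_t)$) applies verbatim. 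Since the entire argument is period-by-period, Assumption~\ref{assumption: limit learning} plays no role here.
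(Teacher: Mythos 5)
Your proof is correct and follows essentially the same route as the paper's: the affine rescaling $\eta_{t-1}y_{it}+\mu_{t-1}^*\,|\,\theta\sim N(\theta,\eta_{t-1}^2\tau_t^2)$ turning the past-play signal into a $\theta$-signal, Gaussian precision-addition giving $(\sigma_t')^{-2}=\sigma_t^{-2}+\eta_{t-1}^{-2}\tau_t^{-2}$, and an induction re-running Lemma~\ref{lem: lom} inside the extended model. The only (cosmetic) difference is in the easy direction, where the paper simply takes $\tau_t^2=+\infty$ while you construct finite $\tau_t^2$; note the paper's choice also covers periods with $(\sigma_t')^{-2}=0$ (as in Example~\ref{ex:brd}), where your strict inequality $\sigma_t^{-2}>0$ cannot be met.
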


\begin{proof}
Notice that the environment in the main text can be viewed as the case in which $\tau^2_t = +\infty$ for all $t \geq 0$. Also observe that the case in which $\sigma^2_1 < +\infty$, $\sigma^2_t = +\infty$ for all $t  > 1$ can be viewed as a social learning environment: at the start of the game, each player receives an independent signal of the state and, over the course of the game, the path of play also aggregates information about the state. 

We now show that the dynamics of the game induced by the path $(\tau^2_t,\sigma^2_t)_{t \geq 1}$ can be replicated by some alternate path $(\sigma'{^2}_t)_{t \geq 1}$ in our main model in which past play is unobserved. Let $\eta_t^{-2}$ and $\eta_t'^{-2}$ be the posterior precisions induced by $(\tau^2_t,\sigma^2_t)_{t \geq 1}$ and $\Sigma'$, respectively.

As before, let $\eta^2_t$ be the posterior variance at time $t$ 
for $t \geq 2$ with the initial conditions $\mu_{i1} = x_{i1}$ and $\eta_1^2=\sigma_1^2$. Then, the law of motion is unchanged from the main model and given by Lemma \ref{lem: lom}.

We can show the law of motion of $\mu_{it}$ and $\eta_t^2$ and \eqref{eq: lam LoM} by induction. Define $I_{is}=(x_{is},y_{is})_{s\leq t}$. At $t=1$, agent $i$'s posterior is given by $\theta|I_{i1}\sim N(x_{i1},\sigma_1^2)$, which implies $\mu_{i1} = x_{i1}$ and $\eta_1^2 = \sigma_1^2$. Since $\lambda_0$ is common knowledge, she takes $a_{i1}=1$ if and only if $\mu_{i1}\geq c-\lambda_0=\mu_1^*$. Since $\mu_{i1}|\theta\sim N(\theta,\eta_1^2)$, $\lambda_1$ is given by 
    \begin{align*}
        \lambda_1=\text{Pr}(x_{i1}\geq\mu_1^*)=\Phi\left(\frac{\theta-\mu_1^*}{\sigma}\right).
    \end{align*}
    
Now suppose that the statement holds for $t-1$. By assumption, we have $\Phi^{-1}(\lambda_{t-1})=(\theta-\mu_{t-1}^*)/\eta_{t-1}$, which implies
    \begin{align*}
        \mu_{it-1}\,|\,\theta&\sim N(\theta,\eta_{t-1}^2),\\
        x_{it}\,|\,\theta&\sim N(\theta,\sigma_t^2),\\
        (\eta_{t-1}y_{it}+\mu_{t-1}^*)\,|\,\theta&\sim N(\theta,\tau_t^2\eta_{t-1}^2).
    \end{align*}
Since they are independent of each other conditional on $\theta$, Bayesian updating yields
$\eta_1^{-2}=\sigma_1^{-2}$, $\mu_{i1}=x_{i1}$, and for each $t\geq 2$,
\begin{align*}
    \eta_t^{-2}=\eta_{t-1}^{-2}(1+\tau_t^{-2})+\sigma_t^{-2} \\
    \mu_{it}=\sum_{s=1}^t \left(\frac{\eta_{s-1}^{-2}\tau_s^{-2}}{\eta_t^{-2}}\tilde{y}_{is}+\frac{\sigma_s^{-2}}{\eta_t^{-2}} x_{is} \right)
\end{align*}
where $\tau_1^{-2}:=0$. This implies:
\[
\eta_t^{-2}=\sum_{s=1}^t \sigma_s^{-2}\prod_{r>s}^t (1+\tau_r^{-2})
\]

Since \eqref{eq: mu star} remains unchanged from Lemma~\ref{lem: lom}, we obtain \eqref{eq: lam LoM} by the proof of Lemma~\ref{lem: lom}. Then, we can find a sequence $(\sigma'_t)_t$ such that the dynamics are equivalent:
\[
\sigma_t'^{-2}:=\eta_t^{-2}-\eta_{t-1}^{-2}
\]
\end{proof}


\section{Finite Players}\label{appendix:finiteplayers}
Consider the following game, which we denote as the \textit{$N$-player model}, in contrast to the main text's \textit{continuum model}. There are $N$ players, indexed by $i \in [N]$. Actions, timing, and information are the same as in the main text.\footnote{Define the random variables $(x_{it})_{i\geq 1,t\geq 1}$ on the same probability space.} Agent $i$'s payoffs at time $t$ are:
\[
u(a_{it},\lambda_{N,-i,t-1},\theta):=\begin{cases} 
\theta+\lambda_{N,-i,t-1} & \text{if } a_{it}=1 \\
c & \text{if } a_{it}=0 \\
\end{cases}
\]
where
\[
\lambda_{N,-i,t-1}=\frac{1}{N-1}\sum_{j\in [N]\setminus\{i\}} \boldsymbol{1}[a_{j,t-1}=1]
\]
is the empirical proportion of non-$i$ agents who took the risky action at time $t-1$ for $t>1$, and $\lambda_{N,-i,t-1}:=\lambda_0$ is common knowledge. Let
\[
\lambda_{N,t}:=\frac{1}{N} \sum_{i=1}^N \boldsymbol{1}[a_{it}=1]
\]
be the empirical proportion of all agents who attack at time $t$. 

The purpose of Appendix C is to show that the dynamics of the $N$-player model converge to the dynamics of the continuum model as $N$ becomes large. Let us briefly discuss this exercise. With a finite number of players, there is randomness in the empirical distribution of beliefs at time-$t$. This, in turn, translates into randomness in the path of aggregate play. Theorem \ref{thrm:approx} shows that in an environment with many players, this randomness washes out.

To show Theorem \ref{thrm:approx}, we begin by analyzing the dynamics of belief thresholds in the $N$-player model.

\begin{lemma}
At each time $t\geq 1$, each agent $i \in [N]$ takes the risky action if and only if $\mu_{it}\geq \mu_t^*$, where $(\mu_t^*)_{t\geq 1}$ is the same sequence of thresholds as in the main model.
\end{lemma}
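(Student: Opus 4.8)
The plan is to replay the inductive proof of Lemma~\ref{lem: lom} almost verbatim; the only genuinely new ingredient is the observation that, although realized opponent play is now random (a binomial average rather than a deterministic mass), its \emph{conditional expectation} entering agent $i$'s indifference condition coincides exactly with the continuum expression. So I would prove the lemma by induction on $t$.

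\emph{Base case $t=1$.} Agent $i$'s posterior is $\theta\mid\mathcal F_{i1}\sim N(x_{i1},\sigma_1^2)$, so $\mu_{i1}=x_{i1}$; since $\lambda_{N,-i,0}=\lambda_0$ is common knowledge, the expected payoff of the risky action is $\mu_{i1}+\lambda_0$, which exceeds $c$ iff $\mu_{i1}\ge c-\lambda_0=\mu_1^*$, the same threshold as in the continuum model.

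\emph{Inductive step.} Suppose that at time $t-1$ every agent $j$ plays the risky action iff $\mu_{j,t-1}\ge\mu_{t-1}^*$. By standard Gaussian updating $\theta\mid\mathcal F_{it}\sim N(\mu_{it},\eta_t^2)$, exactly as in Lemma~\ref{lem: lom}. The crucial computation is $\mathbb{E}_{it}[\lambda_{N,-i,t-1}]=\frac{1}{N-1}\sum_{j\ne i}\mathbb{P}_{it}(\mu_{j,t-1}\ge\mu_{t-1}^*)$. For each fixed $j\ne i$, conditional on $\theta$ we have $\mu_{j,t-1}\sim N(\theta,\eta_{t-1}^2)$, and $\mu_{j,t-1}$ is independent of $\mathcal F_{it}$ because the signal families $\{x_{js}\}_s$ and $\{x_{is}\}_s$ are independent given $\theta$; composing this with $\theta\mid\mathcal F_{it}\sim N(\mu_{it},\eta_t^2)$ shows that the conditional law of $\mu_{j,t-1}$ given $\mathcal F_{it}$ is $N(\mu_{it},\eta_t^2+\eta_{t-1}^2)$ --- identical to the marginal law of a generic opponent's posterior mean in the continuum model. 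Hence each summand equals $\Phi\!\left((\mu_{it}-\mu_{t-1}^*)/\sqrt{\eta_{t-1}^2+\eta_t^2}\right)$, independently of $j$, so that
\[
\mathbb{E}_{it}\big[\theta+\lambda_{N,-i,t-1}\big]=\mu_{it}+\Phi\!\left(\frac{\mu_{it}-\mu_{t-1}^*}{\sqrt{\eta_{t-1}^2+\eta_t^2}}\right),
\]
which is precisely the function $F(\cdot)$ from the proof of Lemma~\ref{lem: lom}. Since $F$ is continuous and strictly increasing with $F(c)>c$ and $F(c-1)<c$, the equation $F(\mu_{it})=c$ has a unique root, namely the same $\mu_t^*\in(c-1,c)$ defined by \eqref{eq: mu star}, and agent $i$ prefers the risky action iff $\mu_{it}\ge\mu_t^*$. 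This closes the induction.

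The one step that carries the argument is the italicized marginalization claim: the per-opponent probability in the finite model equals its continuum analog, so an average of $N-1$ such terms is unchanged even though the realized empirical frequency is not. Two small points deserve a line in the writeup. First, because $\lambda_{N,-i,t-1}$ excludes $i$'s own action, the conditional independence invoked (which needs $j\ne i$) applies term by term with no self-reference, which is exactly why the payoff is specified via $\lambda_{N,-i,t-1}$ rather than $\lambda_{N,t-1}$. Second, as in the main text agents are taken to be myopic; with finitely many players this is now a modeling assumption rather than a consequence of atomlessness, but it is the relevant best-response notion and is consistent with the discounted reformulation noted in the main text. I would also remark, as a pointer toward Theorem~\ref{thrm:approx}, that it follows that $\lambda_{N,t}=\frac1N\sum_i\mathbf 1[\mu_{it}\ge\mu_t^*]$ is an empirical average of conditionally i.i.d.\ Bernoulli$\big(\Phi((\theta-\mu_t^*)/\eta_t)\big)$ variables, i.e.\ a noisy version of \eqref{eq: lam LoM}.
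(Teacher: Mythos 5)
Your proof is correct and follows essentially the same inductive argument as the paper: the base case uses common knowledge of $\lambda_0$, and the inductive step shows that each opponent's posterior mean has conditional law $N(\mu_{it},\eta_{t-1}^2+\eta_t^2)$ given $\mathcal F_{it}$, so the average over the $N-1$ opponents reproduces exactly the continuum indifference condition and hence the same cutoff $\mu_t^*$. Your added remarks (on excluding $i$'s own action, myopia, and the Bernoulli structure of $\lambda_{N,t}$) are accurate but not needed for the result.
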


\begin{proof}
\underline{Base case}: At time $t=1$, each agent $i$ takes the risky action iff $\mu_{i1}=x_{i1}\geq c-\lambda_0=:\mu_1^*$. 

\underline{Inductive step}: Fix $t>1$ and assume that at time $t-1$, each agent $i$ takes the risky action iff $\mu_{i,t-1}\geq \mu_{t-1}^*$. At time $t$, each agent $i$ takes the risky action iff
\begin{align*}
    \mathbb{E}_{it}\Big[\theta+\lambda_{N,i,t-1}\Big]=\mu_{it}+\frac{1}{N-1}\sum_{j\neq i} \mathbb{P}_{it}\Big(\mu_{j,t-1}\geq \mu_{t-1}^*\Big)\geq c 
\end{align*}
Note that at time $t$, agent $i$ believes $\theta \sim N(\mu_{it},\eta_t^2)$ and for each $j\neq i$, $\mu_{j,t-1}|\theta \sim N(\theta,\eta_{t-1}^2)$. Hence, at time $t$ agent $i$ believes 
\[
\mu_{j,t-1} \sim N(\mu_{it},\eta_{t-1}^2+\eta_t^2)
\]
which implies she takes the risky action at time $t$ iff
\[
\mu_{it}+\Phi\left(\frac{\mu_{it}-\mu_{t-1}^*}{\sqrt{\eta_{t-1}^2+\eta_t^2}}\right)\geq c \iff \mu_{it}\geq \mu_t^*
\]
as desired.
\end{proof}

With this lemma in hand, observe that
\[
\lambda_{N,t}=1-F_{N,t}(\mu_t^*) \quad \text{and} \quad \lambda_t(\theta)=1-F_{\theta,t}(\mu_t^*)
\]
where $F_{N,t}$ is the empirical CDF of $N$ draws of $\mu_{it}$ and $F_{\theta,t}$ is the CDF of $\mu_{it} \sim N(\theta,\eta_t^2)$.

\begin{theorem}\label{thrm:approx}
For any $N\geq 1$, $(\Sigma,\lambda_0)$, and $\theta \in \mathbb{R}$,
\[
|\lambda_{N,\infty}-\lambda_\infty(\theta)|=0 \quad \mathbb{P}_\theta-\text{a.s.}
\]
    
\end{theorem}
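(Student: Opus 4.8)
The plan is to fix the fundamental $\theta$ with $\theta \neq \mu_\infty^*$ (the single point $\theta=\mu_\infty^*$ is Lebesgue-null, matching the ``a.e.'' qualifier in Lemma~\ref{lemma: mu star limit}(iv)) and to argue, without loss of generality as in the main text, under $\lambda_0>1/2$. By Lemma~\ref{lemma: mu star limit}, $\mu_t^*\uparrow\mu_\infty^*\le c-1/2$ and $\lambda_\infty(\theta)=\mathbbm{1}[\theta\ge\mu_\infty^*]$. I will treat $\theta>\mu_\infty^*$ (so $\lambda_\infty(\theta)=1$); the case $\theta<\mu_\infty^*$ is symmetric. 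It therefore suffices to show that $\lambda_{N,t}\to1$ as $t\to\infty$, $\mathbb{P}_\theta$-a.s.; in particular this establishes that the limit $\lambda_{N,\infty}$ exists a.s.

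First I would separate the threshold from the state: since $\mu_t^*<\mu_\infty^*<\theta$, the constant $\delta:=\theta-\mu_\infty^*>0$ satisfies $\theta-\mu_t^*\ge\delta$ for all $t$, so for every player $i\in[N]$ and every $t$,
\[
\mathbbm{1}[\mu_{it}\ge\mu_t^*]\ \ge\ \mathbbm{1}[\mu_{it}-\theta\ge-\delta],\qquad\text{hence}\qquad \lambda_{N,t}\ \ge\ \frac1N\sum_{i=1}^N\mathbbm{1}[\mu_{it}-\theta\ge-\delta].
\]
Next I would show that each posterior mean converges a.s.\ to the truth. Writing $x_{is}=\theta+\epsilon_{is}$ with $\epsilon_{is}\sim N(0,\sigma_s^2)$ independent across $s$ under $\mathbb{P}_\theta$, the Gaussian updating formula gives $\mu_{it}-\theta=\eta_t^2\sum_{s=1}^t\sigma_s^{-2}\epsilon_{is}$. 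The summands $\sigma_s^{-2}\epsilon_{is}$ are independent with mean zero and variances $\sigma_s^{-2}$, the partial sums are divided by $\eta_t^{-2}=\sum_{s=1}^t\sigma_s^{-2}$, which diverges by Assumption~\ref{assumption: limit learning}, and $\sum_t\sigma_t^{-2}/(\eta_t^{-2})^2<\infty$ because $\eta_t^{-2}\uparrow\infty$ (a telescoping/integral comparison). Kolmogorov's strong law of large numbers (or the martingale SLLN applied to $Z_{it}:=\sum_{s\le t}\sigma_s^{-2}\epsilon_{is}$, whose predictable quadratic variation is $\eta_t^{-2}$) then yields $\mu_{it}\to\theta$, $\mathbb{P}_\theta$-a.s., for each fixed $i$.

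Finally I would combine over the $N$ players. Intersecting the $N$ almost-sure events from the previous step---a finite intersection, hence still almost sure---on that event there is a (random) time $T_1$ with $\mu_{it}-\theta>-\delta$ for all $i\in[N]$ and all $t\ge T_1$; by the display above this forces $\lambda_{N,t}=1$ for all $t\ge T_1$, so $\lambda_{N,\infty}=1=\lambda_\infty(\theta)$, $\mathbb{P}_\theta$-a.s. For $\theta<\mu_\infty^*$ one instead uses $\mu_t^*-\theta\ge\delta$ eventually together with the reverse bound $\mathbbm{1}[\mu_{it}\ge\mu_t^*]\le\mathbbm{1}[\mu_{it}-\theta\ge\delta]$ to conclude $\lambda_{N,t}\to0=\lambda_\infty(\theta)$ a.s.

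I expect the main obstacle to be the second step---upgrading the easy $L^2$ statement $\mathrm{Var}_\theta(\mu_{it})=\eta_t^2\to0$ to almost-sure convergence. A direct Borel--Cantelli attempt on $\mathbb{P}_\theta(\mu_{it}<\theta-\delta)=\Phi(-\delta/\eta_t)$ fails when $\eta_t^{-2}$ grows slowly (the tail probabilities need not be summable), so one really needs the SLLN, and it is worth noting that limit learning, $\eta_t^{-2}\to\infty$, is exactly the hypothesis that makes it apply. A secondary subtlety, handled in the first step, is that $t\mapsto\lambda_{N,t}$ is not monotone, so one cannot simply pass to a limit; instead one shows $\lambda_{N,t}$ is eventually \emph{exactly} $0$ or $1$, which is what the fixed margin $\delta$ between $\theta$ and the limiting threshold $\mu_\infty^*$ buys.
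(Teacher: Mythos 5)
Your proposal is correct and follows essentially the same route as the paper, whose entire proof is the observation that $\mu_{i,\infty}=\theta$ $\mathbb{P}_\theta$-a.s.\ (cited there via Doob's consistency theorem rather than your SLLN/martingale argument) combined with a finite intersection over the $N$ players. Your additional care with the margin $\delta=|\theta-\mu_\infty^*|>0$ and the exclusion of the null point $\theta=\mu_\infty^*$ makes explicit steps the paper leaves implicit, but does not change the argument.
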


\begin{proof}

For each $N\geq 1$ and each $i \in [N]$, note that $\mu_{i,\infty}=\theta$ $\mathbb{P}_\theta$-a.s. by Doob's theorem. Hence, $\lambda_{N,\infty}=\lambda_\infty(\theta)$ $\mathbb{P}_\theta$-a.s.
\end{proof}

\section{Alternative Interpretations of Model}
\label{app:microf}
\textbf{Overlapping generations model.} There is an initial unit mass of agents with proportion $\lambda_0 \in (0,1)$ who take the risky action, all of whom share a common improper uniform prior about $\theta$. At each time $t\geq 1$, a unit mass of children is born (generation $t$), each of whom inherits her prior about $\theta$ from her parent's (generation $t-1$) belief. Each child $i$ receives a signal $x_{it} \sim N(\theta,\sigma_t^2)$, updates her belief about $\theta$, chooses an action, and gets uniformly matched to play the static game below with a member of generation $t-1$. Finally, child $i$ becomes parent $i$ and continues to take her time-$t$ action whenever matched with an agent from generation $t+1$. We interpret this as due to switching costs---since agents' actions when playing against the future generation are not payoff-relevant, switching costs ensure this.

\textbf{Two-player interpretation.} Consider the set of static symmetric two-player games, indexed by the state $\theta \in \mathbb{R}$, whose payoff matrix is
\[
\begin{blockarray}{ccc}
\ & A & N \\
\begin{block}{c(cc)}
  A & (\theta+1,\theta+1) & (\theta,c) \\
  N & (c,\theta) & (c,c) \\
\end{block}
\end{blockarray}
\]
Note that for $\theta \in (c-1,c)$, there are two pure strategy equilibria: $(A,A)$ and $(N,N)$. Applying \cite*{harsanyi1988general}'s risk dominance criterion selects $(A,A)$ if and only if $\theta \geq c-(1/2)$.

\cite*{carlsson1993global} propose a static global games framework in which two players play the incomplete information game with state-dependent matrix from above.\footnote{Our static two-player game corresponds to Figure~1 of \cite*{carlsson1993global} with $x = 4(c - \theta)$, $\alpha_1 = \alpha_2 = N$, and $\beta_1 = \beta_2 = A$.} Consider the case where both players share a common improper uniform prior, and each receive iid Gaussian signals distributed $N(\theta,\sigma^2)$. \cite*{carlsson1993global} then show that as $\sigma \downarrow 0$, the risk dominant equilibrium is selected.

In this section, we show a tight connection between the above static global games setting and our model where $\Sigma=(\sigma^2,\infty,\infty,\ldots)$, which we study in Example \ref{ex:brd}. In this case, $(\mu_t^*)_t$ evolves according to the law of motion
\[
\mu_t^*+\Phi\left(\frac{\mu_t^*-\mu_{t-1}^*}{\sqrt{2}\sigma}\right)=c
\]
with $\mu_1^*=c-\lambda_0$. As we have shown, this equation has a unique fixed point at $c-(1/2)$ and hence we have global convergence $\mu_t^* \to c-(1/2)$ for any $\lambda_0 \in (0,1)$. Furthermore, the entire path of $(\mu_t^*)_t$ corresponds to the iterated best-response dynamics with symmetric switching strategies in \cite*{carlsson1993global}'s two-player setting with the informational environment from above, in which $\mu_\infty^*$ is the unique rationalizable switching strategy. To see this, let $\mathbb{P}_x(\theta,x')$ denote the joint posterior over the state $\theta$ and my opponent's signal $x'$, conditional on my signal $x$.

The key observation is that, if my opponent follows a switching strategy with cutoff $\mu'$, my expected payoff from $A$ given $x$ is
\begin{align*}
    \mathbb{P}_x(x'\geq \mu')\Big( \mathbb{E}_x[\theta|x'\geq \mu']+1\Big)+\mathbb{P}_x(x'<\mu')\mathbb{E}_x[\theta|x'<\mu'] \\
    =\mathbb{E}_x[\theta]+\mathbb{P}_x(x'\geq \mu')=x+\int_\theta \mathbb{P}(x'\geq \mu'|\theta,x) d\mathbb{P}_x(\theta)=x+\mathbb{E}_x[\lambda(\theta)]
\end{align*}
which proves that if my opponent follows a switching strategy with cutoff $\mu_{t-1}^*$, my best response is to follow a switching strategy with cutoff $\mu_t^*$, as desired.

An analogous argument allows us to extend the interpretation of our model dynamics as two-player iterated joint learning and best-response dynamics by considering our model for arbitrary $\Sigma$. In this case, note that if my opponent receives signal distributed $N(\theta,\eta_{t-1}^2)$ (corresponding to the time-$(t-1)$ distribution of posterior means, conditional on the state) and my belief is about the state is $N(\mu_{it},\eta_t^2)$, then my (state-unconditional) belief about my opponent's signal is
\[
x' \sim N(\mu_{it},\eta_{t-1}^2+\eta_t^2) \implies \mathbb{P}_{it}(x'\geq \mu')=\Phi\left(\frac{\mu_{it}-\mu'}{\sqrt{\eta_{t-1}^2+\eta_t^2}}\right)
\]
and hence our general setting is also consistent with iterated joint learning and best-response dynamics. This can be interpreted as best-response dynamics with bounded rationality, since at each time each player is only able to take into account one extra signal.

\section{Payoffs generalizations} \label{appendix:payoffextension}
In the main text, we assumed that the payoff of taking action $1$ is the sum of the state and the measure of agents who played action $1$ in the previous period. Now we relax this assumption and suppose agent $i$'s payoffs at time $t$ are:
\[
        u(a_{it},\lambda_{t-1},\theta):=\begin{cases} 
        a\theta+ b \lambda_{t-1} & \text{if } a_{it}=1 \\
        c & \text{if } a_{it}=0 \\
        \end{cases}
\]
with $a \in (0,1)$ and $b \in (0,1)$. The parameter of $b / a$ can be interpreted as the extent of intertemporal strategic complementarity. To simplify the notations, we define the followings:
\begin{align*}
    \tilde\theta (a, b) &:= \frac{a}{b} \theta \\
    \tilde c (a, b) &:= \frac{c}{b} \\
    \tilde\sigma_t^2 (a, b) &:= \left( \frac{a}{b} \right)^2 \sigma_t^2 \\
    \tilde\eta_t^2 (a, b) &:= \left( \frac{a}{b} \right)^2 \eta_t^2.
\end{align*}
In this environment, we define the risk dominant action at state $\theta$ as
\[
RD(\theta) := \mathbbm{1}\left[\theta \geq \frac{c}{a}-\frac{1}{2}\frac{b}{a} \right] = \mathbbm{1}[\tilde{\theta}(a, b) \geq \tilde{c}(a, b)-(1/2)]
\]
i.e., the action that one would take if she knew the state and conjectured that half of all agents take the risky action. Analogously, we define the non-risk dominant action as $NRD(\theta):=1-RD(\theta)$.

\begin{proposition}
    All the results in the main text hold by replacing $c$, $\sigma_t^2$, $\eta_t^2$, and $\bar\beta = 1+|c-(1/2)-\theta|/2$ with $\tilde c (a, b)$, $\tilde\sigma_t^2 (a, b)$, $\tilde\eta_t^2 (a, b)$, and $\tilde{\beta} := 1+\big|\tilde c (a, b)-(1/2)-\tilde\theta (a, b)\big|/2$, respectively.
\end{proposition}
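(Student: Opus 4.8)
The plan is to reduce the generalized game to the baseline model of the main text by a single affine change of variables and then transfer every result verbatim. Since $a>0$ and $b>0$, dividing agent $i$'s time-$t$ payoff by $b$ is a positive rescaling that leaves best responses unchanged and turns the payoff of the risky action into $(a/b)\theta+\lambda_{t-1}$ against a cost $c/b$. Writing $\tilde\theta:=\tilde\theta(a,b)=(a/b)\theta$ and $\tilde c:=\tilde c(a,b)=c/b$, and letting each agent track the relabeled signal $\tilde x_{it}:=(a/b)x_{it}\sim N(\tilde\theta,(a/b)^2\sigma_t^2)=N(\tilde\theta,\tilde\sigma_t^2(a,b))$, the generalized game with primitives $(\theta,c,\Sigma,\lambda_0)$ is strategically identical to the baseline game with primitives $(\tilde\theta,\tilde c(a,b),\tilde\Sigma(a,b),\lambda_0)$, whose induced posterior variances are exactly $\tilde\eta_t^2(a,b)=(a/b)^2\eta_t^2$.

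First I would verify that the equilibrium objects transform as claimed. Redoing the induction of Lemma~\ref{lem: lom} for the generalized payoffs, agent $i$ plays the risky action at $t=1$ iff $\mu_{i1}\geq (c-b\lambda_0)/a=:\mu_1^*$, and for $t>1$ the indifference condition is $a\mu_t^*+b\,\Ex_{\theta\sim N(\mu_t^*,\eta_t^2)}[\lambda_{t-1}(\theta)]=c$. Setting $\tilde\mu_t^*:=(a/b)\mu_t^*$, the $t=1$ condition becomes $\tilde\mu_1^*=\tilde c(a,b)-\lambda_0$ and the $t>1$ condition becomes $\tilde\mu_t^*+\Ex_{\tilde\theta\sim N(\tilde\mu_t^*,\tilde\eta_t^2)}[\lambda_{t-1}]=\tilde c(a,b)$, i.e.\ exactly \eqref{eq: mu star} in the tilde variables; likewise $\lambda_t(\theta)=\Phi((\theta-\mu_t^*)/\eta_t)=\Phi((\tilde\theta-\tilde\mu_t^*)/\tilde\eta_t)$ is exactly \eqref{eq: lam LoM} in the tilde variables. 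Thus the full trajectory $(\mu_t^*,\lambda_t(\cdot))$ of the generalized model is the image, under the increasing homeomorphism $\theta\mapsto(a/b)\theta$ of $\mathbb{R}$, of the baseline trajectory with tilde primitives.

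It then remains to push each result through this dictionary. Because $\theta\mapsto\tilde\theta$ is an increasing bijection, a.e.-in-$\theta$ statements and positive-Lebesgue-measure-interval statements are preserved, and the definitions of $RD(\theta)$ and $NRD(\theta)$ given in this appendix are by construction the baseline $RD(\tilde\theta(a,b))$ and $NRD(\tilde\theta(a,b))$. Assumption~\ref{assumption: limit learning} transfers since $\tilde\eta_t^{-2}(a,b)=(b/a)^2\eta_t^{-2}$ diverges iff $\eta_t^{-2}$ does, and the conditions $\eta_t^{-2}=O(t^{2-\epsilon})$ and $\eta_t^{-2}=\Omega(t^{2+\epsilon})$ are invariant under multiplication by the fixed positive constant $(b/a)^2$; hence Theorem~\ref{thrm:char} and Corollary~\ref{cor:prop_1} hold after the stated substitutions. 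Proposition~\ref{prop: comp stat} and Proposition~\ref{prop:implementable} involve only monotonicity in $\Sigma$ and $\lambda_0$ together with the endpoints $c-\lambda_0$ and $c-(1/2)$, which map to $\tilde c(a,b)-\lambda_0$ and $\tilde c(a,b)-(1/2)$; and in Proposition~\ref{prop: dynamics} the only place $(c,\theta)$ enters beyond these endpoints is through $\overline\beta=1+|c-(1/2)-\theta|/2$, which becomes $\tilde\beta=1+|\tilde c(a,b)-(1/2)-\tilde\theta(a,b)|/2$. I do not anticipate a substantive obstacle: the only point requiring care is that the change of variables must be applied simultaneously and consistently to the state, the cost, and every signal variance, and that one confirms each theorem statement is ``coordinate-free,'' in the sense that $c,\theta,\sigma_t^2,\eta_t^2$ enter its hypotheses and conclusions only through combinations that this affine substitution respects --- which the computations above check.
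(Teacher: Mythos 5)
Your proposal is correct and follows essentially the same route as the paper: rescale the state, cost, and signals by the factor $a/b$ (equivalently, divide payoffs by $b$), verify by induction that the indifference condition and law of motion for $(\mu_t^*,\lambda_t)$ reduce exactly to \eqref{eq: mu star} and \eqref{eq: lam LoM} in the tilde variables, and then transfer each result through this dictionary. Your additional observation that the $O(t^{2-\epsilon})$ and $\Omega(t^{2+\epsilon})$ conditions are invariant under multiplication by the positive constant $(b/a)^2$ is precisely the content of the paper's follow-up corollary that Theorem~\ref{thrm:char} holds without any substitution.
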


\begin{proof}
    We show \eqref{eq: mu star} and \eqref{eq: lam LoM} by induction:
    \begin{gather*}
        \mu_t^*+\Phi\left(\frac{\mu_t^*-\mu_{t-1}^*}{\sqrt{\tilde\eta_{t-1}^2 (a, b)+\tilde\eta_t^2 (a, b)}}\right)=\tilde c (a, b) \\
        \lambda_t(\theta) = \mathbbm{1}\left\{ \tilde\theta(a, b) \geq \mu_t^* \right\} = \Phi\left(\frac{\tilde\theta (a, b) - \mu_t^*}{\tilde\eta_t (a, b)}\right)
    \end{gather*}
    At $t=1$, agent $i$'s posterior belief is given by $\theta|I_{i1}\sim N(x_{i1},\sigma_1^2)$, which implies $\tilde\theta (a, b) |I_{i1} \sim N(\tilde\mu_{i1},\tilde\sigma_1^2 (a, b))$ with $\tilde\mu_{i1} := \mathbb{E}_{i1}[\tilde\theta (a, b)] = a x_{i1}/ b =: \tilde x_{i1} $. Since $\lambda_0$ is common knowledge, she takes $a_{i1}=1$ if and only if $\tilde\mu_{i1}  \geq \tilde c-\lambda_0=\mu_1^*$. Since $\tilde\mu_{i1}|\tilde\theta (a, b)\sim N(\tilde\theta (a, b),\tilde\sigma_1^2 (a, b))$, $\lambda_1$ is given by 
    \begin{align*}
        \lambda_1=\mathbb{P}_\theta(\tilde\mu_{i1}\geq\mu_1^*)=1-\Phi\left(\frac{\mu_1^*-\tilde\theta (a, b)}{\tilde\sigma_1 (a, b)}\right)=\Phi\left(\frac{\tilde\theta (a, b)-\mu_1^*}{\tilde\sigma_1 (a, b)}\right).
    \end{align*}
    Hence, \eqref{eq: lam LoM} holds for $t=1$.
    
    Now, for $t>1$, suppose that $a_{j,t-1}=1$ if and only if $\tilde\mu_{j,t-1}\geq \mu_{t-1}^*$. By standard Gaussian-Gaussian Bayesian updating, we have $\tilde\theta (a, b)|I_{it} \sim N(\tilde\mu_{it},\tilde\eta_t^2(a, b))$, where
    \begin{align*}
        \tilde\mu_{it}&=\frac{\tilde\eta_{t-1}^{-2}(a, b)}{\tilde\eta_{t-1}^{-2}(a, b)+\tilde\sigma_t^{-2}(a, b)}\tilde\mu_{i,t-1}+\frac{\tilde\sigma_t^{-2}(a, b)}{\tilde\eta_{t-1}^{-2}(a, b)+\tilde\sigma_t^{-2}(a, b)} \tilde x_{it}=\tilde\eta_t^2(a, b)\sum_{s=1}^t \tilde\sigma_s^{-2}(a, b) \tilde x_{is} \\
        \tilde\eta_t^2 (a, b) &=\frac{\tilde\eta_{t-1}^2(a, b)}{1+\tilde\sigma_t^{-2}(a, b)\tilde\eta_{t-1}^2(a, b)}=\left(\sum_{s=1}^t \tilde\sigma_s^{-2}(a, b)\right)^{-1},
    \end{align*}
    where $\tilde x_{is} := a x_{is}/b$. Then, by an analogous argument as in the proof of Lemma~\ref{lem: lom}, \eqref{eq: mu star} and \eqref{eq: lam LoM} hold for $t$. Once we obtain \eqref{eq: mu star} and \eqref{eq: lam LoM}, the rest of the arguments carry through by replacing $c$, $\sigma_t^2$, $\eta_t^2$, and $\bar\beta = 1+|c-(1/2)-\theta|/2$ with $\tilde c (a, b)$, $\tilde\sigma_t^2 (a, b)$, $\tilde\eta_t^2 (a, b)$, and $\tilde{\beta} := 1+\big|\tilde c (a, b)-(1/2)-\tilde\theta (a, b)\big|/2$, respectively.
\end{proof}

In particular, since learning rates are unaffected by scaling, the tight connection between learning rates and limit play in Theorem~\ref{thrm:char} is robust to the extent of coordination:

\begin{corollary}
Theorem~\ref{thrm:char} holds exactly as stated in the main text, \textit{without} replacing the above expressions with their $a,b$-counterparts.
\end{corollary}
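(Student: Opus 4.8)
The plan is to combine the preceding Proposition with the elementary fact that the asymptotic classes $O(\cdot)$ and $\Omega(\cdot)$, as defined in the paper, are invariant under multiplication by a fixed positive constant. The preceding Proposition already delivers the analogue of Theorem~\ref{thrm:char} in the generalized-payoff environment: for every $\epsilon>0$ and $\lambda_0\in(0,1)$, $\tilde\eta_t^{-2}(a,b)=O(t^{2-\epsilon})$ implies $\lambda_\infty(\theta)=RD(\theta)$ a.e., and $\tilde\eta_t^{-2}(a,b)=\Omega(t^{2+\epsilon})$ implies $\lambda_\infty(\theta)=NRD(\theta)$ on a positive-measure interval, where now $RD(\theta)=\mathbbm{1}[\theta\geq c/a-(1/2)(b/a)]$ and $NRD=1-RD$. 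So the only thing left to do is rewrite the two hypotheses in terms of the original posterior precisions $\eta_t^{-2}$.

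By the definitions in Appendix~\ref{appendix:payoffextension}, $\tilde\eta_t^2(a,b)=(a/b)^2\eta_t^2$, hence $\tilde\eta_t^{-2}(a,b)=\kappa\,\eta_t^{-2}$ with $\kappa:=(b/a)^2$, and since $a,b\in(0,1)$ we have $\kappa\in(0,\infty)$, a constant not depending on $t$. Using the paper's definition of $O$: if there are $\overline C,T$ with $\eta_t^{-2}\leq\overline C\,t^{2-\epsilon}$ for all $t\geq T$, then $\tilde\eta_t^{-2}(a,b)\leq(\kappa\overline C)\,t^{2-\epsilon}$ for all $t\geq T$, and the converse follows by dividing by $\kappa$; likewise for the $\Omega$ bound one simply rescales the witnessing constant by $\kappa$. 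Therefore $\eta_t^{-2}=O(t^{2-\epsilon})\iff\tilde\eta_t^{-2}(a,b)=O(t^{2-\epsilon})$ and $\eta_t^{-2}=\Omega(t^{2+\epsilon})\iff\tilde\eta_t^{-2}(a,b)=\Omega(t^{2+\epsilon})$. Substituting these equivalences into the generalized theorem obtained from the Proposition yields exactly the statement of Theorem~\ref{thrm:char} with $\eta_t^{-2}$ in place of $\tilde\eta_t^{-2}(a,b)$, as claimed.

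I expect no real obstacle: all the work is in the preceding Proposition, and the Corollary is the observation that scaling the fundamental and the cost by $a$ and $b$ only rescales posterior variances by a constant factor, which the sub-/super-quadratic dichotomy cannot see. The one point to state carefully is that $a,b\in(0,1)$ makes $\kappa=(b/a)^2$ strictly positive and finite, so that the asymptotic equivalences above are genuinely two-sided and the hypotheses may be transferred back and forth without loss.
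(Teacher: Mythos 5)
Your proposal is correct and follows essentially the same route as the paper: the paper's proof is precisely the one-line observation that $\eta_t^{-2}=O(t^p)$ iff $\tilde\eta_t^{-2}(a,b)=O(t^p)$ (and likewise for $\Omega$), since the two sequences differ by the fixed positive factor $(b/a)^2$, and then invokes the preceding Proposition. You have simply spelled out the constant-rescaling step explicitly, which is a faithful elaboration rather than a different argument.
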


\begin{proof}
Note that $\eta_t^{-2}=O(t^p)$ if and only if $\tilde{\eta}_t^{-2}(a, b)=O(t^p)$, and that $\eta_t^{-2}=\Omega(t^q)$ if and only if $\tilde{\eta}_t^{-2}(a, b)=\Omega(t^q)$. Hence, Theorem~\ref{thrm:char} holds without replacing the above expressions with their $a,b$-counterparts.
\end{proof}

\subsection{Heterogeneous Unknown Initial Shocks}
\label{app:het-unknown}
\begin{lemma}
Consider the model and payoffs of the main text, with one generalization: $\theta_{i1}:=\theta+\lambda_{i0}(\theta)$, where $\{\lambda_{i0}(\cdot)\}_{i \in \mathcal{I}}$ are bounded, continuous, and increasing in $\theta$ and distributed independently from agents' information. For all $t\geq 2$, players use identical cutoff strategies.
\end{lemma}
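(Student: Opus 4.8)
The plan is to mirror the inductive argument in the proof of Lemma~\ref{lem: lom}, with the only new wrinkle confined to $t=1$. The key structural point is that the idiosyncratic shocks $\lambda_{i0}(\cdot)$ enter \emph{only} through time-$1$ payoffs: for every $t\ge 2$ the payoff to the risky action is $\theta+\lambda_{t-1}(\theta)$, where $\lambda_{t-1}$ is \emph{aggregate} play at $t-1$, a deterministic function of $\theta$ by the continuum law of large numbers. Thus the heterogeneity is ``integrated out'' after a single period, and the usual common-cutoff argument applies unchanged from $t=2$ onward.

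First I would treat $t=1$. Since $\theta\mid\mathcal F_{i1}\sim N(x_{i1},\sigma_1^2)$ and $\theta+\lambda_{i0}(\theta)$ is increasing in $\theta$, the risky-action payoff $x_{i1}+\mathbb{E}_{\theta\sim N(x_{i1},\sigma_1^2)}[\lambda_{i0}(\theta)]$ is continuous and strictly increasing in $x_{i1}$ (monotone likelihood ratio of the Gaussian location family) and diverges to $\pm\infty$ as $x_{i1}\to\pm\infty$ because $\lambda_{i0}$ is bounded; hence each agent plays the risky action iff $x_{i1}\ge\mu_{i1}^*$ for a possibly agent-dependent cutoff $\mu_{i1}^*=\mu^*(\lambda_{i0})$. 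Conditioning on $\theta$, an agent with shock $\ell$ plays risky with probability $\Phi\big((\theta-\mu^*(\ell))/\sigma_1\big)$, which is bounded, continuous, and nondecreasing in $\theta$; because $\{\lambda_{i0}(\cdot)\}$ is independent of the signals, integrating over the population distribution of shocks and invoking the exact law of large numbers of \cite{sun2009individual} yields that $\lambda_1(\theta)$ is a well-defined, bounded, continuous, nondecreasing function of $\theta$ \emph{alone}.

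Then I would run the induction for $t\ge2$ maintaining the hypothesis that $\lambda_{t-1}(\cdot)\colon\mathbb{R}\to[0,1]$ is bounded, continuous, and nondecreasing (base case $t-1=1$ just established). At time $t$ every agent has $\theta\mid\mathcal F_{it}\sim N(\mu_{it},\eta_t^2)$ and $\mu_{it}\mid\theta\sim N(\theta,\eta_t^2)$ for \emph{every} $i$ (again using independence of the shocks from signals), so the expected payoff to the risky action is
\[
g_t(\mu_{it}) := \mu_{it}+\mathbb{E}\big[\lambda_{t-1}(\mu_{it}+\eta_t Z)\big],\qquad Z\sim N(0,1),
\]
a function of $\mu_{it}$ that is \emph{identical across agents}; the inductive hypothesis makes $g_t$ continuous and strictly increasing with $g_t(m)\to\pm\infty$ as $m\to\pm\infty$, so $g_t(m)=c$ has a unique root $\mu_t^*$, and agent $i$ plays risky at $t$ iff $\mu_{it}\ge\mu_t^*$. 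Finally $\lambda_t(\theta)=\mathbb{P}_\theta(\mu_{it}\ge\mu_t^*)=\Phi\big((\theta-\mu_t^*)/\eta_t\big)$ is again bounded, continuous, and increasing in $\theta$, closing the induction. The only genuinely delicate step is the $t=1$ analysis: one must verify that unknown, heterogeneous shocks nonetheless produce a deterministic and well-behaved aggregate play function $\lambda_1(\cdot)$, since everything downstream rests on treating $\theta+\lambda_1(\theta)$ as an ordinary bounded increasing payoff in the fundamental; measurability of $\ell\mapsto\mu^*(\ell)$ and dominated convergence for continuity of $\lambda_1$ are routine given boundedness and monotonicity.
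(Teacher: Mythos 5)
Your proposal is correct and follows essentially the same route as the paper: establish that the time-$1$ heterogeneous cutoffs $\mu_{i1}^*$, being independent of the signals, average out (via the exact law of large numbers) to a deterministic, increasing aggregate play function $\lambda_1(\theta)$, after which the common-cutoff induction of Lemma~\ref{lem: lom} applies verbatim from $t=2$ onward. Your treatment is somewhat more explicit than the paper's (divergence of the time-$1$ payoff in $x_{i1}$, continuity of $\lambda_1$ via dominated convergence), but the decomposition and the key observation are identical.
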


\begin{proof}
We show this by induction. The inductive step is exactly the same as the inductive step of Lemma \ref{lem: lom}. For the base case of $t=2$, observe that $a_{i2}=1$ iff 
\[
\mathbb{E}_{i2}[\theta+\lambda_1(\theta)]\geq c
\]
where $i$'s belief is $N(\mu_{i2},\eta_2^2)$. Since $\mu\geq \mu'$ implies $N(\mu,\eta_2^2) \geq_{FOSD} N(\mu',\eta_2^2)$, it suffices to show that $\lambda_1(\theta)$ is increasing. Note that at $t=1$, $a_{i1}=1$ iff 
\[
\mathbb{E}_{i1}[\theta+\lambda_{i0}(\theta)]\geq c \iff \mu_{i1}\geq \mu_{i1}^*
\]
by an analogous argument as above and in Lemma \ref{lem: lom}, and since $\lambda_{i0}(\theta)$ is increasing. Since $\mu_{i1}$ and $\mu_{i1}^*$ are independent random variables on the probability space constructed by \cite*{sun2009individual},
\[
\lambda_1(\theta)=\mathbb{P}\Big(\mu_{i1}\geq \mu_{i1}^*\Big)=\mathbb{E}_{\mu_{i1}^*} \Phi\left(\frac{\theta-\mu_{i1}^*}{\eta_t}\right)
\]
which is increasing in $\theta$, as desired.
\end{proof}

Here, we relax common knowledge about $\lambda_0$ in two ways. First, we allow for the initial shock to depend on the underlying fundamental, such that no player knows the value of the initial shock (and therefore must form beliefs about it). Second, we allow for the initial shock to be heterogeneous across players $i$. While this induces heterogeneous cutoffs at $t=1$, this Lemma shows that for all $t\geq 2$, players revert to using identical cutoff strategies.

\end{document}